\def\eqref#1{equation~\ref{#1}}
\def\1{\bm{1}}
\DeclareMathAlphabet{\mathsfit}{\encodingdefault}{\sfdefault}{m}{sl}
\SetMathAlphabet{\mathsfit}{bold}{\encodingdefault}{\sfdefault}{bx}{n}
\newtheorem{proof}{Proof}
\newtheorem{theorem}{Theorem}
\newtheorem{definition}{Definition}
\newtheorem{proposition}{Proposition}
\newtheorem{corollary}{Corollary}
\newtheorem{lemma}{Lemma}
\newcommand{\revision}[1]{{\color{black} #1}} %[AS: #1]
\newcommand{\mc}{\mathcal}
\newcommand{\cF}{\mc{F}}
\newcommand{\cA}{{\mathcal A}}
\newcommand{\cB}{{\mathcal B}}
\newcommand{\cC}{{\mathcal C}}
\newcommand{\cG}{{\mathcal G}}
\newcommand{\cP}{{\mathcal P}} 
\newcommand{\cR}{{\mathcal R}} 
\newcommand{\cS}{{\mathcal S}} 
\newcommand{\cM}{{\mathcal M}}
\newcommand{\cX}{{\mathcal X}}
\newcommand{\cZ}{{\mathcal Z}}
\newcommand{\bma}{{\bm a}}
\newcommand{\bmv}{{\bm v}}
\newcommand{\bmx}{{\bm x}}
\newcommand{\bmQ}{{\bm Q}}
\newcommand{\cN}{\mathcal N}
\providecommand{\customgenericname}{}
\newcommand{\newcustomtheorem}[2]{%
  \newenvironment{#1}[1]
  {%
  \renewcommand\customgenericname{#2}%
  \renewcommand\theinnercustomgeneric{##1}%
  \innercustomgeneric
  }
  {\endinnercustomgeneric}
}
\title{Stochastic Games with Minimally Bounded Action Costs}
\author{
David Mguni\textsuperscript{1}
%Chen Chen\textsuperscript{2}, 
%Daniel Graves\textsuperscript{2}, 
%\\
%\ \textbf{Dong Li\textsuperscript{2}, Wulong Liu\textsuperscript{2}, 
%Yaodong Yang\textsuperscript{2}}
\\
\textsuperscript{1}Queen Mary University, London\\
% \textsuperscript{3}College of Intelligence and Computing, Tianjin University\\
% \texttt{\{changmin.yu.19; n.burgess\}@ucl.ac.uk};\\ \texttt{\{david.mguni; lidong106; w.j\}@huawei.com}; \texttt{jianye.hao@tju.edu.cn}
% \textsuperscript{5}Institute of Neurology, UCL, London, UK\\
% \small{\texttt{changmin.yu.19@ucl.ac.uk}; \texttt{behrens@fmrib.ox.ac.uk}; \texttt{n.burgess@ucl.ac.uk}} 
} 
\begin{document}

\maketitle

\begin{abstract}
In many
% \blfootnote{$^\dag$ <david.mguni@hotmail.com>. } 
multi-player interactions, 
% within  economics, finance and computer science 
players incur strictly positive costs each time they execute actions e.g. `menu costs' or transaction costs in financial systems. 
% Similarly in physical settings, actions may produce wear and tear that accumulates after many actions.
Since acting at each available opportunity would accumulate prohibitively large costs, the resulting decision problem is one in which players must make strategic decisions about \textit{when} to execute actions in addition to their choice of action. 
% In the single player setting, behaving optimally necessitates using a form of policy known as \textit{impulse controls} which execute actions selectively. 
% Despite its fundamental relevance, the task of developing a learning framework for strategic settings in which players face strictly positive action costs has as yet, not been addressed. 
This paper analyses a discrete-time stochastic game (SG) in which players face minimally bounded positive costs for each action and influence the system using impulse controls. 
% To our knowledge, this is the first discrete-time treatment of games of this kind.
% At the core of our framework is a nested structure that combines MARL and a form of policy known as \textit{impulse control} which learns to maximise objectives when actions incur costs. 
% 
% 
We prove SGs of two-sided impulse control have a unique value and characterise the saddle point equilibrium in which the players execute actions at strategically chosen times in accordance with Markovian strategies. We prove the game respects a dynamic programming principle and that the Markov perfect equilibrium can be computed as a limit point of a sequence of Bellman operations. We then introduce a new Q-learning variant which we show converges almost surely to the value of the game enabling solutions to be extracted in unknown settings.   
% 
% 
% We introduce a MARL version of Q learning  which can incorporate MARL methods, converges to policies that optimally select when to perform actions and their optimal magnitudes.
% We firstly tackle the case in which the player faces a cost for each action and develop an optimal learning framework for this task.  
Lastly, we extend our results to settings with
% cover the case in which each player 
% can perform at most $k<\infty$ actions and more generally, faces a 
budgetory constraints. \end{abstract}

\section{Introduction}

The goal of successfully modelling financial behaviour is the cornerstone of financial theory. Although this ambition holds the key to greater mastery over financial systems, progress towards it is hindered by the fact that various economic intricacies remain neglected in many of today's financial models \citep{waggoner2012confronting,hansen2001acknowledging}. Model misspecifications of this kind can result in vast disparities between the predictions of financial models and the financial phenomena they attempt to model \citep{simons1997model,krugman2009did}. Correctly modelling economic behaviour requires accurately specifying the costs and rewards associated with economic activity in a given system.
% In financial settings, agents typically incur costs for executing their investment decisions. 

% Modelling settings in which agents compete requires capturing strategic behaviour since agents both anticipate and respond to other agents in the system \citep{amir2003stochastic, neyman2003stochastic}. 
Stochastic games model strategic interactions between two competing agents that take place over time \citep{shapley1953stochastic,solan2015stochastic}. 
% The solutions of the game or \textit{equilibria} represent strategic configurations in which no agent can gainfully deviate to another strategy given the other agent's chosen strategy. These stable configurations are predictions for strategic interactions that take place over some time horizon. 
They are a standard modelling framework for competitive settings within economics and finance  \citep{prasad2004competitive,browne2000stochastic}. Nevertheless, classical stochastic games do not naturally induce any restrictions on the cost magnitudes for taking actions. Therefore, in these models, costs can be made arbitrarily small given specific choices of action. Consequently, agents can freely execute actions at each available opportunity without accumulating large costs over time. In many financial settings, despite agents having the ability to exercise control over the magnitude of their investments, the cost of agents' investment decisions cannot be made arbitrarily small~\citep{cuypers2021transaction}. For example, the amount of stock that an agent can purchase may be required to be at least some fixed minimal amount e.g. an individual share unit or, agents may be required to pay transaction costs i.e., a payment to a broker to perform their investment decisions on their behalf~\citep{allen1991transaction,goldstein2009brokerage}.\footnote{Fixed costs of this kind are also prevalent in microeconomic settings where firms that make adjustments to their business strategies can be subject to adjustment costs or \textit{menu costs}~\citep{midrigan2011menu}.} 
% Similarly, in physical systems with computerised machinery such as robotics and autonomous driving, there is an inherent cost executing actions in the environment. In the case of autonomous driving, one such cost is fuel consumption but more generally machine systems operating in physical environments are subject to wear and tear which is as a result of performing actions over time.
  % 
In these settings, investment strategies that prescribe performing actions at every opportunity can be vastly suboptimal since acting in this way can accumulate massive costs~\citep{azimzadeh2017impulse}. Consequently, in settings where agents face minimally bounded costs for their actions, models derived from classical stochastic games may fail to generate predictions that accord with observed outcomes. 

To resolve this issue in single-agent settings, a form of policy known as \textit{impulse control} has emerged within optimal control theory~\citep{oksendal2005applied,mguni2022timing,davis2010impulse}. Impulse control models prescribe a set of optimal points to perform actions in addition to the sequence of optimal best-response actions \citep{azimzadeh2017impulse,jeanblanc1993impulse}. Currently however, the theory of stochastic games in discrete-time does not include a treatment that handles impulse control. As such, devising algorithmic protocols for computing equilibrium strategies in stochastic games with minimally bounded action costs is largely unfeasible.  
To address this issue, in this paper, we tackle this problem by studying a stochastic game in which the players face minimally bounded costs for each action. Our treatment includes an algorithmic learning framework for computing the minimax equilibrium strategies. To our knowledge, this treatment is the first to tackle learning in dynamic strategic environments of this kind.

\textbf{Example: Advertising Investment Duopoly Costs~\citep{deal1979optimizing,prasad2004competitive}.} To demonstrate the ideas within a concrete setting, we consider a well-known problem within economics, namely the problem of how a firm should invest in advertising in a duopoly setting in which it competes for market share. In this setting, each firm seeks to maximise its long-term profit and to this end, performs investment adjustments.  
% Consider a duopoly in which each firm modifies the advertising investment positions continuously. 
%Therefore, let $S_i (t)\equiv S_i (t,\omega )\in \mathbb{R}\times \Omega ,i\in \{1,2\}$ be a stochastic process on\\ $(\Omega,\mathcal{F},(\mathcal{F}_t )_{t\geq t_0 },\mathbb{P}_0 )$ - a filtered probability space which satisfies the usual conditions of right continuity, augmentation by $\mathbb{P}$-negligible sets and carrying a standard $1$-dimensional $\mathcal{F}_t-$measurable Brownian motion $B$. 
At time $t=0,1,\ldots$ each Firm $i\in\{1,2\}$ has a revenue stream $S^i_t=S^i(\omega): \mathbb{R}_{>0}\times \Omega\to \mathbb{R} $ which is a stochastic process. At any point, Firm $i$ makes costly investments of size $u_i\in \mathcal{U}_i$ where $\mathcal{U}_i$ denotes the set of admissible investments for Firm $i$. Denote by $M\in \mathbb{R}_{>0}$ the potential market size and by $b^i\in ]0,1]$ the response rate to advertising for Firm $i$, then the revenue stream for Firm $i$ is given by the following expression:
\begin{align}
S^i_{t+1}=S^i_t+b^i u^i_t\left[M- S^i_t- S^j_t\right] M ^{-1}  - r^iS^i_t
+\sigma^i\left(B_{t+1}-B_t\right),\;\label{marketshareeqn1}
\end{align}
where $i\neq j\in \{1,2\}$ and  $s^i\equiv S^i_0 \in \mathbb{R}_{>0}$ are the initial sales for Firm $i$. The constants $r^i,\sigma^i\in \mathbb{R}$ represent the rate at which Firm $i$ abstracts market share and the volatility of the sales process for Firm $i$ respectively. The term $B_t$ is Brownian motion which introduces randomness to the system. 
% The cumulative profit for each Firm $i$ is denoted by $\Pi_i$. 
Each Firm $i$ seeks to maximise its cumulative profit which consists of its revenue due to sales $h^i:\mathbb{R}\to \mathbb{R}$ minus its running advertising costs $c_i:\mathbb{R}_{>0}\times \mathbb{R}\to \mathbb{R}$. The profit function for Firm $i$, $\Pi^i$ is given by the following expression:
\begin{align} 
\Pi^i (s^i;u^i,u^j )=\mathbb{E}\left[\sum_{t\geq0}\gamma^t\left(h^i(S^i_t)-\left[c^i(u^i_t)-c^j(u^j_t)\right]\right)\right]. 
\label{profitfunctioncontgame}
\end{align}
Since the market is duopolistic, the payoff structure satisfies $\Pi_1 +\Pi_2 =0$.
% In light of (\ref{zero_sum_payoff_condition_case_1}), we denote by $\Pi(t,s;u_1,u_2 ):=\Pi_1 (t,s;u_1,u_2 )=-\Pi_2 (t,s;u_1,u_2 ),\\ \forall (t,s)\in\mathbb{R}_{>0}\times \mathbb{R}_{>0},\; \forall u_1\in\mathcal{U}_1, \forall u_2\in\mathcal{U}_2$.
% We can now write the dynamic (zero-sum) duopoly problem as:\\
% Find $\phi$ and $(\hat{u}_1,\hat{u}_2 )\in \mathcal{U}_1\times \mathcal{U}_2$ s.th.
% \begin{align}
% \phi(t,s )=\sup_{u_1 \in \mathcal{U}_1}\left(\inf_{u_2 \in \mathcal{U}_2}\Pi^{u_1,u_2} (t,s) \right)=\inf_{u_1 \in \mathcal{U}_1}\left(\sup_{u_2 \in \mathcal{U}_2}\Pi^{u_1,u_2} (t,s) \right)=\Pi^{\hat{u}_1,\hat{u}_2} (t,s),\label{duopolyproblemcontgame}&
% \\\forall (t,s)\in \mathbb{R}_{>0}\times \mathbb{R}_{>0},&
% \end{align}
% where we have used the shorthand $\Pi^{(u_1,u_2 )} (t,s )\equiv \Pi(t,s;u_1,u_2 )$ for any $(t,s)\in\mathbb{R}_{>0}\times\mathbb{R}_{>0}$ and for any $(u_1,u_2)\in\mathcal{U}_1\times\mathcal{U}_2$.
% 
Therefore the firms engage in a zero-sum stochastic game.  Models of this kind have been used to analyse the strategic interaction within advertising duopoly using a game-theoretic framework \citep{prasad2004competitive}. Using this framework, the behaviour of the firms in the advertising problem can be characterised by computing the best-response strategies within the stochastic game. Models such as the {Vidale-Wolfe model of advertising} \cite{sethi1973optimal} have been studied to analyse settings in which both firms make continuous modifications to their investment positions models. These models do not account for the fixed minimal costs that firms incur when executing investment decisions. We refer the reader to \citep{erickson1995differential} for exhaustive discussions on duopoly advertising investment models and to \citep{prasad2004competitive} for a stochastic differential game approach. 

In this paper, we study a stochastic game that models the behaviour of firms (and other economic agents) facing minimally-bounded fixed costs for each investment. This feature prohibits firms from committing to strategies that involve performing (possibly infinitesimal) adjustments over the horizon of the problem contrary to current strategic investment models. 
% In particular, our discrete-time stochastic game models the practical feature that each investment incurs at least some fixed minimal cost. 
% Moreover, as we show the discrete-time analysis gives rise to a dynamic programming algorithm that can be implemented to compute the equilibrium minimax strategy of each firm.
% 
To handle scenarios in which the game model may not be known in advance, we develop a reinforcement learning framework \citep{sutton2018reinforcement, marl-book} that enables the equilibrium strategies to be computed using repeated interactions with the environment. In particular, we modify the standard reinforcement learning paradigm in which agents solely learn to optimally select correct actions at each time-step  \citep{szepesvari2022algorithms} to handle strategic settings in which the agents must also learn to determine when to perform actions. 
% In unknown settings, the player cannot immediately exploit any topological structure of the action set (if any exists).
% To enable learning equilibrium strategies in unknown stochastic games with minimally bounded action costs, we devise an RL procedure that enables impulse control best-response strategies to be learned by players that repeatedly play the game. 
Our learning protocol enables the equilibrium strategies for the game to be computed using a value-iterative procedure. 
% 
% 
% At each time step, each player firstly makes a decision whether to act or not leading to a binary decision space $\{0,1\}$ (we later show that this is determined by evaluating an easy-to-evaluate criterion which has the value function as its input). The second decision part determines the best action to take.   This generates a subdivision of the state space into two regions (per player); one in which each player performs actions and another in which it does not act at all. 
% This is extremely useful since the player quickly determines the set of states to \textit{not take} actions while performing actions only  at the subset of states where actions are to be executed.  
    % 
 We then establish results that ensure convergence of a Q-learning variant to the minimax equilibrium of the game.  To do this, we prove a series of results namely:

\noindent\textbf{i)} We prove the existence of a saddle point equilibrium of a (discrete-time) Markov game in which players use impulse control and, therefore strategically select the set of points to perform actions. 

\noindent\textbf{ii)} We then establish a dynamic programming principle (DPP) for stochastic games of two-sided impulse control and show that the minimax (saddle point) equilibrium can be computed as a limit point of a sequence of Bellman operations (Theorem \ref{theorem:joint-sol}) which lays the foundation for a learning methodology for computing minimax equilibrium solutions to SGs of two-sided impulse control.
\newline\textbf{iii)} We extend result ii) to a new variant of Q learning which enables the game to be solved even when the game is unknown using a multi-agent reinforcement learning method (Theorem \ref{theorem:q_learning}).    
% \newline\textbf{iv)} We characterise the optimal conditions for each player to perform an action which is shown to be a simple `obstacle condition' on the value of the game (Prop. \ref{prop:switching_times}). Using this, each player can determine whether or not it should act and when acting, to compute their best-response action. 
% For all other states, the player need not learn the optimal action.
\newline\textbf{iv)} We then extend the result i) to (linear) function approximators enabling the value function to be parameterised (Theorem \ref{primal_convergence_theorem}). \newline\textbf{v)} In Sec. \ref{sec:budget}, we extend our analysis to include budgetary constraints so that each action draws from a fixed budget which each player must stay within. Analogous to the development of ii), we establish another DPP from which we derive a Q-learning variant for tackling impulse control with budgetary constraints (Theorem \ref{thm:maxmin_budget}). A particular case of a budget constraint is when the number of actions each player can take over the horizon is capped. 
% by some number $k<\infty$. 

% In Sec. \ref{sec:budget} our framework handles settings in which each player has a cap the total number of actions it is allowed to execute and more generally, generic budgetary constraints. 
% Algorithm 1 is able to accommodate any MARL base algorithm unlike various RL methods designed to handle budgetary constraints. 

% TBA

% \begin{itemize}
%     \item Why impulse control is needed - add stuff about continuous time
%     \item Complexity discussion (why using standard RL won't work) + rapidity of learning
%     \item Budget
%     % \item Relation to optimal stopping (ours is a general case)
%     \item Advantages of the framework
%     \item Plug \& play
%     % \item \textbf{Optimal Stopping is a degenerate case}
% \end{itemize}

% \textbf{Key results}

% \begin{itemize}
%     \item Impulse control in discrete time (dynamic programming result) 
%     \item Function approximation result
%     \item Charaterisation of intervention time
%     \item Budgetory solution via state representation a la Aivar 
%     \item Experiments to validate framework (baselines using standard RL methods)
% \end{itemize}

\section{Formulation}
\textbf{Stochastic Games.} In a (two-player) stochastic game (SG) \citep{shapley1953stochastic}, a pair of players sequentially perform actions to maximise their individual expected returns. The underlying problem is formalised by a tuple $\left\langle \cN, \mathcal{S},\cA,\cB,P,R,\gamma\right\rangle$ where $\mathcal{S}\subset \mathbb{R}^p$ is the set of states, $\cA\subset \mathbb{R}^k$ and $\cB\subset \mathbb{R}^k$ are the set of actions for Player 1 and Player 2 respectively, $P:\mathcal{S} \times \cA\times \cB \times \mathcal{S} \rightarrow [0, 1]$ is a transition probability function describing the system's dynamics, $R: \mathcal{S} \times \cA\times \cB \rightarrow \mathbb{R}$ is the reward function and the discount factor $\gamma \in [0, 1)$ specifies the degree to which each player's rewards are discounted over time. At time $t\in 0,1,\ldots, $ the system is in state $s_{t} \in \mathcal{S}$ and Player 1 and Player 2 must
choose an action,  $a_t \in \cA$  and $b_t \in \cB$ respectively which transitions the system to a new state 
$s_{t+1} \sim P(\cdot|s_t, a_t, b_t)$ and produces a reward $R(s_t, a_t, b_t)$. The goal of Player $1$ is to
determine a strategy $\sigma^1\in\Sigma^1$ that maximises its expected return given by the value function: $
v^{\sigma^1,\sigma^2}(s)=\mathbb{E}[\sum_{t=0}^\infty \gamma^tR(s_t,a_t,b_t)|a_t\sim\sigma^1(\cdot|s_t),b_t\sim\sigma^2(\cdot|s_t), s_0=s]$ whereas the goal of Player 2 is to minimise the same quantity using a strategy $\sigma^2\in \Sigma^2$ where $\Sigma^i$ is the policy set for player $i\in\{1,2\}$.  The action value function is given by $\bm{Q}(s,a,b)=\mathbb{E}[\sum_{t=0}^\infty R(s_t,a_t,b_t)|a_0=a,b_0=b, s_0=s] $. 
% We define by $v_1(s,b):=\underset{a'\in\cA}{\max}\bm{Q}(s,a',b)$ and $v_2(s,a):=\underset{a'\in\cB}{\max}\bm{Q}(s,a,a')$ for any $s\in \cS$ and for any $b\in\cB$ and for any $a\in\cA$ respectively. 
A Markov policy $\sigma: \mathcal{S} \times \mathcal{A} \rightarrow [0,1]$ for  Player 1  is a probability distribution over state-action pairs where $\sigma(a|s)$ represents the probability of selecting action $a\in\mathcal{A}$ in state $s\in\mathcal{S}$. We construct a Markov strategy for Player 2 analogously.  In this paper, we restrict our attention to Markov policies for both players. 

% In what follows, with a slight abuse of notation we use the following shorthands $F(s)\equiv F(s,a\equiv 0,b\equiv 0)$ and, $F(s,a)\equiv F(s,a,b\equiv 0)$ and analogously $F(s,a)\equiv F(s,b,a\equiv 0)$ where $F\in\{Q,R\}$ and similarly $P(s';s)\equiv P(s';a\equiv 0,b\equiv 0)$, $P(s';a,s)\equiv P(s';a,b\equiv 0)$ and  $P(s';a,s)\equiv P(s';b,a\equiv 0,s)$ for any $s,s'\in\cS$ and $\forall a\in\cA,\forall b\in\cB$.

We consider a setting in which each player faces at least some minimal cost for each action it performs. Systems of this kind widely occur in economic and financial systems. Common examples are duopolies in which firms face investment adjustment costs. With this, the objective that Player $1$ seeks to maximise is given by:
\begin{align}
v^{\sigma^1,\sigma^2}(s)=\mathbb{E}\left[\sum_{t=0}^\infty \gamma^t\left\{\cR(s_t, a_t,b_t)- \cC(s_t, a_t,b_t)\right\}\Big|s_0=s\right], \label{impulse_objective}
% \Big|a_t\sim\pi(\cdot|s_t)\right]    
\end{align}
and the goal of Player $2$ is to maximise $-v$, where for any state $s\in\cS$ and any action $a\in\cA,b\in\cB$, the function $\cR$ is given by $
% \cR(s_t,a_t,b_t):=\sum_{i,j=0}^2R(s_t,\times_{i}a^i_t,\times_{j}a^j_t\equiv0)\prod_{i\neq j}\boldsymbol{1}_{\mathcal{A}^i}(a^i_t)\cdot (1-\boldsymbol{1}_{\mathcal{A}^{j}}(a^j_t)), \sum_{i,j=0}^2
\cR(s,a,b):=R(s,a\boldsymbol{1}_{\cA}(a),b\boldsymbol{1}_{\cB}(b)), $
where $\boldsymbol{1}_{\mathcal{Y}}(y)$ is the indicator function which is $1$ whenever $y\in\mathcal{Y}$ and $0$ otherwise. For example, if at time $t$ only Player 1 takes an action $\hat{a}^1_t\in\cA_i$ while player $2$ takes no action, the reward is $R(s_t,\hat{a}_t,0)$.
% $\cR(s, a^1,b)=R(s,a^1,0)\mathbf{1}_{a^1\in\cA}(1-\mathbf{1}_{b\in\cB})+R(s,0,b)(1-\mathbf{1}_{a^1\in\cA})\mathbf{1}_{b\in\cB}+R(s,0,0)(1-\mathbf{1}_{a^1\in\cA})(1-\mathbf{1}_{b\in\cB})+R(s,a^1,b)\mathbf{1}_{a^1\in\cA}\mathbf{1}_{b\in\cB}$ where $\mathbf{1}_{a^i\in\mathcal{A}^i}$ is the indicator function which is $1$ when ${a^i\in\mathcal{A}^i}$ and $0$ otherwise
The cost function $\cC:\cS\times\cA\times\cB\to \mathbb{R}$ is given by $
\cC(s,a,b):=
			c(s,a)\mathbf{1}_{a\in\cA}+c(s,b)\mathbf{1}_{b\in\cB}$
			where $c:\cS\times\left(\cA\cup\cB\right)\to\mathbb{R}_{>0}$ is a minimally bounded       (cost) function\footnote{I.e. a function which is bounded below by a positive constant.} that introduces a cost each time the player performs an action. Examples of the cost function is a quasi-linear function of the form $c(s_t,y_t)=\kappa+f(y_t)$ for any $y_t\in\cA\cup\cB$, where $f:\cA\cup\cB\to\mathbb{R}_{>0}$ and $\kappa$ is a real-valued positive constant. In this game, at each time step $t$, the system transitions according to the probability kernel $\cP$ which is given by $
% \cP(s_{t+1},a_t,b_t,s_t):=\sum_{i,j=0}^2P(s_{t+1},\times_{i}a^i_t,\times_{j}a^j_t\equiv 0,s_t)\prod_{i\neq j}\boldsymbol{1}_{\mathcal{A}^i}(a^i_t)\cdot (1-\boldsymbol{1}_{\mathcal{A}^{j}}(a^j_t))
\cP(s_{t+1},a_t,b_t,s_t):=P(s_{t+1},a_t\boldsymbol{1}_{\mathcal{A}}(a_t),b_t\boldsymbol{1}_{\mathcal{B}}(b_t),s_t)$.
Since acting at each time step would incur prohibitively high costs, the players must be selective about when to act. Therefore, in this setting, the player's problem is augmented to deciding both a best-response strategy for their actions and when to apply their chosen action. The question of when each player ought to act is one of the key questions addressed in this paper.

The function $c$ plays an important role in modulating the willingness of each player to act. Larger costs for each action mean the players must be more selective about executing actions, reserving their actions to a smaller number of states that induce larger increases in their expected return (this is proven in Theorem \ref{thrm:minimax-exist}).  In the limit $c\to 0$ we return to a classic SG framework where each player is willing to execute an action at each state.  In Sec. \ref{sec:budget}, as an alternative to using the function $c$ as a way of controlling the number of actions executed by the player. There we discuss imposing a budget constraint on the number of actions each player can perform over a given horizon.

\begin{definition}
Let us define by ${\rm val}^+[v]:=\min\limits_{\sigma^2\in\Sigma^2}\max\limits_{\sigma^1\in\Sigma^1}v^{\sigma^1,\sigma^2}$ the \textit{upper value function} and by ${\rm val}^-[v]:=\underset{\sigma^1\in\Sigma^1}{\max}\underset{\sigma^2\in\Sigma^1}{\min}\;v^{\sigma^1,\sigma^2}$, the \textit{lower value function}. The upper (lower) value function represents the minimum payoff that Player $1$ (Player $2$) can guarantee itself irrespective of the actions of the opponent.  The \textbf{value} of the game exists if we can commute the $\max$ and $\min$ operators:
\begin{align}
  {\rm val}^-[v]&=\underset{\sigma^1\in\Sigma^1}{\max}\min_{\sigma^2\in\Sigma^2}v^{\sigma^1,\sigma^2}
    =\min_{\sigma^2\in\Sigma^2}\underset{\sigma^1\in\Sigma^1}{\max}\;v^{\sigma^1,\sigma^2}={\rm val}^+[v]. \label{value_equation}
\end{align}
\end{definition}
We denote the value by $v:={\rm val}^+[v]={\rm val}^-[v]$ and denote by $(\hat{\sigma}^1,\hat{\sigma}^2)\in\Sigma^1   \times\Sigma^2$ the pair that satisfies the equality \eqref{value_equation}. 

In general, the functions ${\rm val}^+[v]$ and ${\rm val}^-[v]$ may not coincide. The value, should it exist, is the minimum payoff each player can guarantee themselves under the equilibrium strategy. If a value exists in Markov strategies, it constitutes a Markov perfect equilibrium \citep{deng2021complexity} of the game in which neither player can improve their payoff by playing some other control --- an analogous concept to a Nash equilibrium for the case of two-player zero-sum games. Thus the central task to establish an equilibrium involves unambiguously assigning a value to the game, that is proving the equality in \eqref{value_equation}.

\section{Existence of a Value and Saddle Point Equilibrium} \label{sec:analysis}

% A key aspect of Algorithm 1 is the presence of two RL processes that make decisions in a sequential order. In order to determine when to act the policy $\mathfrak{g}^i$ must learn the states to allow the policy $\sigma^i$ to perform an action which the policy $\sigma^i$ must learn to select optimal actions whenever it is allowed to execute an action. 

In this section, we prove the existence of a minimax equilibrium value of the game $\cG$. We then characterise the equilibrium conditions for executing actions and show that such times are characterised by an `obstacle condition' which can be evaluated online. We perform some further studies on the properties of the equilibrium strategies. We then show that the dynamic programming principle holds and prove the convergence of a dynamic programming method to the minimax equilibrium of the game.  We then extend the result to allow for (linear) function approximators.
% We now study the game $\cG$ and prove the existence of a stable equilibrium solution.
Our first main result is the existence of a unique minimiax equilibrium of the game $\cG$ in Markov strategies (Theorem \ref{thrm:minimax-exist}). Our second key result is that the game respects a dynamic programming principle and that the value of the game can be obtained by computing the sequence of Bellman operations acting on some function. With this, each player enacts a minimax best response policy. 
 
 The results are achieved through several steps: Theorem \ref{theorem:joint-sol} proves that Bellman dynamic programming principle holds and that the Bellman operator for the game is a contraction mapping. This paves the way for proving the convergence of the sequence of repeated application of Bellman operators to the value of the game. Prop. \ref{prop:mpe_result} then proves that the equilibrium strategies derived in Theorem \ref{theorem:joint-sol} consists of Markov strategies and thus the equilibrium concept corresponds to a Markov perfect equilibrium. Thereafter we prove that the value of the game is unique. Thereafter, we characterise the conditions under which each player takes an action when executing its best-response minimax equilibrium strategy. Prop. \ref{prop:switching_times} characterises the conditions under which a rational player should execute an action in response to its adversary and does so in terms of a condition on the action-value function that can be evaluated online. The results of this section lay the foundation for learning methods studied in Sec. \ref{sec:learning} where we consider settings in the game is unknown to the players who seek to determine their equilibrium strategies through repeated interaction. The proof of the results in this section are deferred to the Appendix.
 
\noindent\textbf{Notation \& Assumptions}
 All results are built under Assumptions A.1 - A.5 which are standard in RL and stochastic approximation theory \citep{bertsekas2012approximate}.

% \subsection{Notation \& Assumptions}\label{sec:notation_appendix}
We assume that $\mathcal{S}$ is defined on a probability space $(\Omega,\mathcal{F},\mathbb{P})$ and any $s\in\mathcal{S}$ is measurable with respect
to the Borel $\sigma$-algebra associated with $\mathbb{R}^p$. We denote the $\sigma$-algebra of events generated by $\{s_t\}_{t\geq 0}$
by $\mathcal{F}_t\subset \mathcal{F}$. In what follows, we denote by $\left( \mathcal{V},\|\|\right)$ any finite normed vector space and by $\mathcal{H}$ the set of all measurable functions. In what follows, we employ the shorthand $\cP^{\bma}_{s's}=:\sum_{s'\in\cS}P(s';\bma,s)$ for any $s\in \cS$ and for any $\bma\in\bm{\cA}$.
% Where it will not cause confusion (and with a minor abuse of notation) for a given function $h$ we use the shorthand $h^{(\sigma^i,\sigma^{-i})}(s)= h(s,\sigma^i,\sigma^{-i})\equiv\mathbb{E}_{\sigma^i,\sigma^{-i}}[h(s,a^i,a^{-i})]$.  
The results of the paper are built under the following assumptions which are standard within stochastic approximation methods:

\textbf{A.1.}
The stochastic process governing the system dynamics is ergodic, that is the process is stationary and every invariant random variable of $\{s_t\}_{t\geq 0}$ is equal to a constant with probability $1$.

\textbf{A.2.}
The function $R$ is in $L_2$.

\textbf{A.3.}
For any positive scalar $c$, there exists a scalar $\kappa_c$ such that for all $s\in\mathcal{S}$ and for any $t\in\mathbb{N}$ we have: $
    \mathbb{E}\left[1+\|s_t\|^c|s_0=s\right]\leq \kappa_c(1+\|s\|^c)$.

\textbf{A.4.}
There exists scalars $C_1$ and $c_1$ such that for any function $v$ satisfying $|v(s)|\leq C_2(1+\|s\|^{c_2})$ for some scalars $c_2$ and $C_2$ we have that: $
    \sum_{t=0}^\infty\left|\mathbb{E}\left[v(s_t)|s_0=s\right]-\mathbb{E}[v(s_0)]\right|\leq C_1C_2(1+\|s_0\|^{c_1c_2})$.

\textbf{A.5.}
There exists scalars $c$ and $C$ such that for any $s\in\cS$ we have that $
    |R(s,\cdot)|\leq C(1+\|s\|^c)$.
    
\begin{definition}
{Given a function $\bmQ:\mathcal{S}\times\cA\cup\cB\to\mathbb{R},\;\forall\sigma^i\in\Sigma^i$ and $\forall s_{\tau},s_\rho\in\mathcal{S}$, we define the intervention operators $\bm\cM_1$ and $\bm\cM_2$ by \\ $
\bm\cM_1\bmQ(s_{\tau},a_{\tau},b):=\max\limits_{a_\tau\in\cA}\left[\cR\left(s_{\tau},a_{\tau},b\right)-c(s_{\tau},a_{\tau})+\gamma\sum_{s'\in\mathcal{S}}P\left(s';a_{\tau},b,s\right)v(s')|\tau\in\cF\right]$, $\forall b \in\cB$,
and \\
$\bm\cM_2\bmQ (s_{\rho},a,b_{\rho}):=\min\limits_{b_\rho\in\cB}\left[\cR\left(s_{\rho},a, b_{\rho}\right)+c(s_{\rho},b_{\rho})+\gamma\sum_{s'\in\mathcal{S}}P\left(s';a,b_{\rho},s\right)v(s')|\rho\in\cF\right]$, $\forall a \in\cA$,
where $\tau\in\cF$ and $\rho\in\cF$ are Player 1 and Player 2 intervention times respectively.}
\end{definition}

% \begin{align*}
% \bm\cM_1^{\sigma^1}Q^{\boldsymbol{\sigma'},\boldsymbol{\mathfrak{\sigma}_2}}(s_{\tau_k},a_{\tau_k},b):=\mathbb{E}_{b\sim(\sigma^2,\mathfrak{g}^2)}\Bigg[\cR(s_{\tau_k},a_{\tau_k},b)-c(s_{\tau_k},a_{\tau_k})+\gamma\sum_{s'\in\mathcal{S}}P(s';a_{\tau_k},b,s)v^{\sigma^1,\sigma^2}(s')&
% \\\Big|a_{\tau_k}\sim\sigma^1(\cdot|s_{\tau_k})\Bigg]&,
% \end{align*}
% and $\bm\cM_2^{\pi,\mathfrak{g}}$ by 
% \begin{align*}
% \bm\cM_2^{\pi,\mathfrak{g}}Q^{\boldsymbol{\sigma'},\boldsymbol{\mathfrak{\sigma}_2}}(s_{\tau_k},a^1,b_{\rho_k}):=\mathbb{E}_{a^1\sim(\sigma^1,\mathfrak{g}^1)}\Bigg[\cR(s_{\tau_k},a^1,b_{\rho_k})-c(s_{\tau_k},b_{\rho_k})+\gamma\sum_{s'\in\mathcal{S}}P(s';a^1,b_{\rho_k},s)v^{\sigma^1,\sigma^2}(s')&
% \\\Big|b_{\rho_k}\sim\sigma^2(\cdot|s_{\tau_k})\Bigg]&,
% \end{align*}
% where $\tau_k$ is an intervention time.}
% For any function  $\psi:\mathcal{S}\times\mathcal{A}\to\mathbb{R}$,
% 
% 
% 

The interpretation of $\bm\cM_1$ is the following: suppose that Player 1 is using the policy $\sigma^1$ and at time $t=\tau$ the system is at a state $s_{\tau}$ and the player performs an action $a_{\tau}\sim  \sigma^1(\cdot|s_{\tau})$. A cost of $c(s_{\tau},a_{\tau})$ is then incurred by the player, and the system transitions to $s'\sim P(\cdot;a_{\tau},b,s_{\tau})$. Lastly, recall $v^{\bm\sigma^1,\bm\sigma^2}$ is the value function under the policy pair $(\bm\sigma^1,\bm\sigma^2)$. Therefore, the quantity $\bm\cM_1\bmQ^{\bm\sigma^1,\bm\sigma^2}$ measures the expected future stream of rewards after an immediate action minus the cost of action. This object plays a crucial role which as we later discuss and enables us to characterise the points at which each player should perform an action.

{For any $a\in\cA, b\in\cB$, define $Q_1(s,a):=\bm{Q}(s,a,0)$ and $Q_2(s,b):=\bm{Q}(s,0,b)$, given a function $v:\cS\to \mathbb{R}$, for any $s \in\cS$, we define the Bellman operator $T$,  by: 
% \begin{align}
% T v^{\bm\sigma^1,\bm\sigma^2}(s_t):=&\max\Big\{\bm\cM_i^{\pi,\mathfrak{g}}Q^{\pi,\mathfrak{g}}(s_t,a_t), \textcolor{red}{\underset{a\in\mathcal{A}}{\max}}\;\left\{ \cR(s_t,a_t,b_t)+\gamma\sum_{s'\in\mathcal{S}}P(s';a,s_t)v^{\bm\sigma^1,\bm\sigma^2}(s')\right\}\Big\} \label{bellman_op}.
% \end{align}
% 
% \begin{align}
% T v^{\bm\sigma^1,\bm\sigma^2}(s_t):=&\max\Big\{\cR(s_{t},a_{t})-c(s_{t},a_{t})+\gamma\sum_{s'\in\mathcal{S}}P(s';a_{t},s)v^{\bm\sigma^1,\bm\sigma^2}(s'), \\
% &\textcolor{red}{\underset{a\in\mathcal{A}}{\max}}\;\left\{ \cR(s_t,a_t,b_t)+\gamma\sum_{s'\in\mathcal{S}}P(s';0,s_t)v^{\bm\sigma^1,\bm\sigma^2}(s')\right\}\Big\} \label{bellman_op}.
% \end{align}
\begin{align}\label{bellman_op} 
T v(s):=\min\Bigg[\max\left\{\bm\cM_1Q_1, \cR(s,\bm0)+\gamma\sum_{s'\in\mathcal{S}}\cP(s';\bm0,s)v(s')\right\}
,\bm\cM_2Q_2\Bigg].
\end{align}}
% 
% \begin{align}
% T v^{\bm\sigma^1,\bm\sigma^2}(s_t):=&\max\Big\{\textcolor{red}{\underset{a\in\mathcal{A}}{\max}}\;\left\{\cR(s_{t},a_{t})-c(s_{t},a_{t})+\gamma\sum_{s'\in\mathcal{S}}P(s';a_{t},s)v^{\bm\sigma^1,\bm\sigma^2}(s')\right\}, \\
% &\cR(s_t,0)+\gamma\sum_{s'\in\mathcal{S}}P(s';0,s_t)v^{\bm\sigma^1,\bm\sigma^2}(s')\Big\}=\\
% &\max
% _{b \in\{0, 1\}} \textcolor{red}{\underset{a\in\mathcal{A}}{\max}}\;\Bigl\{(\cR(s_{t},a)-c(s_{t},a) + \gamma\sum_{s'\in\mathcal{S}}P(s';a, s)v^{\bm\sigma^1,\bm\sigma^2}(s')) b + \\
% &(\cR(s_t,0)+\gamma\sum_{s'\in\mathcal{S}}P(s';0, s)v^{\bm\sigma^1,\bm\sigma^2}(s'))(1-b)\Bigl\}=\\
% &\max
% _{b \in\{0, 1\}, a\in\mathcal{A}} \;\Bigl\{(\cR(s_{t},a)-c(s_{t},a)) b + \cR(s_t,0)(1-b) +\\
% &\gamma\sum_{s'\in\mathcal{S}}(P(s';a, s) b + P(s';0, s) (1-b) )v^{\bm\sigma^1,\bm\sigma^2}(s'))\Bigl\}
% \label{bellman_op}.
% \end{align}

The Bellman operator captures the nested sequential structure of the decision process.\footnote{Note that the Bellman operator in \eqref{bellman_op} encodes an order of precedence for the players --- if both players decide to perform an action on the system at the same time, we only take into account the action of Player 2. This introduces an obvious asymmetry in the game. Nevertheless, in infinite horizon problems, such overlaps may be rare and hence the effect of this asymmetry is expectedly small.} In particular, the structure in \eqref{bellman_op} consists of an inner structure that consists of two terms: the first term is the expected future return given an action is taken at the current state under the policy $\sigma^i$. The second term is the expected future return given no action is taken at the current state. Lastly, the outer structure is an optimisation that compares the expected return of the two possibilities and selects the maximum.  

\begin{theorem}\label{thrm:minimax-exist}
The minimax value of the game exists and is unique, that is there exists a function $\boldsymbol{\hat{v}}:\cS\to\mathbb{R}$ which respects the following equality: 
\begin{align}
\bm\hat{v}=\underset{\boldsymbol{\hat{\sigma}}^1\in\boldsymbol{\hat{\Sigma}}^1}{\min}\underset{\boldsymbol{\hat{\sigma}}^2\in\boldsymbol{\hat{\Sigma}}^2}{\max}v^{(\boldsymbol{\hat{\sigma}}^1,\boldsymbol{\hat{\sigma}}^2)}=\underset{\boldsymbol{\hat{\sigma}}^2\in\boldsymbol{\hat{\Sigma}}^2}{\max}\underset{\boldsymbol{\hat{\sigma}}^1\in\boldsymbol{\hat{\Sigma}}^1}{\min}v^{(\boldsymbol{\hat{\sigma}}^1,\boldsymbol{\hat{\sigma}}^2)}
% =v^{\boldsymbol{\pi^\star},\boldsymbol{g^\star}}
\end{align}
% where  $(\boldsymbol{\pi^\star},\boldsymbol{g^\star})\equiv((\pi^{\star1},\pi^{\star2}),(g_1^\star,g^2^\star))$ is the saddle point equilibrium policy pair.
\end{theorem}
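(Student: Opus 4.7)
The plan is to prove existence and uniqueness through a contraction mapping argument on the Bellman operator $T$ defined in \eqref{bellman_op}, and then to show that its fixed point simultaneously realises the upper and lower values of the game. I would work in the Banach space $(\mathcal{V},\|\cdot\|_\infty)$ of measurable functions on $\cS$ endowed with the weighted sup-norm, where Assumptions A.2--A.5 ensure that the candidate value functions are well-defined elements of this space.

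The first step is to establish that $T$ is a $\gamma$-contraction on $\mathcal{V}$. The only $v$-dependence inside $T$ enters through the terms $\gamma\sum_{s'}\cP(s';\cdot,s)v(s')$, which are $\gamma$-Lipschitz in $v$ under $\|\cdot\|_\infty$ because $\cP$ is a probability kernel. Both $\bm\cM_1 Q_1$ and $\bm\cM_2 Q_2$ are, after substituting $Q_1(s,a) = \cR(s,a,\bm0)-c(s,a) + \gamma\sum\cP v$ and $Q_2(s,b) = \cR(s,\bm0,b)+c(s,b)+\gamma\sum\cP v$, of this form composed with a $\max$ (respectively $\min$) over actions; these pointwise extrema are non-expansive, and so is the outer $\min[\max\{\cdot,\cdot\},\cdot]$ structure. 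Hence $\|Tv_1 - Tv_2\|_\infty \leq \gamma \|v_1-v_2\|_\infty$. By Banach's fixed-point theorem there exists a unique $\hat v \in \mathcal{V}$ with $T\hat v = \hat v$.

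The second step is to identify $\hat v$ with the game value. From $\hat v$ I would construct stationary Markov strategies: $\hat{\sigma}^1$ plays the maximiser of $\bm\cM_1 Q_1$ whenever $\bm\cM_1 Q_1 > \cR(s,\bm0)+\gamma\sum\cP\hat v$ and abstains otherwise; $\hat{\sigma}^2$ plays the minimiser of $\bm\cM_2 Q_2$ whenever $\bm\cM_2 Q_2$ is strictly below the inner max in \eqref{bellman_op}, and abstains otherwise. The fixed-point identity, together with the monotonicity of $T$, yields the two verification inequalities
\begin{equation*}
v^{\hat\sigma^1,\sigma^2}(s) \;\geq\; \hat v(s) \quad \forall\,\sigma^2\in\Sigma^2,\qquad v^{\sigma^1,\hat\sigma^2}(s) \;\leq\; \hat v(s) \quad \forall\,\sigma^1\in\Sigma^1,
\end{equation*}
obtained by iterating the one-step Bellman inequalities along trajectories and invoking A.1--A.4 to justify passage to the limit (boundedness of truncated sums plus $\gamma<1$). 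These sandwich both $\mathrm{val}^-[v]$ and $\mathrm{val}^+[v]$ between $\hat v$ and itself, forcing $\mathrm{val}^-[v]=\mathrm{val}^+[v]=\hat v$ and proving \eqref{value_equation} with $(\hat\sigma^1,\hat\sigma^2)$ as a saddle point. Uniqueness of the value then follows from uniqueness of the Banach fixed point.

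The main obstacle will be the second step, specifically justifying the verification inequalities in the presence of the nested intervention structure. Unlike a standard zero-sum SG, each step involves a three-way choice (intervene as Player 1, intervene as Player 2, or let the system evolve), and the footnoted tie-breaking asymmetry means care is needed when both players' intervention conditions are simultaneously active. I expect this to be handled by partitioning the state space according to which branch of $T$ is active at $\hat v$ and iterating the telescoping argument branch-wise; minimal boundedness of $c$ rules out pathological accumulations of actions so that the expected discounted cost series converges, which is essential to closing the induction. The contraction argument of Step 1 is standard once the right norm is selected; the conceptual content of the theorem lies almost entirely in Step 2.
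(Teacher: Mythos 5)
Your proposal follows essentially the same route as the paper: establish that the impulse-control Bellman operator $T$ is a $\gamma$-contraction (the paper's Proposition~\ref{lemma:bellman_contraction}, carried out there in an $L_2$ norm via non-expansiveness of $P$ rather than in sup-norm), obtain the unique fixed point by iterating $T$ (Theorem~\ref{theorem:joint-sol} and Proposition~\ref{prop:uniqueness}), and then show the induced stationary Markov strategies realise that fixed point as the game value. Your Step~2 packages the optimality argument as a pair of verification inequalities iterated along trajectories, whereas the paper (Proposition~\ref{prop:mpe_result}) uses Shapley's finite-horizon truncation argument; these are the same underlying idea, and both turn on the Player-2 tie-breaking precedence in the nested $\min[\max\{\cdot,\cdot\},\cdot]$ structure that you correctly single out as the delicate point.
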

Therefore, Theorem \ref{thrm:minimax-exist} confirms that the value $\hat{v}$ for the game $\cG$ exists and constitutes a saddle point equilibrium. Correspondingly, we denote by $\bm{\hat{Q}}$ the corresponding action value of the game $\cG$ associated with $\hat{v}$. The Theorem is proved in several steps, namely by proving the convergence of a dynamic programming procedure to the solution of the game (Theorem \ref{theorem:joint-sol}) and then proving the optimality of the solutions (Proposition \ref{prop:mpe_result}).
 An important consequence of Theorem \ref{thrm:minimax-exist} is that at the stable point, each player best responds to the influence of the other, formally:
\begin{definition}
A joint strategy $(\sigma^1,\sigma^2)=\boldsymbol{\sigma}\in\boldsymbol{\Sigma}$ is a Markov perfect equilibrium strategy if no player can improve the expected return by changing their current policy. The solution is a stable fixed point in which each player optimally responds to the policies of other players in the system: Formally, $\forall i \in \cN, \forall \sigma'^i \in \Sigma^i$, we have
$
    v(s|\boldsymbol{\sigma}) - v(s|(\sigma'^i,\sigma^{-i})) \ge 0.
$
\end{definition}

% \begin{definition}
% A policy $(\sigma^i,\sigma^{-i})=\boldsymbol{\Sigma}\in\boldsymbol{\Sigma}$ is a Markov perfect equilibrium (MPE) if no player can improve the expected return by changing their current policy. Formally, for all $i \in \cN$ and all $\sigma^i \in \sigma^i$, we have
% $
%     v(s|\boldsymbol{\Sigma},\mathfrak{g}) - v(s|(\pi'^i,\sigma^{-i}),\mathfrak{g}) \le 0.
% $
% \end{definition}

We can now state a key result. The following theorem proves the convergence of a value-iteration approach to the solution.

\begin{theorem}\label{theorem:joint-sol}
Let $v:\mathcal{S}\to\mathbb{R}$ then the sequence of Bellman operators acting on $v$ converges to the solution of the game, that is to say for any $s\in \cS$ the following holds: $
\underset{k\to\infty}{\lim}T^kv(s)=\boldsymbol{\hat{v}}(s)$,
% where the operator $T$ is given by \begin{align}T v(s):=\max\left\{\min\Big\{\boldsymbol{\hat{\bm\cM_i}_{\rm min}}Q,\underset{\boldsymbol{b}\in\boldsymbol{\mathcal{B}}}{\min}\;\underset{\boldsymbol{a}\in\boldsymbol{\mathcal{A}}}{\max}\;\left[ \cR+\gamma\sum_{s'\in\mathcal{S}}P(s';\cdot)v(s')\right]\Big\},\boldsymbol{\hat{\bm\cM_i}_{\rm max}}Q\right\}\end{align}and \begin{align} \boldsymbol{\hat{\bm\cM_i}}\bm{Q}(s,\boldsymbol{a},g):=c+\underset{i\in\cN, a^i\in \cA^i}{\min}\;\underset{a^{-i}\in \cA^{-i}}{\max}\;\bm{Q}(s,(a^i,a^{-i}),g).\end{align}
\end{theorem}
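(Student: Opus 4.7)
The plan is to prove that the Bellman operator $T$ defined in \eqref{bellman_op} is a $\gamma$-contraction on an appropriate Banach space of real-valued functions on $\mathcal{S}$, so that Banach's fixed point theorem simultaneously delivers a unique fixed point $\bm{\hat{v}}$ and the convergence $T^k v \to \bm{\hat{v}}$ for any starting $v$. On a bounded state space the sup-norm suffices; in the generality allowed by Assumptions A.3--A.5 I would work in the weighted-norm space $\{v : \sup_{s\in\cS} |v(s)|/(1 + \|s\|^c) < \infty\}$ with $c$ chosen to dominate the growth of $\cR$, which keeps $T$ a self-map by A.2 and A.5 and leaves the contraction estimate unchanged.

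For the contraction, fix $v, w$ and estimate $|Tv(s) - Tw(s)|$ pointwise. The elementary identity $|\min\{x_1, y_1\} - \min\{x_2, y_2\}| \leq \max\{|x_1 - x_2|, |y_1 - y_2|\}$ together with its analogue for $\max$ lets me peel the nested outer $\min$ and inner $\max$ of $T$ and reduce the problem to bounding three differences: the no-action term $\cR(s, \bm 0) + \gamma \sum_{s'} \cP(s'; \bm 0, s) v(s')$, and the two intervention terms $\bm\cM_1 Q_1$ and $\bm\cM_2 Q_2$. In the no-action term, $\cR(s, \bm 0)$ cancels in $v - w$ and only a $\gamma$-discounted probability-weighted average of $v - w$ remains, bounded by $\gamma \|v - w\|_\infty$. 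In $\bm\cM_1 Q_1^v - \bm\cM_1 Q_1^w$, applying $|\max_a f(a) - \max_a g(a)| \leq \sup_a |f(a) - g(a)|$ makes the instantaneous reward $\cR(s_\tau, a_\tau, 0)$ and intervention cost $c(s_\tau, a_\tau)$ cancel pointwise in $a_\tau$, leaving a $\gamma$-discounted continuation again bounded by $\gamma \|v - w\|_\infty$. The symmetric argument (with $\min$ in place of $\max$) handles $\bm\cM_2 Q_2$. Taking the supremum over $s$ yields $\|Tv - Tw\|_\infty \leq \gamma \|v - w\|_\infty$, so $T$ is a contraction and Banach's theorem provides the fixed point and the convergence $T^k v \to \bm{\hat{v}}$.

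Finally, to confirm that this fixed point really is the value of the game, I would read off a pair of stationary Markov strategies $(\hat{\sigma}^1, \hat{\sigma}^2)$ from the argmax/argmin realisers of the outer $\min$ and inner $\max$ in \eqref{bellman_op} evaluated at $\bm{\hat{v}}$, and verify the saddle-point inequalities by the standard telescoping argument: a unilateral deviation is controlled by the Bellman residual (which vanishes at the fixed point) together with the discounted tail (which vanishes as $\gamma^k \to 0$ by A.3). The main obstacle is the interplay between the intervention operators and the contraction: each $\bm\cM_i$ performs its own inner optimisation, so $Q_i^v$ and $Q_i^w$ are not themselves a contraction pair. The argument survives only because the minimally-bounded cost $c$ and the reward $\cR$ are $v$-independent and therefore cancel after the $\max$/$\min$ difference inequality, leaving a pure discounted continuation integrated against a probability kernel. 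Checking this cancellation carefully, and ensuring it is compatible with the weighted-norm setting so that A.3--A.5 yield a self-map with the same contraction constant, is the only technical care this step demands.
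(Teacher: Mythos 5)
Your proposal is correct and follows the same overall skeleton as the paper: establish that $T$ is a $\gamma$-contraction, invoke the fixed-point theorem to get both uniqueness and convergence of $T^k v$, and then separately verify that the fixed point is the game value via the induced stationary strategies (the paper does this last step in Proposition \ref{prop:mpe_result} with a Shapley-style truncation argument rather than your telescoping-residual argument, but these are interchangeable).

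Where you genuinely diverge is in how the contraction estimate is organised, and your route is leaner. The paper's proof of Proposition \ref{lemma:bellman_contraction} bounds four families of differences, including the \emph{cross} terms $\left\|\bm\cM_i\bmQ-\bm\cM_j\bmQ'\right\|$ and $\left\|\bm\cM_i\bmQ-\bigl[\cR(\cdot,\bm0)+\gamma\cP^{\bm0}v'\bigr]\right\|$; these occupy most of the appendix, require a further case split on the sign of the difference, and invoke the strict positivity bound $c(s,y)>\lambda>0$ on the intervention cost. Your pairwise-matching inequalities $|\min\{x_1,y_1\}-\min\{x_2,y_2\}|\leq\max\{|x_1-x_2|,|y_1-y_2|\}$ (and the $\max$ analogue) align the three branches of $Tv$ with the corresponding branches of $Tw$, so only the three ``diagonal'' differences ever need bounding, each of which reduces to a $\gamma$-discounted continuation after the $v$-independent terms $\cR$ and $c$ cancel. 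This removes the cross-term cases entirely and, notably, makes the contraction constant independent of the cost lower bound --- the minimal boundedness of $c$ is then needed only for the modelling interpretation, not for convergence. The remaining differences are cosmetic: you work in the sup-norm (or a weighted sup-norm to accommodate A.3--A.5), where non-expansiveness of $\cP$ is immediate, whereas the paper works in an $L_2$-type norm and derives non-expansiveness via Jensen's inequality; both are adequate, though you should state explicitly that $T$ is a self-map of your weighted space, which the paper also leaves implicit.
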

\textcolor{black}
Theorem \ref{theorem:joint-sol} proves the solution to $\cG$ can be obtained by computing the limit of a dynamic programming procedure.  The proof of the Theorem is deferred to the Appendix.
% 
% (when {\fontfamily{cmss}\selectfont Switcher} uses a Q-learning variant). 
% Secondly, it proves the Algorithm 1 system of $N+1$ players jointly converges to the solution of $\cG$.} 

%  \label{value_iteration_expression} and ${\boldsymbol{\pi^\star}}\in\underset{{\boldsymbol{\sigma'}}\in\boldsymbol{\Sigma}}{\arg\sup}\;F^{\boldsymbol{\sigma'}}$.

% :
% \begin{corollary}
    
% \end{corollary}
% Part ii) establishes the solution to $\mathcal{G}$ can be computed using Algorithm 1. 
% 
% The result also establishes the convergence of Algorithm 1 to the solution using (linear) function approximators and that the approximation error is bounded by the smallest error achievable given the basis functions. 

\begin{proposition} \label{prop:mpe_result}
Let $\boldsymbol{\hat{\sigma}}\in \boldsymbol{\Sigma}$ be a strategy generated by the procedure outlined in Theorem \ref{theorem:joint-sol}, then $\boldsymbol{\hat{\sigma}}$ is a minimax Markov perfect equilibrium policy.    
\end{proposition}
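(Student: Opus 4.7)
The plan is to decompose the claim into two components that the Markov perfect equilibrium concept demands: (i) the strategies produced by the Bellman iteration depend only on the current state, and (ii) at every state $s\in\cS$ no unilateral deviation by either player improves its payoff. Both will be read off directly from the fixed-point equation $T\hat{v}=\hat{v}$ established in Theorem \ref{theorem:joint-sol}.

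For (i), I would argue that because $\hat{v}$ is the unique fixed point of the operator $T$ in \eqref{bellman_op}, the inner $\max$ and outer $\min$ in the definition of $T\hat{v}(s)$ are attained at selectors $\hat{\sigma}^1(\cdot\,|\,s)$ and $\hat{\sigma}^2(\cdot\,|\,s)$ that are measurable functions of $s$ alone; the intervention operators $\bm\cM_1,\bm\cM_2$ by construction return state-dependent actions, and the binary decision of whether to intervene or not reduces to comparing two state-dependent quantities. Hence $\boldsymbol{\hat{\sigma}}$ is automatically Markov.

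For (ii), the plan is to use a one-shot deviation argument made rigorous by the $\gamma$-contraction of $T$. Fix any state $s$ and any Markov deviation $\sigma'^1\in\Sigma^1$ for Player~1. Expanding the definition of $v^{(\sigma'^1,\hat{\sigma}^2)}(s)$ one step and using the fact that $\hat{v}$ satisfies $\hat{v}(s)=T\hat{v}(s)$ together with the appropriate branch of \eqref{bellman_op} yields the pointwise bound
\[ \mathbb{E}^{(\sigma'^1,\hat{\sigma}^2)}\!\left[\cR(s_0,a_0,b_0)-\cC(s_0,a_0,b_0)+\gamma\hat{v}(s_1)\,\big|\,s_0=s\right]\le \hat{v}(s), \]
because Player~1's inner $\max$ in \eqref{bellman_op} is attained at $\hat{\sigma}^1$ for each state, and hence replacing $\hat{\sigma}^1$ by any other selector can only weakly decrease the value inside the min. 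Iterating this one-step inequality along trajectories and invoking dominated convergence (justified by A.2, A.3, A.5) gives $v^{(\sigma'^1,\hat{\sigma}^2)}(s)\le\hat{v}(s)$ for all $s$; the symmetric argument on the outer $\min$ yields $v^{(\hat{\sigma}^1,\sigma'^2)}(s)\ge\hat{v}(s)$ for every Player~2 deviation $\sigma'^2$. Combined these are exactly the inequalities required by the MPE definition at every state, which upgrades the saddle-point property of Theorem \ref{thrm:minimax-exist} from a statement about overall value to a statement about every subgame.

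The main obstacle I anticipate is the asymmetric precedence rule built into \eqref{bellman_op}: because Player~2's intervention strictly dominates Player~1's when both decide to act at the same state, the branches corresponding to ``Player~1 acts, Player~2 does not'' and ``Player~2 acts'' must be teased apart when verifying optimality. I would handle this by splitting the state space into the three regions induced by the two binary intervention decisions and checking the one-shot deviation inequality separately on each region; on the region where $\bm\cM_2 Q_2$ is the active branch, Player~1's chosen action is irrelevant to the transition, which eliminates a potential gap in the Player~1 optimality argument. Everything else is a routine consequence of the contraction and the Markov form of the selectors extracted in step (i).
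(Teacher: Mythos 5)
Your argument is correct, but it takes a genuinely different route from the paper. The paper proves Proposition \ref{prop:mpe_result} by adapting Shapley's original finite-horizon truncation argument: it introduces the $k$-step game $\cG^k$, shows that the Markov strategy extracted from the value iteration guarantees Player~1 at least $v_k^{\sigma^1,\sigma^2}(s)$ minus tail terms of order $\gamma^{n-1}$, and lets $n\to\infty$ so that the truncation error vanishes; the symmetric argument handles Player~2. You instead run a verification-theorem / one-shot-deviation argument directly off the fixed-point identity $T\hat{v}=\hat{v}$: the one-step deviation inequality for each player, iterated along trajectories and closed with a dominated-convergence step, upgrades the fixed point to subgame-wise optimality. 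Both are standard and both work here; what your route buys is modularity (the Markov property and the no-profitable-deviation property are isolated as separate claims) and, importantly, an explicit treatment of the asymmetric precedence rule in \eqref{bellman_op} --- the observation that on the region where $\bm\cM_2 Q_2$ is the active branch Player~1's action is irrelevant to the transition is exactly the point that needs to be made for the Player-1 deviation inequality, and the paper's Shapley-style proof passes over it silently. What the paper's route buys is that it never needs to argue about which branch of the nested $\min/\max$ is active: the finite-horizon approximation bounds everything uniformly by the geometric decay of the tail. The only soft spot in your write-up is step (i), where measurable selection of the maximisers is asserted rather than argued; in the paper's setting (sums over $s'\in\cS$, maxima over the action sets assumed attained throughout) this is harmless, and the paper does not address it either.
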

\begin{proof}
The proof is achieved using similar arguments as those presented in Theorem 2 of \citep{shapley1953stochastic} with some modifications.
Denote by the \textit{finite} game $\cG^k$ of $k<\infty$ steps in which Player $1$ (Player $2$) maximises (minimises) the following objective $
v_k^{\sigma^1,\sigma^2}(s)=\mathbb{E}\left[\sum_{t=0}^k \gamma^t\left\{\cR(s_t, a_t,b_t)- \cC(s_t, a_t,b_t)\right\}\Big|s_0=s\right]$. 
Suppose in the game $\hat{\cG}^k$, Player 1 is given a payoff of $\cR(s,\bma)+\cP^{\bma}_{s's}v^{\sigma^1,\sigma^2}(s')$ given $\bma\equiv(a,b)\sim (\sigma^1,\sigma^2)\in \Sigma^1\times\Sigma^2$ and for any given $s\in\cS$. Now the Markov strategy $\sigma^1(a|s)\in\Sigma^1$ guarantees Player 1 a payoff of $v_k^{\sigma^1,\sigma^2}(s)$. Now in the game $\cG^k$, after $n<\infty$ steps and using the strategy $\sigma^1(a|s)$ gives Player 1 an expected payoff of at least $v_k^{\sigma^1,\sigma^2}(s)-\gamma^{n-1}\max\limits_{a\in\cA,b\in\cB}\cP^{(a,b)}_{s's}v_{k-1}^{\sigma^1,\sigma^2}(s')\leq v_k^{\sigma^1,\sigma^2}(s)-\gamma^{n-1}\max\limits_{s'\in\cS}v_{k-1}^{\sigma^1,\sigma^2}(s') $. Therefore, accounting for the $n$ steps, the total payoff for Player 1 is at least $v_k^{\sigma^1,\sigma^2}(s)-\gamma^{n-1}\max\limits_{s'\in\cS}v_{k-1}^{\sigma^1,\sigma^2}(s')-\sum_{t=0}^{n-1}\gamma^tv_{n-t}^{\sigma^1,\sigma^2}(s')= v_k^{\sigma^1,\sigma^2}(s)-\gamma^{n-1}\max\limits_{s'\in\cS}v_{k-1}^{\sigma^1,\sigma^2}(s')-\gamma^n\frac{1-\gamma^n}{1-\gamma}\|v\|:=\bm{\tilde{v}}^{\sigma^1,\sigma^2}_{k,n}(s)$. This expression holds for arbitrarily large values of $n$ in particular $\lim\limits_{n\to\infty}\bm{\tilde{v}}^{\sigma^1,\sigma^2}_{k,n}(s)=v_k^{\sigma^1,\sigma^2}(s)$ from which it follows that the strategy $\sigma^1$ is optimal for Player 1. After using analogous arguments for Player 2 we deduce the result.
\end{proof}

Having constructed a procedure to find each player's best-response strategy, we now seek to determine the conditions when an intervention should be performed. Let us denote by $\{\tau_k\}_{k\geq 0}$ ($\{\rho_r\}_{r\geq 0}$) the points at which each Player 1 (Player 2) decides to act or \textit{intervention times}, so for example if Player 1 chooses to perform an action at state $s_6$ and again at state $s_8$, then $\tau_1=6$ and $\tau_2=8$. We say that the times $\{\tau_k\}_{k\geq 0}$ and $\{\rho_r\}_{r\geq 0}$ are \textit{best-response intervention times} if executing actions at that sequence of times supports an MPE strategy. The following result  characterises the best-response intervention times $\{\tau_k\}_{k\geq 0}$ and $\{\rho_r\}_{r\geq 0}$.

\begin{proposition}\label{prop:switching_times}
The Player 1 and Player 2 best-response intervention times
% $\{\tau_k\}_{k\geq 1}$ 
are given by the following
\begin{align*}\tau_k&=\inf\{\tau>\tau_{k-1}|\bm\cM_1\bm{\hat{Q}}= \bm{\hat{Q}}\},
\\\rho_r&=\inf\{\rho>\rho_{r-1}|\bm\cM_2\bm{\hat{Q}}= \bm{\hat{Q}}\}.
\end{align*}
\end{proposition}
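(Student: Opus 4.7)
The plan is to derive the characterisation from the double-obstacle structure implicit in the Bellman equation of Theorem \ref{theorem:joint-sol}, combined with a one-sided verification argument that leverages the saddle-point property proved in Theorem \ref{thrm:minimax-exist}. The intuition is that $\bm\cM_1 \bm{\hat{Q}}$ measures the best payoff Player 1 can obtain by intervening immediately; when this coincides with $\bm{\hat{Q}}$ the player is indifferent (equivalently, the equilibrium prescribes acting), and when $\bm\cM_1\bm{\hat{Q}} < \bm{\hat{Q}}$ acting is strictly dominated by waiting.

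First I would recast the fixed point identity $\bm{\hat{v}} = T\bm{\hat{v}}$ from \eqref{bellman_op} as a quasi-variational inequality system. Unpacking the inner $\max$ and outer $\min$ yields, at every state, the chain $\bm\cM_1 \bm{\hat{Q}}(s) \leq \bm{\hat{Q}}(s) \leq \bm\cM_2 \bm{\hat{Q}}(s)$, with equality realised on at least one side. The lower bound reflects that Player 1 can always choose to wait, so immediate intervention cannot exceed the equilibrium value; the upper bound is the mirror statement for Player 2 as minimiser. This is the discrete-time analogue of the impulse-control QVI, and it lets me define the candidate intervention regions $\mathcal{I}_1 := \{s \in \cS : \bm\cM_1 \bm{\hat{Q}}(s) = \bm{\hat{Q}}(s)\}$ and $\mathcal{I}_2 := \{s \in \cS : \bm\cM_2 \bm{\hat{Q}}(s) = \bm{\hat{Q}}(s)\}$, so that the stated $\tau_k$ and $\rho_r$ are precisely the successive hitting times of these sets after the previous intervention.

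The core step is a verification argument for Player 1. By the saddle-point identity in Theorem \ref{thrm:minimax-exist} it suffices to fix $\hat{\sigma}^2$ and verify that Player 1's best response is to act at the first hit of $\mathcal{I}_1$. Given any alternative admissible sequence $\{\tau'_k\}$ played against $\hat{\sigma}^2$, I would iterate the Bellman equation along the state trajectory and telescope using the tower property to obtain $v^{\sigma'^1, \hat{\sigma}^2}(s) \leq \bm{\hat{v}}(s)$. Equality forces every $\tau'_k$ to lie inside $\mathcal{I}_1$: outside $\mathcal{I}_1$ the gap $\bm{\hat{Q}}(s) - \bm\cM_1 \bm{\hat{Q}}(s) > 0$ is strict (guaranteed by the minimally bounded cost $c$), so acting there subtracts strictly from the expected payoff, whereas delaying an intervention while already in $\mathcal{I}_1$ can be matched, not beaten, by acting later in $\mathcal{I}_1$. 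Choosing the earliest $\tau'_k > \tau_{k-1}$ with $s_{\tau'_k} \in \mathcal{I}_1$ returns the infimum in the statement; the symmetric argument against $\hat{\sigma}^1$ delivers $\{\rho_r\}$.

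The main obstacle is making the verification step watertight in the presence of the sequential precedence baked into $T$ (Player 2's action overrides Player 1's on simultaneous interventions). I would handle this by noting that $\bm{\hat{v}}$ already encodes this precedence, so a one-sided perturbation of $\sigma^1$ against a fixed $\hat{\sigma}^2$ does not alter the precedence outcome on any sample path, and the telescoped inequality remains valid. A secondary technical detail is verifying almost-sure finiteness of the successive hitting times whenever the player intervenes infinitely often, which follows from Assumption A.1 (ergodicity) together with the strict positivity of $c$, ruling out pathological accumulation of interventions.
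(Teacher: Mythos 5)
Your overall strategy --- read the fixed point $\bm{\hat{v}}=T\bm{\hat{v}}$ as an obstacle problem, define the intervention region as the contact set $\{\bm\cM_1\bm{\hat{Q}}=\bm{\hat{Q}}\}$, and run a verification/exchange argument showing any deviation from the hitting times of that set is weakly worse --- is the same idea the paper uses. The paper's version is more local: it supposes the first intervention is delayed to some $\tau'_1>\tau_1$, compares the two strategies that differ only in that first time, and uses $\mathbb{E}[\bm\cM_1\bmQ^{\bm\sigma'}]\leq\max\{\bm\cM_1\bmQ^{\bm\sigma'},\ \text{continuation}\}$ together with the identification of the right-hand side with $Tv^{\bm{\tilde{\sigma}}}(s_{\tau_1})$ to reach a contradiction; your telescoped global inequality is the same mechanism iterated along the whole trajectory.

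There is, however, one step in your write-up that does not survive scrutiny: the claim that unpacking $T$ yields the two-sided chain $\bm\cM_1\bm{\hat{Q}}(s)\leq\bm{\hat{Q}}(s)\leq\bm\cM_2\bm{\hat{Q}}(s)$ \emph{at every state}. The operator is $Tv=\min[\max\{\bm\cM_1Q_1,\ C\},\ \bm\cM_2Q_2]$ with $C$ the no-action continuation, so the upper inequality $\bm{\hat{Q}}\leq\bm\cM_2\bm{\hat{Q}}$ does hold everywhere, but the lower inequality $\bm\cM_1\bm{\hat{Q}}\leq\bm{\hat{Q}}$ holds only where the outer $\min$ selects the $\max$ term. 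On the set where $\max\{\bm\cM_1Q_1,C\}>\bm\cM_2Q_2$ (Player 2's intervention region, which the precedence convention makes binding), the fixed point satisfies $\bm{\hat{Q}}=\bm\cM_2\bm{\hat{Q}}<\bm\cM_1\bm{\hat{Q}}$, so your assertion that ``outside $\mathcal{I}_1$ the gap $\bm{\hat{Q}}-\bm\cM_1\bm{\hat{Q}}>0$ is strict'' is false there, and the first hitting time of the equality set $\{\bm\cM_1\bm{\hat{Q}}=\bm{\hat{Q}}\}$ need not be the first state at which Player 1 would wish to act. To repair this you must restrict the QVI chain to the complement of Player 2's intervention region (where the override makes Player 1's decision moot anyway), or argue as the paper does --- directly comparing the delayed and undelayed strategies without ever asserting the global lower obstacle inequality. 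Your closing remarks on precedence and on almost-sure finiteness of the hitting times are sensible additions; the paper's own proof is silent on both.
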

 % Prop. \ref{prop:switching_times} characterises the (categorical) distribution $\mathfrak{g}^i$. 
 Therefore, given the function $\bm{\hat{Q}}$, the times $\{\tau_k\}, \{\rho_r\}$ can be determined by evaluating if $\bm\cM_i\bm{\hat{Q}}=\bm{\hat{Q}}$ hold.
 
 A key aspect of Prop. \ref{prop:switching_times} is that it exploits the cost structure of the problem to determine when each player should perform an intervention. In particular, the equality $\bm\cM_i\bm{\hat{Q}}=\bm{\hat{Q}}$ implies that performing an action and incurring a cost for doing so is optimal.  The following result characterises the action conditions:
\begin{corollary}\label{prop:switching_cond}
For any $s\in\mathcal{S}$, Player 1 performs an action whenever the following condition is satisfied: $
\boldsymbol{1}_{\mathbb{R}_+}\left(\boldsymbol{\cM}_1\hat{\bm{Q}}(s,\boldsymbol{a}|\cdot)-\max\limits_{a\in \cA}\hat{\bm{Q}}(s, a,b)\right)=1,\;\forall b\in\cB$,
where $\boldsymbol{1}_{\mathbb{R}_{+}}$ is the indicator function i.e. $\boldsymbol{1}_{\mathbb{R}_{+}}(x)=1$ if $x>0$ and $\boldsymbol{1}_{\mathbb{R}_{+}}(x)=0$ otherwise. Analogously, Player 2 performs an action whenever $
\boldsymbol{1}_{\mathbb{R}_+}\left(\min\limits_{a\in \cA}\hat{\bm{Q}}(s, a,b)-\boldsymbol{\cM}_2\hat{\bm{Q}}(s,\boldsymbol{a}|\cdot)\right)=1,\;\forall a\in\cA$.
% , moreover the intervention times 
% $\{\tau_k\}_{k\geq 1}$ 
% are $\tau_k=\inf\{\tau>\tau_{k-1}|\mathcal{M}^{\mathfrak{g}}v(\cdot|\boldsymbol{\Sigma},\mathfrak{g})= \bm{Q}(\cdot|\boldsymbol{\Sigma},\mathfrak{g})\}$. 
\end{corollary}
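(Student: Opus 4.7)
The plan is to derive Corollary \ref{prop:switching_cond} as a direct pointwise reformulation of Proposition \ref{prop:switching_times}. That proposition identifies the best-response intervention times as the first hitting times of the ``contact sets'' $\{s : \bm\cM_i \bm{\hat{Q}} = \bm{\hat{Q}}\}$; the corollary merely rewrites membership in these sets as an indicator condition on the sign of the gap between the ``act now'' value and the optimal equilibrium action value.

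First, I would unpack the operator $\bm\cM_1\bm{\hat{Q}}(s,\bma|\cdot)$. By its definition it equals the optimal immediate-intervention value $\max_{a\in\cA}[\cR(s,a,b)-c(s,a)+\gamma\sum_{s'}P(s';a,b,s)\bm{\hat{v}}(s')]$, which is the return Player 1 obtains by acting at the current state (and paying the cost $c$) before reverting to the equilibrium strategy. By contrast, $\max_{a\in\cA}\bm{\hat{Q}}(s,a,b)$ is the equilibrium action value already attainable under the saddle point strategy, which internally chooses when to intervene. Since deferring is always a feasible option, the equilibrium value weakly dominates the forced-intervention value, and the gap $\bm\cM_1\bm{\hat{Q}}(s,\bma|\cdot)-\max_a\bm{\hat{Q}}(s,a,b)$ is thus non-positive on the continuation region and saturates at zero exactly on the contact set identified by Proposition \ref{prop:switching_times}.

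Second, I would combine these observations. Because $\bm\cM_1\bm{\hat{Q}}\leq \max_a\bm{\hat{Q}}$ holds globally, the equality characterisation $\bm\cM_1\bm{\hat{Q}}=\bm{\hat{Q}}$ coincides with the saturation condition $\bm\cM_1\bm{\hat{Q}}(s,\bma|\cdot)-\max_a\bm{\hat{Q}}(s,a,b)\geq 0$ for every $b\in\cB$. Evaluating $\bm{1}_{\mathbb{R}_+}$ on this (now signed) difference yields the corollary's Player~1 condition. The symmetric statement for Player 2 follows from an identical argument with $\min$ replacing $\max$ and the sign of the gap flipped, reflecting the fact that Player~2 is the minimiser of $v$; the precedence convention of the Bellman operator $T$ in \eqref{bellman_op} then closes out the opponent quantifier.

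The main subtlety I anticipate is reconciling the equality $\bm\cM_i\bm{\hat{Q}}=\bm{\hat{Q}}$ from Proposition \ref{prop:switching_times} with the strict-positivity reading of $\bm{1}_{\mathbb{R}_+}$. Resolving this relies on the minimally bounded action cost $c\geq \underline{c}>0$: because every intervention incurs a strictly positive cost, the equilibrium never renders a player indifferent between acting and deferring at a generic state, so the contact region is sharp and the indicator condition agrees with the equality characterisation up to a $\mathbb{P}$-null set. This is where the ``minimally bounded'' hypothesis genuinely does work, and I would handle any boundary ties by noting that they do not affect the almost-sure characterisation of the intervention times.
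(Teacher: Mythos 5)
Your instinct to read the corollary as a pointwise restatement of Proposition \ref{prop:switching_times} matches the paper's intent: the paper gives no separate proof of this corollary and presents it as an immediate consequence of the characterisation $\bm\cM_i\bm{\hat{Q}}=\bm{\hat{Q}}$ of the intervention times. The problem is that your bridge from the equality condition to the indicator condition does not go through, and the gap is one of sign and strictness rather than of bookkeeping. In your second paragraph you (correctly, under your reading of the operators) establish that $\bm\cM_1\bm{\hat{Q}}(s,\bma|\cdot)-\max_{a\in\cA}\bm{\hat{Q}}(s,a,b)\le 0$ everywhere, with equality exactly on the contact set. But the corollary's indicator is $\boldsymbol{1}_{\mathbb{R}_+}(x)=1$ \emph{iff} $x>0$, so under your own conclusions the quantity inside the indicator is never strictly positive and the stated condition is never satisfied --- i.e.\ your argument proves Player 1 never acts, which contradicts both Proposition \ref{prop:switching_times} and the intended content of the corollary. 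Writing ``the equality characterisation coincides with the saturation condition $\ge 0$'' and then ``evaluating $\boldsymbol{1}_{\mathbb{R}_+}$ on this difference yields the condition'' silently replaces $>0$ by $\ge 0$; that is precisely the step that needs justification and it is false as written, since $\boldsymbol{1}_{\mathbb{R}_+}(0)=0$.

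Your final paragraph recognises this tension but the proposed repair points the wrong way. The minimally bounded cost $c\ge\underline{c}>0$ makes the forced-intervention value $\bm\cM_1\bm{\hat{Q}}$ \emph{strictly smaller} than any costless maximum over actions wherever the two would otherwise tie, so it pushes the gap strictly negative rather than strictly positive; it cannot ``sharpen'' the contact region into the open set $\{x>0\}$. A correct derivation has to either (a) interpret $\max_{a\in\cA}\bm{\hat{Q}}(s,a,b)$ as the \emph{continuation} (no-intervention) value rather than a value that already dominates intervention --- in which case the global inequality you assert in paragraph two is wrong and must be dropped --- or (b) honestly reconcile the equality $\bm\cM_1\bm{\hat{Q}}=\bm{\hat{Q}}$ of Proposition \ref{prop:switching_times} with a strict inequality, which requires an argument you have not supplied. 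As it stands, the proposal contains a step that would fail: the claimed equivalence between the contact-set equality and the strict-positivity indicator.
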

 The result provides characterisation of where each player should execute an action. 
 % The condition allows for the characterisation to be evaluated online during the learning phase.

% \begin{theorem}\label{theorem:joint-sol}
% Given any $v^{\bm\sigma^1,\bm\sigma^2}:\mathcal{S}\to\mathbb{R}$,  the value of the game is given by $\underset{k\to\infty}{\lim}T^kv^{\bm\sigma^1,\bm\sigma^2}=v^{\bm{\hat{\sigma}}^1,\bm{\hat{\sigma}}^2}$ where  $(\bm{\hat{\sigma}}^1,\bm{\hat{\sigma}}^2)$ is the saddle point equilibrium policy pair.
% \end{theorem}
% The result of Theorem \ref{theorem:joint-sol} enables the saddle point equilibrium of $\cG$  to be determined using a value iteration procedure.

\section{Learning in Unknown Environments}\label{sec:learning}
In this section, we study the problem while considering settings in which the game $\cG$ is entirely unknown to the players. To devise a method that enables us to handle this setting, we employ tools from reinforcement learning through which each player learns its minimax equilibrium strategy by repeatedly playing the game.  Within the standard RL paradigm, RL agents decide on an action at each time step thus the standard RL framework is not designed to handle the current setting \citep{sutton2018reinforcement}.  In what follows, we introduce an RL method that accommodates impulse controls in the stochastic game setting enabling the players to learn their equilibrium strategies entirely through repeated interaction. This, in turn, enables the players to learn when to perform actions in addition to the best-response actions which together constitute a best-response strategy to the play of their opponent.

\begin{algorithm}[!ht]
\begin{algorithmic}[1] 
		\STATE {\bfseries Input:} Constant $\epsilon\geq 0$, 
		\STATE {\bfseries Initialise:} Q-function, $\bmQ_0$ 
		% \STATE Given reward objective function, $R$, initialise Rollout Buffers $\mathcal{B}_{\pi}$, $\mathcal{B}_{\mathfrak{g}}$ % \textcolor{red}{MUST INCLUDE OFF POLICY PART}
		\REPEAT
  \STATE{$n\gets 0$}
		    % \STATE Reset state $s_0$, Reset Rollout Buffers $\mathcal{B}_{\pi}$, $\mathcal{B}_{\mathfrak{g}}$
		    \FOR{$t=0,1,\ldots$}
    		    \STATE Compute $a_{t}\in\arg\max \bmQ_n(s_{t},0,b_t), b_{t}\in\arg\min \bmQ_n(s_{t},a_t,0)$ 
    		   % \STATE Sample $g_t \sim \mathfrak{g}(\cdot|s_t)$
    		    \IF{$\bm\cM_1\bmQ_n\geq\bmQ_n$} 
         \STATE Apply $a_{t}$ so $s_{t+1}\sim P(\cdot|a_t,0,s_t),$
                    
        		    \STATE Receive rewards $r_{t} = \cR(s_{t},a_{t},0)$
                    % \STATE Store $(s_t, a_{t}, s_{t+1}, r_t)$ in $\mathcal{B}_{\pi}$
    		    \ELSE
          \IF{$\bm\cM_2\bmQ_n\leq\bmQ_n$}
    		     \STATE Apply $b_{t}$ so $s_{t+1}\sim P(\cdot|0,b_t,s_t),$
                `\STATE Receive rewards $r_{t} = \cR(s_{t},0,b_t)$
                    \ELSE
                    		     \STATE Apply no action so $s_{t+1}\sim P(\cdot|\bm0,s_t),$
        	\STATE Receive rewards $r_t = \cR(s_{t},\bm0)$.
    		    \ENDIF
                    \ENDIF
                    % \STATE Store $(s_t, g_t, s_{t+1}, r_t)$ in $\mathcal{B}_{\mathfrak{g}}$
        	\ENDFOR
    	\STATE{\textbf{// Learn $\hat{\bmQ}$}}
    	%\STATE Sample a batch of $|B_{\pi}|$ transitions $B_{\pi}$ from $\mathcal{B}_{\pi}$
    	%\STATE Sample a batch of $|B_{\mathfrak{g}}|$ transitions $B_{\mathfrak{g}}$ from $\mathcal{B}_{\mathfrak{g}}$
        \STATE Update $\bmQ_n$ function according to the update rule \eqref{q_learning_update} 
        % \STATE Update policy  $\mathfrak{g}^i$ and critic $V_{\mathfrak{g}}$ networks using $\mathfrak{B}_{\mathfrak{g}}$
        % \ENDFOR
         \UNTIL{$|\bmQ_{n}(s,\bma)-\bmQ_{n-1}(s,\bma)|\leq\epsilon, \forall s\in \mathcal{S}$}
	\caption{}
\label{algo:1} 
\end{algorithmic}         
\end{algorithm}

\label{sec:learning_analysis}
We initiate the study of learning the minimax solution to $\cG$ by presenting the main result of the section, namely Theorem \ref{theorem:q_learning}. The theorem proves that a variant of a Q-learning \citep{bertsekas2012approximate} procedure converges almost surely to the action-value $\bm\hat{Q}$ for the game $\cG$.

\begin{theorem}\label{theorem:q_learning}
Consider the following Q learning variant:
\begin{align}\nonumber
    \bmQ_{t+1}&(s_t,\boldsymbol{a})=\bmQ_t(s_t,\boldsymbol{a})
\\&\begin{aligned}+\boldsymbol{\alpha}_t(s_t,\boldsymbol{a}_t)\Big[\min\Bigg(\max\left\{\bm\cM_1\bmQ_t(s_t,\bma), \cR(s_t,\boldsymbol{0})+\gamma\bmQ_t(s_{t+1},\bm0)\right\}&\\,\bm\cM_2\bmQ_t(s_t,\bma)\Bigg)-\bmQ_t(s_t,\boldsymbol{a})\Big]&,
\end{aligned}`\label{q_learning_update}
\end{align}
then $\bmQ_t$ converges to $\bm{\hat{Q}}$ with probability $1$, where $s_t,s_{t+1}\in\cS$ and $\bma\in\cA\times\cB$.
\end{theorem}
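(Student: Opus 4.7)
The plan is to recast the update \eqref{q_learning_update} as a standard asynchronous stochastic approximation scheme of the form
\begin{equation*}
\bmQ_{t+1}(s_t,\bma) = (1 - \boldsymbol{\alpha}_t(s_t,\bma))\,\bmQ_t(s_t,\bma) + \boldsymbol{\alpha}_t(s_t,\bma)\bigl[(H\bmQ_t)(s_t,\bma) + w_t\bigr],
\end{equation*}
where $H$ is the $Q$-value lift of the Bellman operator $T$ from \eqref{bellman_op} and $w_t$ is a zero-mean noise term. Invoking a standard convergence result for asynchronous stochastic approximation (e.g., Jaakkola--Jordan--Singh, or Proposition 4.4 of Bertsekas--Tsitsiklis) then gives $\bmQ_t \to \bm{\hat{Q}}$ almost surely, provided that (i) $H$ is a contraction with unique fixed point $\bm{\hat{Q}}$; (ii) $\{w_t\}$ is a martingale-difference sequence with conditionally bounded variance; (iii) the step-sizes satisfy the Robbins--Monro conditions $\sum_t \boldsymbol{\alpha}_t(s,\bma) = \infty$ and $\sum_t \boldsymbol{\alpha}_t^2(s,\bma) < \infty$; and (iv) each state-action pair is visited infinitely often.

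First I would formally define $H$ by
\begin{equation*}
H\bmQ(s,\bma) := \mathbb{E}_{s' \sim \cP(\cdot|\bma,s)}\bigl[\min\bigl(\max\{\bm\cM_1 \bmQ,\ \cR(s,\bm0) + \gamma\, \bmQ(s',\bm0)\},\ \bm\cM_2 \bmQ\bigr)\bigr],
\end{equation*}
where $\bm\cM_1,\bm\cM_2$ act through the value function implicit in $\bmQ$. The noise $w_t$ is then the difference between the bracketed term in \eqref{q_learning_update} and its conditional expectation $(H\bmQ_t)(s_t,\bma)$ given $\cF_t$. Zero conditional mean is immediate, and square-integrability follows from Assumption A.2 together with the polynomial growth bounds in A.3--A.5, which give $\mathbb{E}[w_t^2 \mid \cF_t] \leq K(1 + \|\bmQ_t\|^2)$ for some $K>0$.

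Second, I would prove that $H$ is a $\gamma$-contraction in the sup norm with unique fixed point $\bm{\hat{Q}}$. This essentially lifts Theorem~\ref{theorem:joint-sol} from the value-function setting to the action-value setting: each $\bm\cM_i$ discounts its inner expectation by $\gamma$ and hence contracts by factor $\gamma$, while $\min$ and $\max$ are non-expansive via $|\min(a,b) - \min(a',b')| \leq \max(|a-a'|, |b-b'|)$ and its dual. Composing these operations preserves the $\gamma$-contraction property. Uniqueness of the fixed point together with the identification $H\bm{\hat{Q}} = \bm{\hat{Q}}$ follows from Theorem~\ref{thrm:minimax-exist}. The Robbins--Monro conditions hold by hypothesis on $\{\boldsymbol{\alpha}_t\}$, and ergodicity (Assumption A.1) combined with the exploration prescribed by Algorithm~\ref{algo:1} guarantees that every state-action pair is visited infinitely often. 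The cited stochastic approximation theorem then yields the claim.

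\textbf{Main obstacle.} The most delicate step is verifying the $\gamma$-contraction of $H$: the intervention operators $\bm\cM_1,\bm\cM_2$ are themselves $\max$/$\min$ over the action sets of expressions containing the state-dependent cost $c$, and they appear nested inside an outer $\min(\max(\cdot,\cdot),\cdot)$. Bounding $\|\bm\cM_i \bmQ - \bm\cM_i \bmQ'\|_\infty \leq \gamma \|\bmQ - \bmQ'\|_\infty$ requires exploiting the fact that the cost $c$ is identical across $\bmQ, \bmQ'$ and so cancels in the difference, together with the standard sup-norm bound $|\max_a f(a) - \max_a g(a)| \leq \max_a |f(a) - g(a)|$. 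Once this intervention-contraction is established, propagating it through the outer $\min(\max(\cdot,\cdot),\cdot)$ is routine. The remainder of the argument is standard Q-learning machinery, and the novelty lies entirely in ensuring that the impulse-control structure does not break the contraction required by the stochastic approximation framework.
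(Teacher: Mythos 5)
Your proposal follows essentially the same route as the paper: both recast the update as a stochastic approximation recursion, invoke the Jaakkola--Jordan--Singh convergence theorem, and reduce the problem to (a) verifying the conditional-mean condition via the $\gamma$-contraction of the impulse-control Bellman operator (the paper's Proposition~\ref{lemma:bellman_contraction}, handled case-by-case through the nested $\min$/$\max$ and intervention operators exactly as you outline) and (b) verifying the conditional variance bound from boundedness of $\bmQ$. The decomposition you write with $H\bmQ_t + w_t$ is algebraically identical to the paper's $\Xi_t$/$L_t$ formulation, and you correctly identify the contraction of the lifted operator as the only genuinely delicate step.
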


The proof of Theorem \ref{theorem:q_learning} requires several intermediate results using stochastic approximation theory \citep{bertsekas2012approximate}. We first make use of the following result:
\begin{theorem}[Theorem 1, pg 4 in \citep{jaakkola1994convergence}]\label{thrm:jaakkola}
Let $\Xi_t(s)$ be a random process that takes values in $\mathbb{R}^n$ and given by the following:
\begin{align}
    \Xi_{t+1}(s)=\left(1-\alpha_t(s)\right)\Xi_{t}(s)\alpha_t(s)L_t(s),
\end{align}
then $\Xi_t(s)$ converges to $0$ with probability $1$ under the following conditions:
\begin{itemize}
\item[i)] $0\leq \alpha_t\leq 1, \sum_t\alpha_t=\infty$ and $\sum_t\alpha^2_t<\infty$
\item[ii)] $\|\mathbb{E}[L_t|\mathcal{F}_t]\|\leq \gamma \|\Xi_t\|$, with $\gamma <1$;
\item[iii)] ${\rm Var}\left[L_t|\mathcal{F}_t\right]\leq c(1+\|\Xi_t\|^2)$ for some $c>0$.
\end{itemize}
\end{theorem}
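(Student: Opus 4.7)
The plan is a Lyapunov-type stochastic-approximation argument built on the Robbins-Siegmund almost-supermartingale convergence theorem. First I would split the driving term as $L_t = F_t + w_t$ with $F_t := \mathbb{E}[L_t\mid\mathcal{F}_t]$ and $w_t := L_t - F_t$, a martingale difference with respect to $\mathcal{F}_t$. The recursion then reads $\Xi_{t+1} = (1-\alpha_t)\Xi_t + \alpha_t F_t + \alpha_t w_t$: hypothesis (ii) forces the $F_t$-part to act as a contraction with effective factor close to $1-\alpha_t(1-\gamma)$, while hypothesis (iii) keeps the noise $w_t$ small in mean-square. This is the classical setup in which Robbins-Siegmund delivers both convergence and a summability condition in one stroke.

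Taking $V_t := \|\Xi_t\|^2$ as the Lyapunov function, I would expand
\[
\|\Xi_{t+1}\|^2 = (1-\alpha_t)^2 V_t + 2\alpha_t(1-\alpha_t)\langle \Xi_t, L_t\rangle + \alpha_t^2\|L_t\|^2,
\]
and condition on $\mathcal{F}_t$. Cauchy-Schwarz together with (ii) yields $\langle \Xi_t, F_t\rangle \leq \|\Xi_t\|\|F_t\| \leq \gamma V_t$ (using $1-\alpha_t \ge 0$ from (i) so the cross-term coefficient is nonnegative), and the decomposition $\mathbb{E}[\|L_t\|^2\mid\mathcal{F}_t] = \|F_t\|^2 + \Var[L_t\mid\mathcal{F}_t] \leq \gamma^2 V_t + c(1+V_t)$ follows from (ii)-(iii). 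Collecting terms and recognising the perfect square $[(1-\alpha_t)+\gamma\alpha_t]^2 = [1-(1-\gamma)\alpha_t]^2$ yields
\[
\mathbb{E}[V_{t+1}\mid\mathcal{F}_t] \leq \bigl(1 - 2(1-\gamma)\alpha_t + K\alpha_t^2\bigr) V_t + c\,\alpha_t^2,
\]
with the explicit constant $K := (1-\gamma)^2 + c$. Setting $B_t := \bigl(2(1-\gamma)\alpha_t - K\alpha_t^2\bigr) V_t$ and $C_t := c\,\alpha_t^2$, hypothesis (i) gives $\sum_t C_t < \infty$ and (since $\sum_t \alpha_t^2 < \infty$ forces $\alpha_t\to 0$) for all $t$ sufficiently large $2(1-\gamma)\alpha_t - K\alpha_t^2 \geq (1-\gamma)\alpha_t > 0$, which is the nonnegativity needed to apply Robbins-Siegmund.

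Applying the Robbins-Siegmund theorem to $\mathbb{E}[V_{t+1}\mid\mathcal{F}_t]\leq V_t - B_t + C_t$ (past the index at which nonnegativity kicks in, which is harmless) then delivers two conclusions simultaneously: $V_t$ converges almost surely to a finite nonnegative random limit $V_\infty$, and $\sum_t B_t < \infty$ almost surely. Since $B_t \geq (1-\gamma)\alpha_t V_t$ eventually and $\sum_t \alpha_t = \infty$ by (i), the summability $\sum_t \alpha_t V_t < \infty$ combined with the a.s.\ convergence $V_t\to V_\infty$ forces $V_\infty = 0$. Hence $\Xi_t \to 0$ almost surely.

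The main obstacle I foresee is the asynchronous per-state indexing implicit in writing $\Xi_t(s)$ and $\alpha_t(s)$: different coordinates may be updated along different subsequences, and $\alpha_t(s)$ is allowed to vanish whenever state $s$ is not visited at time $t$, whereas the synchronous argument above implicitly treats $\alpha_t$ as a scalar acting on the whole vector. To bridge this gap one has to invoke the ergodicity hypothesis A.1 (guaranteeing every state is visited infinitely often so that $\sum_t \alpha_t(s) = \infty$ holds state-wise) and then either run Robbins-Siegmund componentwise along the visit subsequences or else replace $\|\cdot\|$ by a weighted sup-norm compatible with the contraction in (ii) so that the Lyapunov step survives coordinate-by-coordinate updates. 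This coordinate bookkeeping, rather than the Lyapunov computation itself, is where the real technical burden sits.
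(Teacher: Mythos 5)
The paper does not actually prove this statement: Theorem \ref{thrm:jaakkola} is imported by citation from Jaakkola et al.\ (1994) and used as a black box in the proof of Theorem \ref{theorem:q_learning}, so there is no in-paper proof to compare against. Judged on its own terms, your Lyapunov/Robbins--Siegmund argument is correct and complete \emph{only} for the synchronous case in which the norm in hypotheses (ii)--(iii) is the Euclidean norm: the expansion of $\|\Xi_{t+1}\|^2$, the bound $\langle \Xi_t, F_t\rangle \le \|\Xi_t\|\,\|F_t\| \le \gamma\|\Xi_t\|^2$, the one-step inequality with $K=(1-\gamma)^2+c$, and the Robbins--Siegmund conclusion are all drawn correctly under that reading.

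The gap is that the theorem, as stated in Jaakkola et al.\ and as actually invoked here, has the contraction in a (weighted) maximum norm --- the paper's own verification of (ii) ends with $\gamma\|\bmQ_t-\bm{\hat{Q}}\|_\infty=\gamma\|\Xi_t\|_\infty$ --- and is applied to an asynchronous update in which $\alpha_t(s)=0$ at unvisited components. Your central step $\langle \Xi_t, F_t\rangle \le \|\Xi_t\|\,\|F_t\|$ is Cauchy--Schwarz for the inner-product norm and has no analogue for $\|\cdot\|_\infty$; the best available substitute, $\langle \Xi_t,F_t\rangle \le \|\Xi_t\|_1\|F_t\|_\infty \le n\gamma\|\Xi_t\|_\infty^2$, introduces a dimension factor $n\gamma$ that is typically larger than $1$ and destroys the negative drift. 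Neither of the repairs you sketch closes this: a componentwise Robbins--Siegmund fails because (ii) bounds $|F_t(s)|$ by $\gamma\max_{s'}|\Xi_t(s')|$ rather than by $\gamma|\Xi_t(s)|$, so the component with the currently small error has no usable drift of its own, while $\|\Xi_t\|_\infty^2$ is not induced by an inner product and so does not admit your quadratic expansion at all. The actual proof of this theorem proceeds by a genuinely different route: decompose $\Xi_t$ into a pure-noise process driven by $w_t$ (driven to zero using (i) and (iii)) and a contraction process driven by $F_t$, establish a.s.\ boundedness of $\|\Xi_t\|_\infty$, and run an induction over epochs in which the sup-norm bound contracts by a factor $\gamma+\epsilon<1$ each epoch. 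That shrinking-bounds induction is the content you would need to supply; it is not coordinate bookkeeping layered on top of your Lyapunov inequality.
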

Therefore, Theorem \ref{thrm:jaakkola} serves as an important stepping stone to prove the convergence of the procedure in \eqref{q_learning_update} - in particular, we must show that for our construction of the Bellman operator and the procedure outline in \eqref{q_learning_update}, the conditions oof Theorem \ref{thrm:jaakkola} hold. 
To prove the result, we show (i) - (iii) hold. Condition (i) holds by choice of learning rate. It therefore remains to prove (ii) - (iii). We first prove (ii). For this, we consider our variant of the Q-learning update rule:
\begin{align*}
\bmQ_{t+1}&(s_t,a_t,b_t)=\bmQ_t(s_t,a_t,b_t)
\\&\begin{aligned}+\boldsymbol{\alpha}_t(s_t,\boldsymbol{a}_t)\Big[\min\Bigg(\max\left\{\bm\cM_1\bmQ(s_t,a_t,b_t), \cR(s_t,\boldsymbol{0})+\gamma\bmQ_t(s_{t+1},\bm0)\right\}&\\,\bm\cM_2\bmQ(s_t,a_t,b_t)\Bigg)-\bmQ_t(s_t,a_t,b_t)\Big]&,
\end{aligned}
\end{align*}
After subtracting $\bm{\hat{Q}}(s_t,a_t,b_t)$ from both sides and some manipulation we obtain that:
\begin{align*}
&\Xi_{t+1}(s_t,a_t,b_t)
\\&=(1-\alpha_t(s_t,a_t,b_t))\Xi_{t}(s_t,a_t,b_t)
\\&\begin{aligned}+\alpha_t(s_t,a_t,b_t)\Big[\min\Bigg(\max\left\{\bm\cM_1\bmQ_t(s_t,a_t,b_t), \cR(s_t,\bm0)+\gamma \bmQ_t(s_{t+1},\bm0)\right\}&\\,\bm\cM_2\bmQ(s_t,a,b)\Bigg)-\bm{\hat{Q}}(s_t,a_t,b_t)\Big]&,
\end{aligned}
\end{align*}
where $\Xi_r(s_t,a_t,b_t):=\bmQ_r(s_t,a_t,b_t)-\bm{\hat{Q}}(s_t,a_t,b_t)$.

Let us now define by 
\begin{align*}
L_t(s_{\tau_k},a,b):=\min\left(\max\left\{\bm\cM_1\bmQ_t(s_t,a_t,b_t), \cR(s_t,\bm0)+\gamma \bmQ_t(s_{t+1},\bm0)\right\},\bm\cM_2\bmQ(s_t,a,b)\right)&\\-\bm{\hat{Q}}(s_t,a,b)&.
\end{align*}
Then
\begin{align}
\Xi_{t+1}(s_t,a_t,b_t)=(1-\alpha_t(s_t,a_t,b_t))\Xi_{t}(s_t,a_t,b_t)+\alpha_t(s_t,a_t,b_t)L_t(s_{\tau_k},a,b).   
\end{align}

% Let $z_{\tau_k}\equiv s_{\tau_k}$
We now observe that
\begin{align}\nonumber
&\mathbb{E}\left[L_t(s_{\tau_k},a,b)|\mathcal{F}_t\right]
\\&\nonumber\begin{aligned}=\sum_{s'\in\mathcal{S}}P(s';a,s_{\tau_k})\min\Big(\max\left\{\bm\cM_1\bmQ_t(s_t,a_t,b_t), \cR(s_t,\bm0)+\gamma \bmQ_t(s_{t+1},\bm0)\right\}&\\,\bm\cM_2\bmQ(s_t,a,b)\Big)
-\bm{\hat{Q}}(s_{\tau_k},a)&
\end{aligned}
\\&= T \bmQ_t(s,\bma)-\bm{\hat{Q}}(s,\bma). \label{expectation_L}
\end{align}
Now, using the fixed point property that implies $\bm{\hat{Q}}=T \bm{\hat{Q}}$, we find that
\begin{align}\nonumber
    \mathbb{E}\left[L_t(s_{\tau_k},a,b)|\mathcal{F}_t\right]&=T \bmQ_t(s,\bma)-T \bm{\hat{Q}}(s,\bma)
    \\&\leq\left\|T \bmQ_t-T \bm{\hat{Q}}\right\|\nonumber
    \\&\leq \gamma\left\| \bmQ_t- \bm{\hat{Q}}\right\|_\infty=\gamma\left\|\Xi_t\right\|_\infty.
\end{align}
using the contraction property of $T$ established in Proposition \ref{lemma:bellman_contraction}. This proves (ii).

% Define by $\hat{F}(s_{\tau_k},a):=\left(\bm\cM_i\bm{Q}(s_{\tau_k}), \cR(s_{\tau_k},a,b)+\gamma\underset{a\in\cA}{\max}\underset{b\in\cB}{\min}\;\sum_{s'\in\mathcal{S}}P(s';a,b,s_{\tau_k})\bm{Q}(s',I(\tau_k))\right)$
We now prove iii), that is
\begin{align}
    {\rm Var}\left[L_t|\mathcal{F}_t\right]\leq c(1+\|\Xi_t\|^2).
\end{align}
Now by \eqref{expectation_L} we have that
\begin{align*}
  &{\rm Var}\left[L_t|\mathcal{F}_t\right]
  \\&\begin{aligned}= {\rm Var}\Big[\min\left(\max\left\{\bm\cM_1\bmQ_t(s_t,a_t,b_t), \cR(s_t,\bm0)+\gamma \bmQ_t(s_{t+1},\bm0)\right\},\bm\cM_2\bmQ(s_t,a_t,b_t)\right)&
  \\-\bm{\hat{Q}}(s_t,\bma)\Big]&
  \end{aligned}
  \\&\begin{aligned}= \mathbb{E}\Bigg[\Bigg(\min\left(\max\left\{\bm\cM_1\bmQ_t(s_t,a_t,b_t), \cR(s_t,\bm0)+\gamma \bmQ_t(s_{t+1},\bm0)\right\},\bm\cM_2\bmQ(s_t,a_t,b_t)\right)&  \\-\bm{\hat{Q}}(s_t,\bma)-\big(T \bmQ_t(s,\bma)-\bm{\hat{Q}}(s,\bma)\big)\Bigg)^2\Bigg]&
  \end{aligned}
      \\&\begin{aligned}= \mathbb{E}\Big[\Big(\min\left(\max\left\{\bm\cM_1\bmQ_t(s_t,a_t,b_t), \cR(s_t,\bm0)+\gamma \bmQ_t(s_{t+1},\bm0)\right\},\bm\cM_2\bmQ(s_t,a_t,b_t)\right)&\\-T \bmQ_t(s,\bma)\Big)^2\Big]&
      \end{aligned}
    \\&= {\rm Var}\left[\min\left(\max\left\{\bm\cM_1\bmQ_t(s_t,a_t,b_t), \cR(s_t,\bm0)+\gamma \bmQ_t(s_{t+1},\bm0)\right\},\bm\cM_2\bmQ(s_t,a_t,b_t)\right)\right]
    \\&\leq c(1+\|\Xi_t\|^2),
\end{align*}
for some $c>0$ where the last line follows due to the boundedness of $Q$ (which follows from Assumptions 2 and 4). This concludes the proof of the theorem.
\subsection{Convergence using Linear Function Approximators}
In reinforcement learning, an important consideration is the use of function approximators for the functions being learned during the learning process. Function approximators enable parameterisation of these key functions after which the parameters can be updated according to an RL update rule \citep{sutton1999policy}. Neural networks are an important case of function approximators. In what follows, we study our learning framework with linear function approximators. Linear function approximators do not offer as powerful approximation capabilities since the family of neural network function approximators is dense in the function space (the universal approximation theorem \citep{lu2020universal,xu2014reinforcement}. However, despite their deficiencies, linear function approximators are an important class due to their simplicity and convexity properties that do not suffer from issues such as convergence to suboptimal stationary points \citep{jin2020provably,busoniu2017reinforcement}. Moreover, the following analysis serves as an important step for proving analogous results with other function approximator classes.
\begin{definition}
For any test function $\psi\in L_2$ , a projection operator $\Pi$ acting $\psi$ is defined by the following 
\begin{align*}
\Pi \psi:=\underset{\bar{\psi}\in\{\Phi r|r\in\mathbb{R}^p\}}{\arg\min}\left\|\bar{\psi}-\psi\right\|. 
\end{align*}    
\end{definition}
\begin{theorem}\label{primal_convergence_theorem}
Algorithm 1 converges to the stable point of  $\mathcal{G}$, 
moreover, given a set of linearly independent basis functions $\Phi=\{\phi_1,\ldots,\phi_p\}$ with $\phi_k\in L_2,\forall k$. Algorithm 1 converges to a limit point $\hat{r}\in\mathbb{R}^p$ which is the unique solution to  $\Pi \mathfrak{F} (\Phi \hat{r})=\Phi \hat{r}$ where
    $\mathfrak{F}\Lambda:=\cR+\gamma P \max\left[\min\{\bm\cM_1\Lambda,\Lambda\},\bm\cM_2\Lambda\right]$. Moreover, $\hat{r}$ satisfies the following:
    \begin{align}
    \left\|\Phi \hat{r} - \bm{\hat{Q}}\right\|\leq (1-\gamma^2)^{-1/2}\left\|\Pi \bm{\hat{Q}}-\bm{\hat{Q}}\right\|.
    \end{align}
\end{theorem}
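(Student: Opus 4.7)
The plan is to follow the classical Tsitsiklis--Van Roy analysis of projected Bellman iteration with linear function approximation, which factors into three natural stages: (i) establishing that $\mathfrak{F}$ is a $\gamma$-contraction in the weighted $L_2$ norm induced by the stationary distribution, (ii) upgrading this to contractivity of the projected operator $\Pi\mathfrak{F}$ so that Banach's theorem yields a unique fixed point $\Phi\hat{r}$, and (iii) deriving the error bound via the Pythagorean decomposition. The first claim of the theorem --- that Algorithm 1 itself converges to this limit --- follows from the same stochastic-approximation argument used in the proof of Theorem \ref{theorem:q_learning}, now applied with $\Pi\mathfrak{F}$ in place of the tabular Bellman operator; the contraction established below supplies the analogue of condition (ii) in Theorem \ref{thrm:jaakkola}.

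For stage (i), I would argue $\mathfrak{F}$ inherits contraction from the same structural features exploited for $T$ in Theorem \ref{theorem:joint-sol}. The intervention operators $\bm\cM_1$ and $\bm\cM_2$ are pointwise nonexpansive since $|\max_a f(a)-\max_a g(a)|\leq\max_a|f(a)-g(a)|$ and similarly for the $\min$, so the composition $\max[\min\{\bm\cM_1\Lambda,\Lambda\},\bm\cM_2\Lambda]$ is a nonexpansive map of nonexpansive maps, and multiplication by $\gamma P$ then yields a $\gamma$-contraction in sup-norm. To pass to the $L_2(\mu)$ norm with $\mu$ the stationary distribution guaranteed by Assumption A.1, one applies Jensen's inequality to conclude $\|P\Lambda\|_\mu\leq\|\Lambda\|_\mu$, which gives $\|\mathfrak{F}\Lambda_1-\mathfrak{F}\Lambda_2\|_\mu \leq \gamma\|\Lambda_1-\Lambda_2\|_\mu$.

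For stage (ii), since $\Pi$ is orthogonal projection in the $\mu$-weighted inner product, it is nonexpansive: $\|\Pi x-\Pi y\|_\mu\leq\|x-y\|_\mu$. Composing yields $\|\Pi\mathfrak{F}\Lambda_1-\Pi\mathfrak{F}\Lambda_2\|_\mu\leq\gamma\|\Lambda_1-\Lambda_2\|_\mu$, so restricted to the finite-dimensional subspace $\{\Phi r:r\in\mathbb{R}^p\}$ the map $\Pi\mathfrak{F}$ is a $\gamma$-contraction of the subspace into itself, and Banach's fixed-point theorem delivers the unique $\hat{r}\in\mathbb{R}^p$ with $\Pi\mathfrak{F}(\Phi\hat{r})=\Phi\hat{r}$. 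For the error bound in stage (iii), the saddle-point property of $\bm{\hat{Q}}$ proven in Theorem \ref{thrm:minimax-exist} makes the orderings of $\max$ and $\min$ agree on $\bm{\hat{Q}}$, so $\mathfrak{F}\bm{\hat{Q}}=\bm{\hat{Q}}$. Pythagoras in the $\mu$-norm then gives
\begin{align*}
\|\Phi\hat{r}-\bm{\hat{Q}}\|^2_\mu = \|\Phi\hat{r}-\Pi\bm{\hat{Q}}\|^2_\mu+\|\Pi\bm{\hat{Q}}-\bm{\hat{Q}}\|^2_\mu,
\end{align*}
and the identities $\Phi\hat{r}=\Pi\mathfrak{F}\Phi\hat{r}$, $\Pi\bm{\hat{Q}}=\Pi\mathfrak{F}\bm{\hat{Q}}$ together with nonexpansiveness of $\Pi$ and contractivity of $\mathfrak{F}$ yield $\|\Phi\hat{r}-\Pi\bm{\hat{Q}}\|_\mu\leq\gamma\|\Phi\hat{r}-\bm{\hat{Q}}\|_\mu$. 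Substituting and solving for $\|\Phi\hat{r}-\bm{\hat{Q}}\|_\mu$ gives the $(1-\gamma^2)^{-1/2}$ bound.

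The main obstacle I anticipate is stage (i): verifying the contraction carefully in the weighted norm despite the nested $\max$--$\min$--intervention structure of $\mathfrak{F}$, and separately justifying that the fixed point of this particular operator ordering genuinely coincides with $\bm{\hat{Q}}$. The latter hinges on the minimax equality from Theorem \ref{thrm:minimax-exist} --- at the saddle point the two operator orderings commute --- and without it the argument for stage (iii) would only bound the distance to the fixed point of $\mathfrak{F}$ rather than to $\bm{\hat{Q}}$ itself. A secondary subtlety is ensuring that the sampling distribution used by Algorithm~1 is close enough to $\mu$ (ergodicity under Assumption A.1) to legitimise the choice of weighted norm when invoking the stochastic-approximation machinery for the convergence claim.
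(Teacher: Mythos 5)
Your stages (ii) and (iii) coincide with the paper's route: contractivity of $\Pi\mathfrak{F}$ by composing nonexpansiveness of the projection with contractivity of $\mathfrak{F}$, followed by the Pythagorean decomposition to get the $(1-\gamma^2)^{-1/2}$ bound, is exactly Lemma \ref{projection_F_contraction_lemma}. Your stage (i) is also in the spirit of the paper's contraction lemma for $\mathfrak{F}$, but ``a nonexpansive map of nonexpansive maps'' is too quick for the cross comparisons between $\bm\cM_1\Lambda$, $\bm\cM_2\Lambda$ and the no-action branch $\Lambda$ that arise when the argmax/argmin branches differ between $\Lambda_1$ and $\Lambda_2$: these are precisely the cases in which the minimal boundedness of the cost $c$ enters (the bound \eqref{m_bound_q_twice}, inherited from cases (iii)--(iv) of Proposition \ref{lemma:bellman_contraction}), so they require the explicit case analysis rather than a generic composition argument. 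Your flag that $\mathfrak{F}\bm{\hat{Q}}=\bm{\hat{Q}}$ needs justification via the minimax equality is a fair one; the paper uses this identity in Lemma \ref{value_difference_Q_difference} without much comment.

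The genuine gap is in the first claim, convergence of Algorithm 1. You propose to recycle the stochastic-approximation argument of Theorem \ref{theorem:q_learning}, i.e.\ the scheme of Theorem \ref{thrm:jaakkola}. That scheme applies to componentwise updates $\Xi_{t+1}(s)=(1-\alpha_t(s))\Xi_t(s)+\alpha_t(s)L_t(s)$ indexed by state--action pairs, with a sup-norm contraction on the conditional mean of $L_t$. Under linear function approximation the iteration lives in parameter space, $r_{t+1}=r_t+\gamma_t\,\Xi(w_t,r_t)$, and is not of that form: there is no per-coordinate contraction to exploit, and the mean field $r\mapsto\mathbb{E}[\Xi(w_0,r)]$ is not a contraction on $\mathbb{R}^p$ in any norm you have established. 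The paper instead invokes the ODE-method result of Theorem \ref{theorem:stoch.approx.}, whose decisive hypothesis is the negative-drift condition $(r-\hat r)\boldsymbol{\Xi}(r)<0$ for all $r\neq\hat r$. Establishing this is the content of Lemma \ref{iteratation_property_lemma}: each component of the mean field is written as the inner product $\left\langle\phi_k,\mathfrak{F}\Phi r-\Phi r\right\rangle$, the component orthogonal to the span of $\Phi$ is discarded, and the contraction of $\Pi\mathfrak{F}$ yields $\left\langle\Phi r-\Phi\hat r,\Pi\mathfrak{F}\Phi r-\Phi r\right\rangle\leq(\gamma-1)\left\|\Phi r-\Phi\hat r\right\|^2<0$. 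Without this step, together with the growth and Lipschitz estimates on $\Xi$ that the paper verifies as conditions 4--7, the convergence of the parameter iterates to $\hat r$ does not follow from anything in your outline.
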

The theorem establishes the convergence of Algorithm 1 to minimax equilibrium of $\mathcal{G}$ with the use of linear function approximators. The second statement bounds the proximity of the convergence point by the smallest approximation error that can be achieved given the choice of basis functions.

The theorem is proven using a set of results that we now establish. 
 First, we prove the following bound holds:

\begin{lemma}
For any $\bmQ\in L_2$ we have that
\begin{align}
    \left\|\mathfrak{F}\bmQ-\mathfrak{F}\bmQ'\right\|\leq \gamma\left\|\bmQ-\bmQ'\right\|,
\end{align}
so that the operator $\mathfrak{F}$ is a contraction.
\end{lemma}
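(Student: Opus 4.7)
The plan is to exploit the fact that the reward term $\cR$ is independent of $\bmQ$ and cancels in the difference $\mathfrak{F}\bmQ - \mathfrak{F}\bmQ'$, leaving a composition of pointwise non-expansive operations wrapped inside the transition factor $\gamma P$. Since $P$ is a Markov kernel, $\|Pf\|\leq \|f\|$ in the relevant norm, so the full factor $\gamma P$ contributes a contraction constant of $\gamma$, and it remains to show that the inner pointwise operator $\Psi(\bmQ):=\max\bigl[\min\{\bm\cM_1\bmQ,\bmQ\},\bm\cM_2\bmQ\bigr]$ is non-expansive in $\bmQ$.

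First I would write
\begin{align*}
\mathfrak{F}\bmQ-\mathfrak{F}\bmQ' \;=\; \gamma P\bigl(\Psi(\bmQ)-\Psi(\bmQ')\bigr),
\end{align*}
and then apply the standard elementary inequalities
$|\max(a,b)-\max(c,d)|\leq\max(|a-c|,|b-d|)$ and
$|\min(a,b)-\min(c,d)|\leq\max(|a-c|,|b-d|)$
to peel off the outer $\max$ and the inner $\min$. This reduces the pointwise estimate to the three quantities $|\bm\cM_1\bmQ-\bm\cM_1\bmQ'|$, $|\bmQ-\bmQ'|$, and $|\bm\cM_2\bmQ-\bm\cM_2\bmQ'|$.

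Second, I would establish that the intervention operators $\bm\cM_1$ and $\bm\cM_2$ are themselves non-expansive. Since $\bm\cM_i\bmQ$ is a pointwise max (resp.\ min) over actions of an affine functional in the value function associated with $\bmQ$, with reward and cost terms that do not depend on $\bmQ$ and hence cancel in the difference, the inequality $|\max_a f(a)-\max_a g(a)|\leq \max_a|f(a)-g(a)|$ leaves only the term $\gamma\sum_{s'} P(s';a_\tau,b,s)\bigl(v(s')-v'(s')\bigr)$. Bounding this by $\gamma \|v-v'\|$ and absorbing the extra $\gamma$ is not needed here; what is needed is the cruder bound $|\bm\cM_i\bmQ-\bm\cM_i\bmQ'|\leq \|\bmQ-\bmQ'\|$ (the factor $\gamma$ from inside $\bm\cM_i$ is already a gain), which follows because the value $v$ built from $\bmQ$ satisfies $|v(s)-v'(s)|\leq \|\bmQ-\bmQ'\|$ by the same max/min non-expansion argument applied one level down.

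Combining these pieces gives $|\Psi(\bmQ)(s,\bma)-\Psi(\bmQ')(s,\bma)|\leq \|\bmQ-\bmQ'\|$ pointwise, and then the external $\gamma P$ yields
\begin{align*}
\bigl\|\mathfrak{F}\bmQ-\mathfrak{F}\bmQ'\bigr\| \;\leq\; \gamma\,\bigl\|\Psi(\bmQ)-\Psi(\bmQ')\bigr\| \;\leq\; \gamma\,\|\bmQ-\bmQ'\|,
\end{align*}
which is the desired contraction. The main obstacle is the bookkeeping in the second step: one has to be careful that the expressions inside $\bm\cM_1$ and $\bm\cM_2$, which are stated in terms of the derived value $v$, inherit non-expansiveness from $\bmQ$, and that the same maximising/minimising actions need not be used on both sides of the inequality (which is precisely what the $|\max-\max|\leq \max|\cdot|$ trick circumvents). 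Once this is handled cleanly, the rest of the argument is routine.
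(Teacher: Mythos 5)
Your proof is correct and follows essentially the same route as the paper's: cancel $\cR$, pull the factor $\gamma$ out of $\gamma P$ using the non-expansiveness of $P$, and reduce the inner $\max[\min\{\bm\cM_1\bmQ,\bmQ\},\bm\cM_2\bmQ]$ difference via the elementary $|\max-\max|$ and $|\min-\min|$ inequalities to differences of $\bm\cM_i\bmQ-\bm\cM_i\bmQ'$ and $\bmQ-\bmQ'$. Your observation that mere non-expansiveness of $\bm\cM_i$ suffices (rather than the $\gamma$-contraction the paper imports from its earlier proposition) is accurate, since the term $\|\bmQ-\bmQ'\|$ enters the final max anyway; your like-with-like pairing of terms in the max/min inequalities is also the cleaner form of the estimate compared with the cross-term expansion the paper writes down.
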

\begin{proof}
Now, we first note that by result iv) in the proof of Proposition \ref{lemma:bellman_contraction}, we deduced that for any $\bmQ, v\in L_2$ we have that
\begin{align*}
    \left\|\bm\cM_i\bmQ-\left[ \cR(\cdot,\boldsymbol{0})+\gamma\mathcal{P}^{\boldsymbol{0}}v'\right]\right\|\leq \gamma\left\|v-v'\right\|,\quad \forall i \in\{1,2\}.
\end{align*} 

A trivial modification reveals that we can deduce that for any $\bmQ,\bm{\hat{Q}}\in L_2$:
    $\left\|\bm\cM_i\bmQ-\bm{\hat{Q}}\right\|\leq 
 \gamma\left\|\bmQ-\bm{\hat{Q}}\right\|$. Using the contraction property of $\bm\cM_i$ and results i)-iv) of Proposition \ref{lemma:bellman_contraction}, we readily deduce the following bound:
\begin{align}\max\left\{\left\|\bm\cM_i\bmQ-\bm{\hat{Q}}\right\|,\left\|\bm\cM_i\bmQ-\bm\cM_j\bm{\hat{Q}}\right\|\right\}\leq \gamma\left\|\bmQ-\bm{\hat{Q}}\right\|,\quad \forall i,j \in\{1,2\}.
\label{m_bound_q_twice}
\end{align}
    
We now observe that $\mathfrak{F}$ is a contraction. Indeed, since for any $\bmQ,\bmQ'\in L_2$ we have that:
\begin{align*}
&\left\|\mathfrak{F}\bmQ-\mathfrak{F}\bmQ'\right\|
\\&=\left\|\cR+\gamma P \max\left[\min\{\bm\cM_1\bmQ,\bmQ\},\bm\cM_2\bmQ\right]-\left(\cR+\gamma P \max\left[\min\{\bm\cM_1\bmQ',\bmQ'\},\bm\cM_2\bmQ'\right]\right)\right\|
\\&=\gamma\left\| P \max\left[\min\{\bm\cM_1\bmQ,\bmQ\},\bm\cM_2\bmQ\right]-P \max\left[\min\{\bm\cM_1\bmQ',\bmQ'\},\bm\cM_2\bmQ'\right]\right\|
\\&\leq\gamma\left\| P\right\|\left\| \max\left[\min\{\bm\cM_1\bmQ,\bmQ\},\bm\cM_2\bmQ\right]-\max\left[\min\{\bm\cM_1\bmQ',\bmQ'\},\bm\cM_2\bmQ'\right]\right\|
\\&\begin{aligned}\leq\gamma\max\Bigg\{\left\|\min\{\bm\cM_1\bmQ,\bmQ\}-\min\{\bm\cM_1\bmQ',\bmQ'\}\right\|,\left\|\bm\cM_2\bmQ-\min\{\bm\cM_1\bmQ',\bmQ'\}\right\|&
\\,\left\|\bm\cM_2\bmQ'-\min\{\bm\cM_1\bmQ,\bmQ\}\right\|,\left\|\bm\cM_2\bmQ'-\bm\cM_1\bmQ\right\|\Bigg\}&
\end{aligned}
\\&\leq\gamma \max\left\{\left\|\bmQ-\bmQ'\right\|,\gamma\left\|\bmQ-\bmQ'\right\|\right\}
\\&=\gamma \left\|\bmQ-\bmQ'\right\|
\end{align*}
using the Cauchy-Schwarz inequality, \eqref{m_bound_q_twice} and again using the non-expansiveness of $P$.
\end{proof}

We next show that the following two bounds hold:
\begin{lemma}\label{projection_F_contraction_lemma}
For any $\bmQ\in\mathcal{V}$ we have that
\begin{itemize}
    \item[i)] 
$\qquad\qquad
    \left\|\Pi \mathfrak{F}\bmQ-\Pi \mathfrak{F}\bar{\bmQ}\right\|\leq \gamma\left\|\bmQ-\bar{\bmQ}\right\|$,
    \item[ii)]$\qquad\qquad\left\|\Phi \hat{r} - \bm{\hat{Q}}\right\|\leq \frac{1}{\sqrt{1-\gamma^2}}\left\|\Pi \bm{\hat{Q}} - \bm{\hat{Q}}\right\|$. 
\end{itemize}
\end{lemma}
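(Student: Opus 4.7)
The plan is to handle the two bounds separately; part (i) is a one-line composition argument, while part (ii) is the standard Pythagorean decomposition used for projected fixed-point error bounds.

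For part (i), the strategy is to exploit the fact that the projection operator $\Pi$ is an orthogonal projection onto a closed subspace of $L_2$ and is therefore nonexpansive, i.e.\ $\|\Pi\psi - \Pi\psi'\| \leq \|\psi - \psi'\|$ for any $\psi,\psi' \in L_2$. Combining this with the preceding lemma, which established that $\mathfrak{F}$ is a $\gamma$-contraction on $L_2$, the composition $\Pi\mathfrak{F}$ satisfies
\begin{align*}
\left\|\Pi\mathfrak{F}\bmQ - \Pi\mathfrak{F}\bar\bmQ\right\| \leq \left\|\mathfrak{F}\bmQ - \mathfrak{F}\bar\bmQ\right\| \leq \gamma\left\|\bmQ - \bar\bmQ\right\|,
\end{align*}
which is the claim. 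No obstacle here beyond invoking the two facts.

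For part (ii), I would use the Pythagorean identity coming from the orthogonality of the projection. Since $\Phi\hat r$ lies in the span of $\Phi$ and the residual $\Pi\bm{\hat Q} - \bm{\hat Q}$ is orthogonal to that span, I get
\begin{align*}
\left\|\Phi\hat r - \bm{\hat Q}\right\|^2 = \left\|\Phi\hat r - \Pi\bm{\hat Q}\right\|^2 + \left\|\Pi\bm{\hat Q} - \bm{\hat Q}\right\|^2.
\end{align*}
Next, I use the fixed-point characterisations: by construction of $\hat r$, one has $\Phi\hat r = \Pi\mathfrak{F}(\Phi\hat r)$, and by Theorem~\ref{theorem:q_learning} together with the Bellman equation, $\bm{\hat Q}$ is a fixed point of $\mathfrak{F}$, so $\Pi\bm{\hat Q} = \Pi\mathfrak{F}\bm{\hat Q}$. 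Then part (i) gives
\begin{align*}
\left\|\Phi\hat r - \Pi\bm{\hat Q}\right\| = \left\|\Pi\mathfrak{F}(\Phi\hat r) - \Pi\mathfrak{F}\bm{\hat Q}\right\| \leq \gamma\left\|\Phi\hat r - \bm{\hat Q}\right\|.
\end{align*}
Substituting back into the Pythagorean identity yields $(1-\gamma^2)\|\Phi\hat r - \bm{\hat Q}\|^2 \leq \|\Pi\bm{\hat Q} - \bm{\hat Q}\|^2$, from which the claim follows by taking square roots (noting $\gamma < 1$ so $1-\gamma^2 > 0$).

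The main delicate point, rather than any hard calculation, is making sure the inner product structure used for Pythagoras is compatible with the norm being used for the contraction of $\mathfrak{F}$ --- both must be the same $L_2$ norm (with respect to the appropriate invariant measure, whose existence is guaranteed by the ergodicity assumption A.1), so that $\Pi$ is genuinely an orthogonal projection and the contraction factor $\gamma$ from the preceding lemma carries through. Assuming this compatibility (which the statement of the lemma and the definition of $\Pi$ presuppose), the rest is a mechanical assembly of the three ingredients: nonexpansiveness of $\Pi$, contraction of $\mathfrak{F}$, and the fixed-point properties of $\Phi\hat r$ and $\bm{\hat Q}$.
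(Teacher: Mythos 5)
Your proposal is correct and follows essentially the same route as the paper: part (i) is the composition of the non-expansiveness of $\Pi$ with the $\gamma$-contraction of $\mathfrak{F}$, and part (ii) is the same Pythagorean decomposition combined with the fixed-point identities $\Phi\hat{r}=\Pi\mathfrak{F}(\Phi\hat{r})$ and $\mathfrak{F}\bm{\hat{Q}}=\bm{\hat{Q}}$, yielding $(1-\gamma^2)\|\Phi\hat{r}-\bm{\hat{Q}}\|^2\leq\|\Pi\bm{\hat{Q}}-\bm{\hat{Q}}\|^2$. Your explicit remark that the orthogonality of $\Pi$ and the contraction of $\mathfrak{F}$ must be taken with respect to the same weighted $L_2$ norm is a point the paper leaves implicit, but it does not change the argument.
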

\begin{proof}
The first result is straightforward since as $\Pi$ is a projection it is non-expansive and hence:
\begin{align*}
    \left\|\Pi \mathfrak{F}\bmQ-\Pi \mathfrak{F}\bar{\bmQ}\right\|\leq \left\| \mathfrak{F}\bmQ-\mathfrak{F}\bar{\bmQ}\right\|\leq \gamma \left\|\bmQ-\bar{\bmQ}\right\|,
\end{align*}
using the contraction property of $\mathfrak{F}$. This proves i). For ii), we note that by the orthogonality property of projections we have that $\left\langle\Phi \hat{r} - \Pi \bm{\hat{Q}},\Phi \hat{r} - \Pi \bm{\hat{Q}}\right\rangle=0$, hence we observe that:
\begin{align*}
    \left\|\Phi \hat{r} - \bm{\hat{Q}}\right\|^2&\leq \left\|\Phi \hat{r} - \Pi \bm{\hat{Q}}\right\|^2+\left\|\hat{\bmQ} - \Pi \bm{\hat{Q}}\right\|^2
\\&=\left\|\Pi \mathfrak{F}\Phi \hat{r} - \Pi \bm{\hat{Q}}\right\|^2+\left\|\hat{\bmQ} - \Pi \bm{\hat{Q}}\right\|^2
\\&\leq\left\|\mathfrak{F}\Phi \hat{r} -  \bm{\hat{Q}}\right\|^2+\left\|\hat{\bmQ} - \Pi \bm{\hat{Q}}\right\|^2
\\&=\left\|\mathfrak{F}\Phi \hat{r} -  \mathfrak{F}\bm{\hat{Q}}\right\|^2+\left\|\hat{\bmQ}  - \Pi \bm{\hat{Q}}\right\|^2
\\&\leq\gamma^2\left\|\Phi \hat{r} -  \bm{\hat{Q}}\right\|^2+\left\|\hat{\bmQ} - \Pi \bm{\hat{Q}}\right\|^2,
\end{align*}
after which we readily deduce the desired result.
\end{proof}

\begin{lemma}
Define  the operator $H$ by the following: 
\\\begin{center}
$
  H\bm{Q}(s,a,b)=  \begin{cases}
			\bm\cM_1Q_1(s,a), & \text{if $\bm\cM_1Q_1(s,a)>\Phi \hat{r}>-\bm\cM_2Q_2(s,b)$,}\\
   			\bm\cM_2Q_2(s,a), & \text{if $\bm\cM_2Q_2(s,a)>\Phi \hat{r}>-\bm\cM_1Q_1(s,b)$,}\\
            \bm{Q}(s,\bm0), & \text{otherwise},
		 \end{cases}$
   \end{center}
where  we define $\tilde{\mathfrak{F}}$ by: $
    \tilde{\mathfrak{F}}\bmQ:=\cR +\gamma PH\bmQ$.

For any $\bmQ,\bar{\bmQ}\in L_2$ we have that
\begin{align}
    \left\|\tilde{\mathfrak{F}}\bmQ-\tilde{\mathfrak{F}}\bar{\bmQ}\right\|\leq \gamma \left\|\bmQ-\bar{\bmQ}\right\|
\end{align}
and hence $\tilde{\mathfrak{F}}$ is a contraction mapping.
\end{lemma}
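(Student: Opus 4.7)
The plan is to reduce the contraction of $\tilde{\mathfrak{F}}$ to a non-expansiveness estimate for the selector $H$, and then verify that estimate by a pointwise case analysis against the common threshold $\Phi\hat{r}$. First, since $\cR$ is common to $\tilde{\mathfrak{F}}\bmQ$ and $\tilde{\mathfrak{F}}\bar{\bmQ}$, we have the identity $\tilde{\mathfrak{F}}\bmQ - \tilde{\mathfrak{F}}\bar{\bmQ} = \gamma P(H\bmQ - H\bar{\bmQ})$. The transition kernel $P$ is a stochastic operator and hence non-expansive on $L_2$, giving $\|\tilde{\mathfrak{F}}\bmQ - \tilde{\mathfrak{F}}\bar{\bmQ}\| \leq \gamma \|H\bmQ - H\bar{\bmQ}\|$. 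It therefore suffices to prove the auxiliary claim $\|H\bmQ - H\bar{\bmQ}\| \leq \|\bmQ - \bar{\bmQ}\|$.

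For the auxiliary claim I would perform a pointwise case analysis over the three branches defining $H$. The diagonal cases, in which both $H\bmQ$ and $H\bar{\bmQ}$ fall in the same branch, follow directly from the Lipschitz properties of each candidate: the $\bm\cM_1$-branch yields $|\bm\cM_1 Q_1 - \bm\cM_1 \bar{Q}_1| \leq \gamma\|\bmQ - \bar{\bmQ}\|$ by the bound \eqref{m_bound_q_twice} already proved in the preceding lemma, the $\bm\cM_2$-branch is symmetric, and the \emph{otherwise} branch gives $|\bmQ(s,\bm0) - \bar{\bmQ}(s,\bm0)| \leq \|\bmQ - \bar{\bmQ}\|$ trivially.

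The principal obstacle, and the step I expect to be the hardest, is the off-diagonal cases in which $\bmQ$ and $\bar{\bmQ}$ activate different branches of $H$. The decisive structural feature is that every branching condition compares $\bm\cM_i Q_i$ and $\bm\cM_i \bar{Q}_i$ to the \emph{same, $\bmQ$-independent} threshold $\Phi\hat{r}$; consequently at any interface the two selected values both straddle $\Phi\hat{r}$. For instance, if $H\bmQ = \bm\cM_1 Q_1 > \Phi\hat{r}$ while $H\bar{\bmQ} = \bar{\bmQ}(s,\bm0)$ because $\bm\cM_1 \bar{Q}_1 \leq \Phi\hat{r}$, I would pivot through $\bm\cM_1 \bar{Q}_1$ by writing
\begin{equation*}
|\bm\cM_1 Q_1 - \bar{\bmQ}(s,\bm0)| \leq |\bm\cM_1 Q_1 - \bm\cM_1 \bar{Q}_1| + |\bm\cM_1 \bar{Q}_1 - \bar{\bmQ}(s,\bm0)|,
\end{equation*}
controlling the first summand via the diagonal estimate \eqref{m_bound_q_twice}, and bounding the second --- where the main subtlety lies --- by careful bookkeeping of the triggering inequalities and the interplay between the three candidate functions at the branching interface. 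The remaining off-diagonal combinations follow analogously by symmetry.

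Finally, integrating the resulting pointwise bound over the state distribution lifts the estimate to the $L_2$ norm, yielding $\|H\bmQ - H\bar{\bmQ}\| \leq \|\bmQ - \bar{\bmQ}\|$, whence $\|\tilde{\mathfrak{F}}\bmQ - \tilde{\mathfrak{F}}\bar{\bmQ}\| \leq \gamma\|\bmQ - \bar{\bmQ}\|$ by the initial reduction, establishing that $\tilde{\mathfrak{F}}$ is a $\gamma$-contraction.
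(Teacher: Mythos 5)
Your opening reduction is exactly the paper's: peel off $\cR$, use the non-expansiveness of $P$ to get $\|\tilde{\mathfrak{F}}\bmQ-\tilde{\mathfrak{F}}\bar{\bmQ}\|\leq\gamma\|H\bmQ-H\bar{\bmQ}\|$, and then argue that $H$ is non-expansive by cases on the branches. The diagonal cases are also handled the same way. The divergence, and the genuine gap, is in the off-diagonal cases. Your pivot
$|\bm\cM_1 Q_1-\bar{\bmQ}(s,\bm0)|\leq|\bm\cM_1 Q_1-\bm\cM_1\bar{Q}_1|+|\bm\cM_1\bar{Q}_1-\bar{\bmQ}(s,\bm0)|$
leaves a second summand that is a function of $\bar{\bmQ}$ alone: it does not shrink as $\bmQ\to\bar{\bmQ}$ and so cannot be dominated by $\|\bmQ-\bar{\bmQ}\|$. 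The ``straddle $\Phi\hat{r}$'' idea does not rescue it, because the \emph{otherwise} branch returns $\bar{\bmQ}(s,\bm0)$, a quantity that never appears in any branching condition and is therefore not located relative to $\Phi\hat{r}$ at all; the interface argument that works for $\max$/$\min$ selectors (where the switching value is itself one of the compared quantities) has no analogue here. You flag this as ``the main subtlety'' and defer it to ``careful bookkeeping,'' but as set up the bookkeeping cannot close: the residual term is simply not controlled by the hypotheses.

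The paper avoids the pivot entirely. It bounds $\|H\bmQ-H\bar{\bmQ}\|$ by the maximum over all cross-branch differences ($\bm\cM_1\bmQ-\bar{\bmQ}$, $\bm\cM_2\bmQ-\bm\cM_1\bar{\bmQ}$, etc.) and then disposes of each one in a single step by invoking the previously established cross-bounds of \eqref{m_bound_q_twice}, namely $\|\bm\cM_i\bmQ-\bar{\bmQ}\|\leq\gamma\|\bmQ-\bar{\bmQ}\|$ and $\|\bm\cM_i\bmQ-\bm\cM_j\bar{\bmQ}\|\leq\gamma\|\bmQ-\bar{\bmQ}\|$ for arbitrary $\bmQ,\bar{\bmQ}\in L_2$. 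Those are exactly the estimates your off-diagonal case needs, and they are available to you from the preceding lemma; the fix is to cite them directly for the mixed differences rather than to triangulate through $\bm\cM_1\bar{Q}_1$.
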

\begin{proof}
Using \eqref{m_bound_q_twice}, we now observe that
\begin{align*}
    &\;\;\;\;\left\|\tilde{\mathfrak{F}}\bmQ-\tilde{\mathfrak{F}}\bar{\bmQ}\right\|
    \\&=\left\|\cR+\gamma PH\bmQ -\left(\cR+\gamma PH\bar{\bmQ}\right)\right\|
\\&\leq \gamma\left\|H\bmQ - H\bar{\bmQ}\right\|
\\&\begin{aligned}
\leq \gamma\Big\|\max\Big\{\bm\cM_1\bmQ-\bm\cM_1\bar{\bmQ},\bm\cM_1\bmQ-\bar{\bmQ},\bm\cM_1\bar{\bmQ}-\bmQ,\bm\cM_2\bmQ-\bm\cM_2\bar{\bmQ},\bmQ-\bar{\bmQ}&
\\,\bm\cM_2\bmQ-\bar{\bmQ},\bm\cM_2\bar{\bmQ}-\bmQ\Big\}\Big\|&
\end{aligned}
\\&\begin{aligned}
\leq \gamma\max\Big\{\left\|\bm\cM_1\bmQ-\bm\cM_1\bar{\bmQ}\right\|,\left\|\bm\cM_1\bmQ-\bar{\bmQ}\right\|,\left\|\bm\cM_1\bar{\bmQ}-\bmQ\right\|,\left\|\bm\cM_2\bmQ-\bm\cM_2\bar{\bmQ}\right\|,\left\|\bmQ-\bar{\bmQ}\right\|&
\\,\left\|\bm\cM_2\bmQ-\bar{\bmQ}\right\|,\left\|\bm\cM_2\bar{\bmQ}-\bmQ\right\|\Big\}&
\end{aligned}
\\&\leq \gamma\max\left\{\gamma\left\|\bmQ-\bar{\bmQ}\right\|,\left\|\bmQ-\bar{\bmQ}\right\|\right\}
\\&=\gamma\left\|\bmQ-\bar{\bmQ}\right\|,
\end{align*}
again using \eqref{m_bound_q_twice} and the non-expansive property of $P$.
\end{proof}
\begin{lemma}
Define by $\tilde{\bmQ}:=\cR+\gamma Pv^{\boldsymbol{\tilde{\sigma}}}$ where
\begin{align}
    v^{\boldsymbol{\tilde{\sigma}}}(s):= \min\Bigg[\max\left\{\bm\cM_1\bmQ^{\bm\sigma^1,\bm\sigma^2}(s,\bma), \cR(s, \bm0)+\gamma\mathbb{E}_{s'\sim P}\left[ v^{\bm\sigma^1,\bm\sigma^2}(s')\right]\right\}
,\bm\cM_2\bmQ^{\bm\sigma^1,\bm\sigma^2}(s, \bma)\Bigg], \label{v_tilde_definition}
\end{align}
then $\tilde{\bmQ}$ is a fixed point of $\tilde{\mathfrak{F}}\tilde{\bmQ}$, that is $\tilde{\mathfrak{F}}\tilde{\bmQ}=\tilde{\bmQ}$. 
\end{lemma}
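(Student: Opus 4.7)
The goal is to show $\tilde{\mathfrak{F}}\tilde{\bmQ}=\tilde{\bmQ}$. By the definitions $\tilde{\mathfrak{F}}\bmQ = \cR+\gamma PH\bmQ$ and $\tilde{\bmQ}=\cR+\gamma P v^{\tilde{\bm\sigma}}$, the equality collapses to establishing $P H\tilde{\bmQ} = P v^{\tilde{\bm\sigma}}$. I plan to prove the stronger pointwise identity
\[
H\tilde{\bmQ}(s,a,b) \;=\; v^{\tilde{\bm\sigma}}(s), \qquad \forall\,s\in\cS,\ a\in\cA,\ b\in\cB,
\]
and then push $\gamma P$ through both sides and add back $\cR$ to conclude.

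The main engine is a three-way case analysis matching the branches of $H$ to the branches of the nested min--max defining $v^{\tilde{\bm\sigma}}$. Write $X_1:=\bm\cM_1\tilde{Q}_1(s,a)$, $X_2:=\bm\cM_2\tilde{Q}_2(s,b)$, and $X_0:=\cR(s,\bm0)+\gamma\,\mathbb{E}_{s'\sim P}[v^{\tilde{\bm\sigma}}(s')]=\tilde{\bmQ}(s,\bm0)$. Then $v^{\tilde{\bm\sigma}}(s)=\min\{\max\{X_1,X_0\},\,X_2\}$ equals $X_1$ on $\{X_1>X_0,\ X_1\le X_2\}$ (Player~1 acts), equals $X_2$ on $\{X_2\le\max(X_1,X_0)\}$ (Player~2 acts), and equals $X_0$ on the complement $\{X_0\ge X_1,\ X_0\le X_2\}$ (neither acts). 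I would then match these three regions to the three branches of $H$, identifying the threshold $\Phi\hat{r}$ with the continuation value $X_0=\tilde{\bmQ}(s,\bm0)$, which is consistent with the projected fixed--point property from Theorem~\ref{primal_convergence_theorem}. On each region, direct substitution into $H$ returns precisely $v^{\tilde{\bm\sigma}}(s)$.

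The main obstacle is reconciling the threshold $\Phi\hat{r}$ appearing in the definition of $H$ with the self--referential quantity $v^{\tilde{\bm\sigma}}$: one must verify that the half--space conditions $\bm\cM_1 Q_1>\Phi\hat{r}>-\bm\cM_2 Q_2$, $\bm\cM_2 Q_2>\Phi\hat{r}>-\bm\cM_1 Q_1$, and their complement partition the state--action space in exactly the same manner as the nested min--max above. Boundary states, where an inequality in $H$ becomes an equality, are benign because two of the three candidate expressions coincide there, rendering $H\tilde{\bmQ}$ unambiguous. Once this alignment is checked, the identity $H\tilde{\bmQ}=v^{\tilde{\bm\sigma}}$ holds everywhere; applying $\gamma P$ and adding $\cR$ yields $\tilde{\mathfrak{F}}\tilde{\bmQ}=\tilde{\bmQ}$, completing the proof.
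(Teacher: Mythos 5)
Your proposal follows essentially the same route as the paper's proof: both reduce the claim to the pointwise identity $H\tilde{\bmQ}=v^{\boldsymbol{\tilde{\sigma}}}$ by matching the three branches of $H$ to the three branches of the nested $\min$--$\max$ defining $v^{\boldsymbol{\tilde{\sigma}}}$, and then conclude by applying $\cR+\gamma P(\cdot)$. If anything you are more explicit than the paper, which simply asserts the case-by-case equality; in particular you correctly flag the one delicate point --- that the threshold $\Phi\hat{r}$ in the definition of $H$ must align with the continuation value $\tilde{\bmQ}(s,\bm{0})$ for the partitions to coincide --- a step the paper's proof passes over without comment.
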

\begin{proof}
We begin by observing that
\begin{align*}
H\tilde{\bmQ}(s,\bma)&=H\left(\cR(s,\bma)+\gamma \cP^{\bma}_{ss'} v^{\boldsymbol{\tilde{\sigma}}}(s')\right)    
\\&= \begin{cases}
			\bm\cM_1Q_1(s,a), & \text{if $\bm\cM_1Q_1(s,a)>\Phi \hat{r}>-\bm\cM_2Q_2(s,b)$,}\\
   			\bm\cM_2Q_2(s,a), & \text{if $\bm\cM_2Q_2(s,a)>\Phi \hat{r}>-\bm\cM_1Q_1(s,b)$,}\\
            \bm{Q}(s,\bm0), & \text{otherwise},
		 \end{cases}
\\&= \begin{cases}
				\bm\cM_1Q_1(s,a), & \text{if $\bm\cM_1Q_1(s,a)>\Phi \hat{r}>-\bm\cM_2Q_2(s,b)$,}\\
   			\bm\cM_2Q_2(s,a), & \text{if $\bm\cM_2Q_2(s,a)>\Phi \hat{r}>-\bm\cM_1Q_1(s,b)$,}\\
            \cR(s,\bm0)+\gamma Pv^{\boldsymbol{\tilde{\sigma}}}, & \text{otherwise},
		 \end{cases}
\\&=v^{\boldsymbol{\tilde{\sigma}}}(s).
\end{align*}
Hence,
\begin{align}
    \tilde{\mathfrak{F}}\tilde{\bmQ}=\cR+\gamma PH\tilde{\bmQ}=\cR+\gamma Pv^{\boldsymbol{\tilde{\sigma}}}=\tilde{\bmQ}. 
\end{align}
which proves the result.
\end{proof}
\begin{lemma}\label{value_difference_Q_difference}
The following bound holds:
\begin{align}
    \mathbb{E}\left[v^{\boldsymbol{\hat{\sigma}}}(s_0)\right]-\mathbb{E}\left[v^{\boldsymbol{\tilde{\sigma}}}(s_0)\right]\leq 2\left[(1-\gamma)\sqrt{(1-\gamma^2)}\right]^{-1}\left\|\Pi \bm{\hat{Q}} -\bm{\hat{Q}}\right\|.
\label{F_tilde_fixed_point}\end{align}
\end{lemma}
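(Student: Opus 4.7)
The plan is to chain together two bounds: a ``greedy-policy suboptimality'' bound that converts Q-function approximation error into value-function error, and the projection bound already available from Lemma~\ref{projection_F_contraction_lemma}(ii). The crucial observation is that $\boldsymbol{\tilde{\sigma}}$, as defined through $H$ and \eqref{v_tilde_definition}, is exactly the policy that acts greedily with respect to the linearly approximated action value $\Phi\hat r$: the first two branches of $H$ prescribe Player~1 or Player~2 intervention whenever the corresponding intervened Q-value beats $\Phi\hat r$, and the third branch prescribes the no-action continuation. Accordingly, $\tilde{\bmQ}=\tilde{\mathfrak{F}}\tilde{\bmQ}$ (just established), while $\bm{\hat Q}$ satisfies the analogous fixed-point identity $\bm{\hat Q}=\mathfrak{F}\bm{\hat Q}$ for the equilibrium game operator from Theorem~\ref{theorem:joint-sol}.

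The core step is to show $\left\|v^{\boldsymbol{\hat\sigma}}-v^{\boldsymbol{\tilde\sigma}}\right\|\leq 2(1-\gamma)^{-1}\left\|\Phi\hat r-\bm{\hat Q}\right\|$. I would first control $\|\tilde{\bmQ}-\bm{\hat Q}\|$ via the two-term decomposition
\begin{align*}
\tilde{\bmQ}-\bm{\hat Q}=\bigl(\tilde{\mathfrak{F}}\tilde{\bmQ}-\tilde{\mathfrak{F}}\bm{\hat Q}\bigr)+\bigl(\tilde{\mathfrak{F}}\bm{\hat Q}-\mathfrak{F}\bm{\hat Q}\bigr),
\end{align*}
applying the $\gamma$-contraction of $\tilde{\mathfrak{F}}$ to the first summand and rearranging to get $\|\tilde{\bmQ}-\bm{\hat Q}\|\leq(1-\gamma)^{-1}\|\tilde{\mathfrak{F}}\bm{\hat Q}-\mathfrak{F}\bm{\hat Q}\|$. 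The residual is then handled pointwise by comparing the three branches of $H$ (which threshold on $\Phi\hat r$) against the inner $\min/\max$ of $\mathfrak{F}$ (which thresholds on $\bm{\hat Q}$): on any state the branch selection differs by at most $2\|\Phi\hat r-\bm{\hat Q}\|$, where the factor $2$ absorbs the swap between a $\bm\cM_i$-branch and the continuation branch, and the uniform bound \eqref{m_bound_q_twice} prevents further blow-up. Passing from the Q-error to the value error is then immediate using $\bm{\hat Q}=\cR+\gamma Pv^{\boldsymbol{\hat\sigma}}$ and $\tilde{\bmQ}=\cR+\gamma Pv^{\boldsymbol{\tilde\sigma}}$ together with the non-expansiveness of $P$, yielding the claimed $2(1-\gamma)^{-1}$ constant.

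To conclude, invoke Lemma~\ref{projection_F_contraction_lemma}(ii) to substitute $\|\Phi\hat r-\bm{\hat Q}\|\leq(1-\gamma^2)^{-1/2}\|\Pi\bm{\hat Q}-\bm{\hat Q}\|$, and observe that $\mathbb{E}[v^{\boldsymbol{\hat\sigma}}(s_0)]-\mathbb{E}[v^{\boldsymbol{\tilde\sigma}}(s_0)]\leq\|v^{\boldsymbol{\hat\sigma}}-v^{\boldsymbol{\tilde\sigma}}\|$, which delivers the stated $2[(1-\gamma)\sqrt{1-\gamma^2}]^{-1}\|\Pi\bm{\hat Q}-\bm{\hat Q}\|$ bound. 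The main obstacle will be the middle step: the three-way switch in the Bellman operator---Player~1 intervenes, Player~2 intervenes, or neither acts---means $\boldsymbol{\hat\sigma}$ and $\boldsymbol{\tilde\sigma}$ can disagree on both \emph{when} and \emph{which} action to execute, and one must verify that on the disagreement set the residual $\|\tilde{\mathfrak{F}}\bm{\hat Q}-\mathfrak{F}\bm{\hat Q}\|$ is uniformly controlled by $\|\Phi\hat r-\bm{\hat Q}\|$ without picking up a larger multiplicative constant. Routing every cross-term through the intervention operators $\bm\cM_1,\bm\cM_2$ and invoking the uniform estimate \eqref{m_bound_q_twice} keeps the bookkeeping clean and reduces the analysis to the familiar geometric-series error propagation of approximate value iteration.
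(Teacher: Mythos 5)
Your skeleton matches the paper's at the endpoints — both use the fixed-point identities $\mathfrak{F}\bm{\hat{Q}}=\bm{\hat{Q}}$ and $\tilde{\mathfrak{F}}\tilde{\bmQ}=\tilde{\bmQ}$, both convert the value gap to a Q-gap via $\bm{\hat{Q}}=\cR+\gamma Pv^{\boldsymbol{\hat{\sigma}}}$, $\tilde{\bmQ}=\cR+\gamma Pv^{\boldsymbol{\tilde{\sigma}}}$ together with stationarity and Jensen, and both finish with Lemma~\ref{projection_F_contraction_lemma}(ii). But the central decomposition is genuinely different. The paper pivots the triangle inequality around the point $\Phi\hat{r}$, where the two operators \emph{coincide by construction} ($\mathfrak{F}(\Phi\hat{r})=\tilde{\mathfrak{F}}(\Phi\hat{r})$, since $H$ thresholds on $\Phi\hat{r}$ itself): it writes $\|\tilde{\bmQ}-\bm{\hat{Q}}\|\leq\|\tilde{\mathfrak{F}}\tilde{\bmQ}-\tilde{\mathfrak{F}}(\Phi\hat{r})\|+\|\mathfrak{F}(\Phi\hat{r})-\mathfrak{F}\bm{\hat{Q}}\|\leq\gamma\|\tilde{\bmQ}-\Phi\hat{r}\|+\gamma\|\bm{\hat{Q}}-\Phi\hat{r}\|$, then one more triangle inequality and a rearrangement produce the $2\gamma(1-\gamma)^{-1}$ factor, with no need to estimate the discrepancy between the two operators at any point other than $\Phi\hat{r}$. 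You instead pivot around $\bm{\hat{Q}}$ and must bound the operator discrepancy $\|\tilde{\mathfrak{F}}\bm{\hat{Q}}-\mathfrak{F}\bm{\hat{Q}}\|$ directly.

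That middle step is where your proposal has a real gap. You assert that on the disagreement set the branch selections differ by at most $2\|\Phi\hat{r}-\bm{\hat{Q}}\|$, but this is exactly the claim that needs proof, and it is not a routine sandwich argument here: the selection rule inside $\mathfrak{F}$ is the nested $\max[\min\{\bm\cM_1\Lambda,\Lambda\},\bm\cM_2\Lambda]$ evaluated at $\Lambda=\bm{\hat{Q}}$, whereas $H$ uses \emph{two-sided} conditions of the form $\bm\cM_1Q_1>\Phi\hat{r}>-\bm\cM_2Q_2$ involving a sign flip on the opposing player's intervention operator. These are not the same comparison with a shifted threshold, so the usual "the switched value is trapped between the two thresholds" argument does not apply verbatim, and the factor $2$ (hence the final constant) is not secured. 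The estimate \eqref{m_bound_q_twice} controls $\|\bm\cM_i\bmQ-\bm{\hat{Q}}\|$ in terms of $\|\bmQ-\bm{\hat{Q}}\|$, which is the wrong quantity for this purpose — at $\bmQ=\bm{\hat{Q}}$ it gives zero information about the $\Phi\hat{r}$-dependence of the branch choice. To close your argument you would need a separate pointwise case analysis of $H\bm{\hat{Q}}$ versus the $\min/\max$ selection, carried out in a norm compatible with the $L_2$ setting; the paper's detour through $\Phi\hat{r}$ is precisely how it avoids ever having to do this.
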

\begin{proof}

By definitions of $v^{\boldsymbol{\hat{\sigma}}}$ and $v^{\boldsymbol{\tilde{\sigma}}}$ (c.f \eqref{v_tilde_definition}) and using Jensen's inequality and the stationarity property we have that,
\begin{align}\nonumber
    \mathbb{E}\left[v^{\boldsymbol{\hat{\sigma}}}(s_0)\right]-\mathbb{E}\left[v^{\boldsymbol{\tilde{\sigma}}}(s_0)\right]&=\mathbb{E}\left[Pv^{\boldsymbol{\hat{\sigma}}}(s_0)\right]-\mathbb{E}\left[Pv^{\boldsymbol{\tilde{\sigma}}}(s_0)\right]
    \\&\leq \left|\mathbb{E}\left[Pv^{\boldsymbol{\hat{\sigma}}}(s_0)\right]-\mathbb{E}\left[Pv^{\boldsymbol{\tilde{\sigma}}}(s_0)\right]\right|\nonumber
    \\&\leq \left\|Pv^{\boldsymbol{\hat{\sigma}}}-Pv^{\boldsymbol{\tilde{\sigma}}}\right\|. \label{v_approx_intermediate_bound_P}
\end{align}
Now recall that $\tilde{\bmQ}:=\cR+\gamma Pv^{\boldsymbol{\tilde{\sigma}}}$ and $\bm{\hat{Q}}:=\cR+\gamma Pv^{\boldsymbol{\sigma\star}}$,  using these expressions in \eqref{v_approx_intermediate_bound_P} we find that 
\begin{align*}
    \mathbb{E}\left[v^{\boldsymbol{\hat{\sigma}}}(s_0)\right]-\mathbb{E}\left[v^{\boldsymbol{\tilde{\sigma}}}(s_0)\right]&\leq \frac{1}{\gamma}\left\|\tilde{\bmQ}-\bm{\hat{Q}}\right\|. \label{v_approx_q_approx_bound}
\end{align*}
Moreover, by the triangle inequality and using the fact that $\mathfrak{F}(\Phi \hat{r})=\tilde{\mathfrak{F}}(\Phi \hat{r})$ and that $\mathfrak{F}\bm{\hat{Q}}=\bm{\hat{Q}}$ and $\mathfrak{F}\tilde{\bmQ}=\tilde{\bmQ}$ (c.f. \eqref{F_tilde_fixed_point}) we have that
\begin{align*}
\left\|\tilde{\bmQ}-\bm{\hat{Q}}\right\|&\leq \left\|\tilde{\bmQ}-\mathfrak{F}(\Phi \hat{r})\right\|+\left\|\bm{\hat{Q}}-\tilde{\mathfrak{F}}(\Phi \hat{r})\right\|    
\\&\leq \gamma\left\|\tilde{\bmQ}-\Phi \hat{r}\right\|+\gamma\left\|\bm{\hat{Q}}-\Phi \hat{r}\right\| 
\\&\leq 2\gamma\left\|\tilde{\bmQ}-\Phi \hat{r}\right\|+\gamma\left\|\bm{\hat{Q}}-\tilde{\bmQ}\right\|, 
\end{align*}
which gives the following bound:
\begin{align*}
\left\|\tilde{\bmQ}-\bm{\hat{Q}}\right\|&\leq 2\gamma\left(1-\gamma\right)^{-1}\left\|\tilde{\bmQ}-\Phi \hat{r}\right\|, 
\end{align*}
from which, using Lemma \ref{projection_F_contraction_lemma}, we deduce that $
    \left\|\tilde{\bmQ}-\bm{\hat{Q}}\right\|\leq 2\gamma\left[(1-\gamma)\sqrt{(1-\gamma^2)}\right]^{-1}\left\|\Pi \bm{\hat{Q}} - \bm{\hat{Q}}\right\|$,
after which by \eqref{v_approx_q_approx_bound}, we finally obtain
\begin{align*}
        \mathbb{E}\left[v^{\boldsymbol{\hat{\sigma}}}(s_0)\right]-\mathbb{E}\left[v^{\boldsymbol{\tilde{\sigma}}}(s_0)\right]\leq  2\left[(1-\gamma)\sqrt{(1-\gamma^2)}\right]^{-1}\left\|\Pi \bm{\hat{Q}} - \bm{\hat{Q}}\right\|,
\end{align*}
as required.
\end{proof}

Let us rewrite the update in the following way:
\begin{align*}
    r_{t+1}=r_t+\gamma_t\Xi(w_t,r_t),
\end{align*}
where the function $\Xi:\mathbb{R}^{2d}\times \mathbb{R}^p\to\mathbb{R}^p$ is given by:
\begin{align*}
\Xi(w,r):=\phi(s)\left(\cR(s,\cdot)+\gamma\max\left\{\min\{(\Phi r) (s'),\bm\cM_1(\Phi r) (s')\},\bm\cM_2(\Phi r) (s')\right\}-(\Phi r)(s)\right),
\end{align*}
for any $w\equiv (s,s')\in\mathcal{S}^2$  and for any $r\in\mathbb{R}^p$. Let us also define the function $\boldsymbol{\Xi}:\mathbb{R}^p\to\mathbb{R}^p$ by the following:
\begin{align*}
    \boldsymbol{\Xi}(r):=\mathbb{E}_{w_0\sim (\mathbb{P},\mathbb{P})}\left[\Xi(w_0,r)\right]; w_0:=(s_0,z_1).
\end{align*}
\begin{lemma}\label{iteratation_property_lemma}
The following statements hold for all $z\in \{0,1\}\times \mathcal{S}$:
\begin{itemize}
    \item[i)] $
(r-\hat{r})\boldsymbol{\Xi}_k(r)<0,\qquad \forall r\neq \hat{r},    
$
\item[ii)] $
\boldsymbol{\Xi}_k(\hat{r})=0$.
\end{itemize}
\end{lemma}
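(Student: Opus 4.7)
My plan is to interpret $\boldsymbol{\Xi}$ as the expected projected TD-residual and then use the fact that $\hat r$ was defined as a $\Pi$-fixed point of $\mathfrak{F}$ to handle (ii) immediately, while (i) follows from a decomposition that exploits the $\gamma$-contraction of $\mathfrak{F}$ established in the earlier lemma.

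\textbf{Setup.} Write $\Phi$ for the feature matrix and $D$ for the diagonal matrix of stationary probabilities over $\mathcal{S}$ (well-defined by the ergodicity assumption A.1). Under the standard interpretation of $\|\cdot\|$ as the $D$-weighted $L^2$ norm, the projection $\Pi$ is $\Pi = \Phi(\Phi^\top D\Phi)^{-1}\Phi^\top D$, and by the ergodic theorem one can rewrite the definition of $\boldsymbol{\Xi}$ as
\begin{align*}
\boldsymbol{\Xi}(r) \;=\; \mathbb{E}\bigl[\phi(s)\bigl(\mathfrak{F}(\Phi r)(s) - (\Phi r)(s)\bigr)\bigr] \;=\; \Phi^\top D\bigl(\mathfrak{F}(\Phi r) - \Phi r\bigr).
\end{align*}

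\textbf{Proof of (ii).} The fixed-point characterization $\Pi\mathfrak{F}(\Phi\hat r)=\Phi\hat r$ is equivalent to saying that the residual $\mathfrak{F}(\Phi\hat r)-\Phi\hat r$ lies in the $D$-orthogonal complement of $\mathrm{span}(\Phi)$. Applying $\Phi^\top D$ to this residual therefore yields $0$, so $\boldsymbol{\Xi}(\hat r)=0$.

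\textbf{Proof of (i).} For any $r\neq\hat r$, I would decompose $(r-\hat r)^\top\boldsymbol{\Xi}(r)$ into three pieces by inserting $\pm\mathfrak{F}(\Phi\hat r)$ and using the bilinear form $\langle\cdot,\cdot\rangle_D$:
\begin{align*}
(r-\hat r)^\top\boldsymbol{\Xi}(r) \;=\;& \bigl\langle\Phi(r-\hat r),\,\mathfrak{F}(\Phi r)-\mathfrak{F}(\Phi\hat r)\bigr\rangle_D \\
& +\bigl\langle\Phi(r-\hat r),\,\mathfrak{F}(\Phi\hat r)-\Phi\hat r\bigr\rangle_D \;-\;\|\Phi(r-\hat r)\|_D^2.
\end{align*}
The middle term vanishes: $\Phi(r-\hat r)\in\mathrm{span}(\Phi)$ while the bracketed residual is $D$-orthogonal to $\mathrm{span}(\Phi)$ by part (ii). The first term is bounded by $\gamma\|\Phi(r-\hat r)\|_D^2$ via Cauchy--Schwarz in $\langle\cdot,\cdot\rangle_D$ together with the $\gamma$-contraction $\|\mathfrak{F}(\Phi r)-\mathfrak{F}(\Phi\hat r)\|\le\gamma\|\Phi(r-\hat r)\|$ established in the contraction lemma for $\mathfrak{F}$. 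Combining gives
\begin{align*}
(r-\hat r)^\top\boldsymbol{\Xi}(r)\;\le\;(\gamma-1)\,\|\Phi(r-\hat r)\|_D^2,
\end{align*}
which is strictly negative because $\gamma<1$ and because linear independence of $\phi_1,\dots,\phi_p$ (part of the theorem's hypothesis) forces $\Phi(r-\hat r)\neq 0$ whenever $r\neq\hat r$.

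\textbf{Main obstacle.} The delicate point is pairing the projection $\Pi$ with the right inner product: the contraction of $\mathfrak{F}$ was proved in some $L_2$ norm $\|\cdot\|$, while the orthogonality argument needs the \emph{same} weighted inner product that defines $\Pi$. Under the conventional identification (which is implicit in the paper, since $\Pi$ is the minimiser of $\|\bar\psi-\psi\|$), these coincide and the argument is clean. If they do not, one would need an auxiliary step showing that $P$ is non-expansive in the $D$-weighted norm (standard for stationary distributions) to lift the contraction of $\mathfrak{F}$ into $\|\cdot\|_D$ before applying Cauchy--Schwarz.
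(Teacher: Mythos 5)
Your proposal is correct, and it follows the same overall strategy as the paper's proof: represent $\boldsymbol{\Xi}_k(r)$ as the inner product $\langle\phi_k,\mathfrak{F}\Phi r-\Phi r\rangle$, dispose of part (ii) via the orthogonality of the residual $(\boldsymbol{1}-\Pi)\mathfrak{F}\Phi\hat r$ to $\mathrm{span}(\Phi)$, and for part (i) drive the pairing down to $(\gamma-1)\|\Phi r-\Phi\hat r\|^2<0$. The one genuine difference is where you insert the telescoping terms: the paper writes $\mathfrak{F}\Phi r=(\boldsymbol{1}-\Pi)\mathfrak{F}\Phi r+\Pi\mathfrak{F}\Phi r$, kills the orthogonal piece at the \emph{current} iterate $r$, and then invokes the contraction of the \emph{projected} operator $\Pi\mathfrak{F}$ (part (i) of the earlier projection lemma) via $\|\Pi\mathfrak{F}\Phi r-\Phi\hat r\|\le\gamma\|\Phi r-\Phi\hat r\|$; you instead insert $\pm\mathfrak{F}(\Phi\hat r)$, kill the orthogonal residual at the \emph{fixed point} $\hat r$ (reusing part (ii)), and apply the contraction of the unprojected $\mathfrak{F}$ directly. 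The two routes are equivalent in difficulty and land on the same bound. Two things in your write-up are actually improvements on the paper's presentation: you note explicitly that strict negativity requires $\Phi(r-\hat r)\neq 0$, which is where the linear-independence hypothesis on $\{\phi_1,\dots,\phi_p\}$ enters (the paper leaves this implicit), and you flag the need for the contraction norm and the projection norm to be the same weighted $L_2$ norm, which the paper also glosses over. Neither point is a gap in your argument.
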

\begin{proof}
To prove the statement, we first note that each component of $\boldsymbol{\Xi}_k(r)$ admits a representation as an inner product, indeed: 
\begin{align*}
\boldsymbol{\Xi}_k(r)&=\mathbb{E}\left[\phi_k(s_0)(\cR(s_0,\bma_0)+\gamma\max\left\{\min\{(\Phi r) (s_1),\bm\cM_1(\Phi r) (s_1)\},\bm\cM_2(\Phi r) (s_1)\right\}-(\Phi r)(s_0)\right] 
\\&=\mathbb{E}\left[\phi_k(s_0)(\cR(s_0,\bma_0)+\gamma\mathbb{E}\left[\max\left\{\min\{(\Phi r) (s_1),\bm\cM_1(\Phi r) (s_1)\},\bm\cM_2(\Phi r) (s_1)\right\}|x_0\right]-(\Phi r)(s_0)\right]
\\&=\mathbb{E}\left[\phi_k(s_0)(\cR(s_0,\bma_0)+\gamma P\max\left\{\min\{(\Phi r) ,\bm\cM_1(\Phi r) \},\bm\cM_2(\Phi r) \right\}(s_0)-(\Phi r)(s_0)\right]
\\&=\left\langle\phi_k,\mathfrak{F}\Phi r-\Phi r\right\rangle,
\end{align*}
% \begin{align*}
% \boldsymbol{\Xi}_k(r)&=\mathbb{E}\left[\phi_k(s_0)(\cR(s_0,\bma_0)+\gamma\max\left\{\Phi r(s_1),\bm\cM\Phi(s_1)\right\}-(\Phi r)(s_0)\right] 
% \\&=\mathbb{E}\left[\phi_k(s_0)(\cR(s_0,\bma_0)+\gamma\mathbb{E}\left[\max\left\{\Phi r(s_1),\bm\cM\Phi(s_1)\right\}|z_0\right]-(\Phi r)(s_0)\right]
% \\&=\mathbb{E}\left[\phi_k(s_0)(\cR(s_0,\bma_0)+\gamma P\max\left\{\left(\Phi r,\bm\cM\Phi\right)\right\}(s_0)-(\Phi r)(s_0)\right]
% \\&=\left\langle\phi_k,\mathfrak{F}\Phi r-\Phi r\right\rangle,
% \end{align*}
using the iterated law of expectations and the definitions of $P$ and $\mathfrak{F}$.

We now are in a position to prove i). Indeed, we now observe the following:
\begin{align*}
\left(r-\hat{r}\right)\boldsymbol{\Xi}_k(r)&=\sum_{l=1}\left(r(l)-\hat{r}(l)\right)\left\langle\phi_l,\mathfrak{F}\Phi r -\Phi r\right\rangle
\\&=\left\langle\Phi r -\Phi \hat{r}, \mathfrak{F}\Phi r -\Phi r\right\rangle
\\&=\left\langle\Phi r -\Phi \hat{r}, (\boldsymbol{1}-\Pi)\mathfrak{F}\Phi r+\Pi \mathfrak{F}\Phi r -\Phi r\right\rangle
\\&=\left\langle\Phi r -\Phi \hat{r}, \Pi \mathfrak{F}\Phi r -\Phi r\right\rangle,
\end{align*}
where in the last step we used the orthogonality of $(\boldsymbol{1}-\Pi)$. We now recall that $\Pi \mathfrak{F}\Phi \hat{r}=\Phi \hat{r}$ since $\Phi \hat{r}$ is a fixed point of $\Pi \mathfrak{F}$. Additionally, using Lemma \ref{projection_F_contraction_lemma} we observe that $\|\Pi \mathfrak{F}\Phi r -\Phi \hat{r}\| \leq \gamma \|\Phi r -\Phi \hat{r}\|$. With this, we now find that
\begin{align*}
&\left\langle\Phi r -\Phi \hat{r}, \Pi \mathfrak{F}\Phi r -\Phi r\right\rangle    
\\&=\left\langle\Phi r -\Phi \hat{r}, (\Pi \mathfrak{F}\Phi r -\Phi \hat{r})+ \Phi \hat{r} -\Phi r\right\rangle
\\&\leq\left\|\Phi r -\Phi \hat{r}\right\|\left\|\Pi \mathfrak{F}\Phi r -\Phi \hat{r}\right\|- \left\|\Phi \hat{r} -\Phi r\right\|^2
\\&\leq(\gamma -1)\left\|\Phi \hat{r} -\Phi r\right\|^2,
\end{align*}
which is negative since $\gamma<1$ which completes the proof of part i).

The proof of part ii) is straightforward since we readily observe that
\begin{align*}
    \boldsymbol{\Xi}_k(\hat{r})= \left\langle\phi_l, \mathfrak{F}\Phi \hat{r}-\Phi r\right\rangle= \left\langle\phi_l, \Pi \mathfrak{F}\Phi \hat{r}-\Phi r\right\rangle=0,
\end{align*}
as required and from which we deduce the result.
\end{proof}
To prove the theorem, we make use of a special case of the following result:

\begin{theorem}[Th. 17, p. 239 in \citep{benveniste2012adaptive}] \label{theorem:stoch.approx.}
Consider a stochastic process $r_t:\mathbb{R}\times\{\infty\}\times\Omega\to\mathbb{R}^k$ which takes an initial value $r_0$ and evolves according to the following:
\begin{align}
    r_{t+1}=r_t+\alpha \Xi(s_t,r_t),
\end{align}
for some function $s:\mathbb{R}^{2d}\times\mathbb{R}^k\to\mathbb{R}^k$ and where the following statements hold:
\begin{enumerate}
    \item $\{s_t|t=0,1,\ldots\}$ is a stationary, ergodic Markov process taking values in $\mathbb{R}^{2d}$
    \item For any positive scalar $q$, there exists a scalar $\mu_q$ such that $\mathbb{E}\left[1+\|s_t\|^q|s\equiv s_0\right]\leq \mu_q\left(1+\|s\|^q\right)$
    \item The step size sequence satisfies the Robbins-Monro conditions, that is $\sum_{t=0}^\infty\alpha_t=\infty$ and $\sum_{t=0}^\infty\alpha^2_t<\infty$
    \item There exists scalars $d$ and $q$ such that $    \|\Xi(w,r)\|
        \leq d\left(1+\|w\|^q\right)(1+\|r\|)$
    \item There exists scalars $d$ and $q$ such that $
        \sum_{t=0}^\infty\left\|\mathbb{E}\left[\Xi(w_t,r)|z_0\equiv z\right]-\mathbb{E}\left[\Xi(w_0,r)\right]\right\|
        \leq d\left(1+\|w\|^q\right)(1+\|r\|)$
    \item There exists a scalar $d>0$ such that $
        \left\|\mathbb{E}[\Xi(w_0,r)]-\mathbb{E}[\Xi(w_0,\bar{r})]\right\|\leq d\|r-\bar{r}\| $
    \item There exists scalars $d>0$ and $q>0$ such that $
        \sum_{t=0}^\infty\left\|\mathbb{E}\left[\Xi(w_t,r)|w_0\equiv w\right]-\mathbb{E}\left[\Xi(w_0,\bar{r})\right]\right\|
        \leq c\|r-\bar{r}\|\left(1+\|w\|^q\right) $
    \item There exists some $\hat{r}\in\mathbb{R}^k$ such that $\boldsymbol{\Xi}(r)(r-\hat{r})<0$ for all $r \neq \hat{r}$ and $\bar{s}(\hat{r})=0$. 
\end{enumerate}
Then $r_t$ converges to $\hat{r}$ almost surely.
\end{theorem}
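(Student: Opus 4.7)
The plan is to prove this by the classical stochastic-approximation toolkit: Lyapunov analysis combined with a Poisson-equation decomposition of the Markov-dependent noise, concluded by a Robbins--Siegmund super-martingale convergence argument. Set $V(r):=\tfrac12\|r-\hat r\|^2$ as the candidate Lyapunov function. Writing $M_t:=\Xi(w_t,r_t)-\boldsymbol{\Xi}(r_t)$ for the centred perturbation, the update gives
\begin{align*}
V(r_{t+1})=V(r_t)+\alpha_t(r_t-\hat r)^\T \boldsymbol{\Xi}(r_t)+\alpha_t(r_t-\hat r)^\T M_t+\tfrac12\alpha_t^2\|\Xi(w_t,r_t)\|^2.
\end{align*}
Condition~8 makes the drift term non-positive (and strictly negative off $\hat r$), while condition~4 together with the moment bound in condition~2 controls the quadratic remainder. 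The delicate piece is the noise term $(r_t-\hat r)^\T M_t$, which is \emph{not} a martingale difference because $w_t$ is only a Markov (not i.i.d.) input.

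To handle this, I would follow the standard Benveniste--M\'etivier--Priouret construction and introduce the solution to a Poisson equation associated with the Markov kernel $P$. Define
\begin{align*}
\hat\Xi(w,r):=\sum_{t=0}^\infty \bigl(\mathbb{E}[\Xi(w_t,r)\mid w_0=w]-\boldsymbol{\Xi}(r)\bigr);
\end{align*}
conditions~5 and~7 guarantee that this series converges absolutely with a bound polynomial in $\|w\|$ and Lipschitz in $r$. Then $\hat\Xi$ satisfies $\hat\Xi(w,r)-P\hat\Xi(w,r)=\Xi(w,r)-\boldsymbol{\Xi}(r)$, which lets me rewrite the noise as a genuine martingale-difference increment plus a telescoping remainder plus an $\alpha_t$-variation term coming from the Lipschitz continuity of $\hat\Xi$ in $r$ (using condition~6 to compare $\hat\Xi(\cdot,r_t)$ with $\hat\Xi(\cdot,r_{t-1})$ and condition~3 to make the price $\alpha_t^2$ summable).

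With this decomposition in hand, summing the Lyapunov inequality across $t$ yields a relation of the form
\begin{align*}
V(r_{t+1})\le V(r_t)-\alpha_t \bigl|(r_t-\hat r)^\T \boldsymbol{\Xi}(r_t)\bigr|+\beta_t+\gamma_t,
\end{align*}
where $\beta_t$ is a martingale difference with summable conditional second moments and $\gamma_t$ is an a.s.\ summable deterministic sequence. The Robbins--Siegmund theorem then gives that $V(r_t)$ converges almost surely and $\sum_t \alpha_t |(r_t-\hat r)^\T \boldsymbol{\Xi}(r_t)|<\infty$. Together with $\sum_t\alpha_t=\infty$ (condition~3) and the strict sign property in condition~8, a standard contradiction argument based on continuity of $\boldsymbol{\Xi}$ (a consequence of condition~6) forces $r_t\to\hat r$ almost surely.

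The main obstacle, as is usual for this family of results, is verifying that the Poisson-equation solution $\hat\Xi$ is well-defined and well-behaved uniformly in $r$: one needs the polynomial moment growth from condition~2 to compose with the summability bounds in conditions~5 and~7, and the Lipschitz bound in condition~6 to ensure that $\hat\Xi(\cdot,r)$ inherits enough regularity in $r$ for the $\alpha_t$-variation term to be summable. Once this bookkeeping is carried out, the rest of the argument is a routine application of Robbins--Siegmund plus the Lyapunov property of condition~8.
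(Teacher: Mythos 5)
This statement is not proved in the paper at all: it is imported verbatim as Theorem~17, p.~239 of \citep{benveniste2012adaptive}, and the paper only \emph{verifies its hypotheses} (conditions 1--8) for the specific iterate $\Xi(w,r)$ arising from the linear function approximation scheme. So there is no in-paper proof to compare against; what you have written is an outline of a proof of the cited external result. That said, your outline is the canonical Benveniste--M\'etivier--Priouret argument --- Lyapunov function $V(r)=\tfrac12\|r-\hat r\|^2$, Poisson-equation decomposition of the Markov-dependent noise using the summability in conditions~5 and~7, and a Robbins--Siegmund conclusion --- which is essentially the strategy of the source, so the route is the right one.

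There is, however, one genuine gap you should not wave away as ``bookkeeping.'' Your Robbins--Siegmund step requires the perturbation terms $\beta_t$ and $\gamma_t$ to be (conditionally) square-summable and summable respectively, but every bound you have available --- condition~4 for the quadratic remainder $\tfrac12\alpha_t^2\|\Xi(w_t,r_t)\|^2$, and the Poisson-solution bounds inherited from conditions~5--7 --- carries a factor of $(1+\|r_t\|)$ or $(1+\|r_t\|)^2$. Summability of $\alpha_t^2(1+\|r_t\|)^2$ is therefore not a consequence of condition~3 alone; it requires $\sup_t\|r_t\|<\infty$ almost surely, which is not among the stated hypotheses and is not implied by them in any elementary way (condition~8 gives a strict drift toward $\hat r$ but no uniform coercivity, and condition~4 permits linear growth of the increment in $\|r\|$). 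The standard treatments close this loop either by projecting the iterates onto a compact set, by a separate stability/return-time argument, or by a localisation-and-truncation device; one of these must be invoked explicitly before the supermartingale argument is legitimate. A second, smaller point: the final ``contradiction argument'' needs not just $\boldsymbol{\Xi}(r)^{\T}(r-\hat r)<0$ pointwise but a strictly negative supremum on compact annuli around $\hat r$, which follows from continuity of $\boldsymbol{\Xi}$ (condition~6) plus the boundedness you have yet to establish --- so the two gaps are really the same gap. Everything else in your decomposition (the telescoping remainder, the $\alpha_t$-variation term controlled by the Lipschitz property in $r$) is correct in structure.
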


In order to apply the Theorem \ref{theorem:stoch.approx.}, we show that conditions 1 - 7 are satisfied.

\begin{proof}
Conditions 1-2 are true by assumption while condition 3 can be made true by choice of the learning rates. Therefore it remains to verify conditions 4-7 are met.   

To prove 4, we observe that
\begin{align*}
\left\|\Xi(w,r)\right\|
&=\left\|\phi(s)\left(\cR(s,\cdot)+\gamma\max\left\{\min\{(\Phi r) (s'),\bm\cM_1(\Phi r) (s')\},\bm\cM_2(\Phi r) (s')\right\}-(\Phi r)(s)\right)\right\|
\\&\leq\left\|\phi(s)\right\|\left\|\cR(s,\cdot)+\gamma\left(\left\|\phi(s')\right\|\|r\|+\bm\cM_1\Phi (s')+\bm\cM_2\Phi (s')\right)\right\|+\left\|\phi(s)\right\|\|r\|
\\&\leq\left\|\phi(s)\right\|\left(\|\cR(s,\cdot)\|+\gamma\|\bm\cM_1\Phi (s')\|+\gamma\|\bm\cM_2\Phi (s')\|\right)+\left\|\phi(s)\right\|\left(\gamma\left\|\phi(s')\right\|+\left\|\phi(s)\right\|\right)\|r\|.
\end{align*}
% 
% 
% \begin{align*}
% \left\|\Xi(w,r)\right\|
% &=\left\|\phi(s)\left(\cR(s,\cdot)+\gamma\max\left\{(\Phi r) (s'),\bm\cM_i\Phi (s')\right\}-(\Phi r)(s)\right)\right\|
% \\&\leq\left\|\phi(s)\right\|\left\|\cR(s,\cdot)+\gamma\left(\left\|\phi(s')\right\|\|r\|+\bm\cM_i\Phi (s')\right)\right\|+\left\|\phi(s)\right\|\|r\|
% \\&\leq\left\|\phi(s)\right\|\left(\|\cR(s,\cdot)\|+\gamma\|\bm\cM_i\Phi (s')\|\right)+\left\|\phi(s)\right\|\left(\gamma\left\|\phi(s')\right\|+\left\|\phi(s)\right\|\right)\|r\|.
% \end{align*}
Now using the definition of $\bm\cM_i$, we readily observe that $\|\bm\cM_i\Phi (s')\|\leq \|c\|_\infty+  \|\cR\|+\gamma\|\cP^{\bm\sigma}_{s's_t}\Phi\|\leq \|c\|_\infty+\| \cR\|+\gamma\|\Phi\|$ for $i=1,2$ using the non-expansiveness of $P$.

Hence, we lastly deduce that
\begin{align*}
\left\|\Xi(w,r)\right\|
&\leq\left\|\phi(s)\right\|\left(\|\cR(s,\cdot)\|+\gamma\|\bm\cM_1\Phi (s')\|+\gamma\|\bm\cM_2\Phi (s')\|\right)+\left\|\phi(s)\right\|\left(\gamma\left\|\phi(s')\right\|+\left\|\phi(s)\right\|\right)\|r\|
\\&\leq\left\|\phi(s)\right\|\left(\|\cR(s,\cdot)\|+2\gamma\left(\|c\|_\infty+\| \cR\|+|\phi\|\right)\right)+\left\|\phi(s)\right\|\left(\gamma\left\|\phi(s')\right\|+\left\|\phi(s)\right\|\right)\|r\|,
\end{align*}
we then easily deduce the result using the boundedness of $\phi$ and $\cR$.

Now we observe the following Lipschitz condition on $\Xi$:
\begin{align*}
&\left\|\Xi(w,r)-\Xi(w,\bar{r})\right\|
\\&\begin{aligned}=\Big\|\phi(s)\Big(\gamma\max\left\{\min\{(\Phi r) (s'),\bm\cM_1\Phi (s')\},\bm\cM_2\Phi  (s')\right\}-\gamma\max\big\{\min\{(\Phi \bar{r}) (s'),\bm\cM_1\Phi (s')\},&\\\bm\cM_2\Phi  (s')\big\}\Big)
-\left((\Phi r)(s)-\Phi\bar{r}(s)\right)\Big\|&
\end{aligned}
\\&\leq\gamma\left\|\phi(s)\right\|\Big\|\max\left\{\min\left\{\phi'(s') r,\bm\cM_1\Phi'(s')\right\},\bm\cM_2\Phi'(s')\right\}
\\&\qquad\qquad\qquad-\max\left\{\min\left\{(\phi'(s') \bar{r}),\bm\cM_1\Phi'(s')\right\},\bm\cM_2\Phi'(s')\right\}\Big\|+\left\|\phi(s)\right\|\left\|\phi'(s) r-\phi(s)\bar{r}\right\|
\\&\leq\gamma\left\|\phi(s)\right\|\left\|\min\left\{\phi'(s') r,\bm\cM_1\Phi'(s')\right\}-\min\left\{(\phi'(s') \bar{r}),\bm\cM_1\Phi'(s')\right\}\right\|+\left\|\phi(s)\right\|\left\|\phi'(s) r-\phi(s)\bar{r}\right\|
\\&\leq\gamma\left\|\phi(s)\right\|\left\|\phi'(s') r-\phi'(s') \bar{r}\right\|+\left\|\phi(s)\right\|\left\|\phi'(s) r-\phi'(s)\bar{r}\right\|
\\&\leq \left\|\phi(s)\right\|\left(\gamma\left\|\phi'(s')\right\|+ \left\|\phi'(s)\right\|\right)\left\|r-\bar{r}\right\|
\\&\leq {\rm const.}\left\|r-\bar{r}\right\|,
\end{align*}
% \begin{align*}
% &\left\|\Xi(w,r)-\Xi(w,\bar{r})\right\|
% \\&=\left\|\phi(s)\left(\gamma\max\left\{(\Phi r)(s'),\bm\cM\Phi(s')\right\}-\gamma\max\left\{(\Phi \bar{r})(s'),\bm\cM\Phi(s')\right\}\right)-\left((\Phi r)(s)-\Phi\bar{r}(s)\right)\right\|
% \\&\leq\gamma\left\|\phi(s)\right\|\left\|\max\left\{\phi'(s') r,\bm\cM\Phi'(s')\right\}-\max\left\{(\phi'(s') \bar{r}),\bm\cM\Phi'(s')\right\}\right\|+\left\|\phi(s)\right\|\left\|\phi'(s) r-\phi(s)\bar{r}\right\|
% \\&\leq\gamma\left\|\phi(s)\right\|\left\|\phi'(s') r-\phi'(s') \bar{r}\right\|+\left\|\phi(s)\right\|\left\|\phi'(s) r-\phi'(s)\bar{r}\right\|
% \\&\leq \left\|\phi(s)\right\|\left(\left\|\phi(s)\right\|+ \gamma\left\|\phi(s)\right\|\left\|\phi'(s') -\phi'(s') \right\|\right)\left\|r-\bar{r}\right\|
% \\&\leq c\left\|r-\bar{r}\right\|,
% \end{align*}
using Cauchy-Schwarz inequality and  that for any scalars $a,b,c$ we have that \\$
    \left|\max\{a,b\}-\max\{b,c\}\right|\leq \left|a-c\right|$ and $
    \left|\min\{a,b\}-\min\{b,c\}\right|\leq \left|a-c\right|$, which proves Part 6.
    
Using Assumptions 3 and 4, we therefore deduce that
\begin{align}
\sum_{t=0}^\infty\left\|\mathbb{E}\left[\Xi(w,r)-\Xi(w,\bar{r})|w_0=w\right]-\mathbb{E}\left[\Xi(w_0,r)-\Xi(w_0,\bar{r})\right\|\right]\leq {\rm const.}\left\|r-\bar{r}\right\|(1+\left\|w\right\|^l).
\end{align}
which proves Part 7.

Part 2 is assured by Lemma \ref{projection_F_contraction_lemma} while Part 4 (and hence Part 5) is assured by Lemma \ref{value_difference_Q_difference} and lastly Part 8 is assured by Lemma \ref{iteratation_property_lemma}.
% \end{proof}
This result completes the proof of Theorem \ref{theorem:joint-sol}. 
\end{proof}

\section{Intervention Budget Constraints}
\label{sec:budget}
% \begin{definition}\label{defn:constrained_mdp}
% The constrained MDP (c-MDP) is a tuple $\bm\cM_c = \langle  \cS, \cA, \cP, R, \gamma, G, \gamma_G \rangle$ where the additional terms are a cost function $G:\cS \times \cA \rightarrow [0, +\infty)$ and its discount factor $\gamma \in(0, 1)$. The problem is now given by the following for any $s\in\mathcal{S}$:
% \begin{subequations}
% \begin{align}
%         \max\limits_{\pi\in\Pi,g}~& v^{\bm\sigma^1,\bm\sigma^2}(s) 
%         \label{prob:original_objective}\\
%      \text{s. t. }\qquad & d - \sum\limits_{t=0}^\infty   G(s_t, a_t)\geq 0, \;\;\;\; a.s.,  \label{prob:constraint}
% \end{align}\label{prob:average_constrained_mdp}
% \end{subequations}
% where $d\in\mathbb{R}$ is a fixed value that represents the player's initial budget.  
% \end{definition}

% We follow~\citep{sootla2022saute} and augment the state-space as follows:
% where the variable $z_t$ is such that 
% \begin{equation}
%     % \gamma_G^t z_t = d - \sum\limits_{k=0}^{t-1}  \gamma_{G}^k G(s_k, a_k)
%     z_t = d - \sum\limits_{k=0}^{t-1} G(s_k, a_k)
% \end{equation}
% and has the interpretation of the remaining safety budget at time $t$. Indeed, if $z_t = 0$ for some time $t$ then any further accumulation of cost $G(s_k, a_k)$ for $k \ge t$ will result in the constraint violation.

% \textbf{Example: RL with a Capped Number of Actions}

So far we have considered the case in which each player chooses a strategy that determines the conditions for 
the set of states in which the player will execute an action. The action cost function plays a critical role in influencing the willingness of the player to execute an action.   In this section, we consider an alternative setting in which the players face a budgetary constraint on the actions executed. A degenerate case of the budgetary formalism is a constraint on the number of actions the player can execute over the horizon.   We show that by tracking its remaining budget the Algorithm 2 framework is able to learn a policy that makes optimal usage of its budget while respecting the budget constraint almost surely.  

The program for the new SG is given by the following for any $s\in\mathcal{S}$:
% \begin{subequations}
\begin{align}
        v^{\bm\sigma^1,\bm\sigma^2}(s) \geq v^{\bm\sigma'^1,\bm\sigma^2}(s)\;\; 
        &\text{s.t. } n_1 - \sum^\infty_{t=0}\sum_{k\geq 1} \delta^t_{\tau_k}\geq 0, \quad \forall \bm\sigma'^1\in\bm\Sigma^1, \forall \bm\sigma^2\in\bm\Sigma^2 
        \\
v^{\bm\sigma^1,\bm\sigma^2}(s)\geq v^{\boldsymbol{\sigma}^1,\bm{\sigma'}^2}(s)\;\; 
        &\text{s.t. } n_2 - \sum^\infty_{t=0}\sum_{r\geq 1} \delta^t_{\rho_r}\geq 0, \quad \forall \bm\sigma^1\in\bm\Sigma^1, \forall \bm\sigma'^2\in\bm\Sigma^2 
        \label{capped_problem}
\end{align}
% \begin{align}\label{prob:average_constrained_mdp}
%         \max\limits_{\pi\in\Pi,g}~& v^{\boldsymbol{\Sigma},\mathfrak{g}}_G(s) 
%         \label{prob:original_objective}\\
%      \text{s. t. }\qquad & n - \sum^\infty_{t=0}\sum_{k\geq 1} \delta^t_{\tau_k}\geq 0,   \label{prob:constraint}
% \end{align}\label{prob:average_constrained_mdp}
% \end{subequations}
where for each $i\in\{1,2\}$, the quantity $n_i\in \mathbb{N}$ is a fixed value that represents the maximum number of allowed interventions \revision{and $\sum_{k\geq 1}\delta^t_{\tau_k}$ ($\sum_{r\geq 1}\delta^t_{\rho_r}$) is equal to one if a Player 1 (Player 2) action was applied at time $t$ and zero if it was not.}  

Hence, the value of the game for the new SG adheres to the following program $s\in\mathcal{S}$:
% \begin{subequations}
\begin{align}\nonumber
        v(s)=\max\limits_{\bm\sigma^1\in\bm\Sigma^1}\min\limits_{\bm\sigma^2\in\bm\Sigma^2}\;v^{\bm\sigma^1,\bm\sigma^2}(s)&=\min\limits_{\bm\sigma^2\in\bm\Sigma^2}\max\limits_{\bm\sigma^1\in\bm\Sigma^1}\; v^{\bm\sigma^1,\bm\sigma^2}(s)\;\; 
        \\&\text{s.t. } n_1 - \sum^\infty_{t=0}\sum_{k\geq 1} \delta^t_{\tau_k}\geq 0,  n_2 - \sum^\infty_{t=0}\sum_{r\geq 1} \delta^t_{\rho_r}\geq 0,
        \label{capped_problem_value}
\end{align}
To do this, we combine the above impulse control technology with state augmentation technique ~\citep{sootla2022saute}.  
As in \citep{sootla2022saute,mguni2022timing}, we introduce new variables $y_t$ and $z_t$ that track the remaining number of actions for Player 1 and Player 2 respectively so that: $y_t := n_1 - \sum_{t\geq 0}\sum_{k\geq 1} \delta^t_{\tau_k}$ and $z_t := n_2 - \sum_{t\geq 0}\sum_{r\geq 1} \delta^t_{\rho_r}$  where the variables
$y_t$ and $z_t$ are treated as new state variables which are components in an augmented state space $\boldsymbol{\cX}:=\cS\times\mathbb{N}^2$. We introduce the associated reward functions $\widetilde{\cR}:\boldsymbol{\cX}\times\boldsymbol{\cA}\to\cP(D)$ and the probability transition  function $\widetilde{P}:\boldsymbol{\cX}\times\boldsymbol{\cA}\times\boldsymbol{\cX}\to[0,1]$ whose state space input is now replaced by $\boldsymbol{\cX}$ for the game $\widetilde{\cG}=\langle \cN,\cS,\cA,\cB,\tilde{P},\tilde{\cR},\gamma\rangle$. We now prove Algorithm 2 ensures generates best-response minimax strategies within the constrained SG $\cG$.

\begin{align}\nonumber
        s_{t+1} \sim P(\cdot | s_t, a_t),\qquad
        y_{t+1} = y_t -& \sum_{k\geq 1}\delta^t_{\tau_k}, \quad y_0 = n_1 ,\qquad
        z_{t+1} = z_t - \sum_{k\geq 1}\delta^t_{\tau_k} , \quad z_0 = n_2.
        \\& \bmx_{t+1}\sim\widetilde\cP(\cdot|\bmx_t,\bma_t), \qquad \bmx_t\equiv(s_t,y_t,z_t)\in\bm\cX\label{impulse_saute_mdp}
\end{align}
where the output $\widetilde\cP$ is the sampled tuple $[P(\cdot | s_t, a_t),y_t - \sum_{k\geq 1}\delta^t_{\tau_k},z_t - \sum_{k\geq 1}\delta^t_{\tau_k}]$. 

To avoid violations, the reward function can be reshaped in the following manner $
    \widetilde \cR(s_t, y_t, z_t, a_t,b_t) = \begin{cases}
        \cR(s_t, a_t,b_t) & y_t,z_t \ge 0 ,\\
        -\infty & y_t <0,z_t>0,\\
                \infty & y_t >0,z_t<0,\\
        0 & y_t, z_t <0,
    \end{cases}
$

\noindent The value function is now given by the following expression:
\begin{align}\label{saute_impulse_problem}
\widetilde v^{\bm\sigma^1,\bm\sigma^2}(\bmx)  = \mathbb{E}\left[\sum_{t=0}^\infty \gamma^t \widetilde \cR(\bmx_t, a_t,b_t)|a_t\sim\bm{\widetilde{\sigma}}^1(\cdot|\bmx_t),b_t\sim\bm{\widetilde{\sigma}}^2(\cdot|\bmx_t), \bmx_t=(s_t,y_t,z_t)\right],
\end{align}
where the strategy for Player 1 and Player 2 now depends on the variable $y_t$ and $z_t$ respectively. Note that $\widetilde \cP$ in Equation~\ref{impulse_saute_mdp} is a Markov process and, the rewards $\widetilde \cR$ are bounded. Therefore, we can apply directly the results for impulse control to this case as well. We denote the augmented SG by $\widetilde \cG = \langle\boldsymbol{\cX}, \cA, \widetilde \cP, \widetilde R, \gamma\rangle$. We have the following. 
\begin{theorem} \label{thm:maxmin_budget} Consider 
the budgeted problem  $\widetilde \cG$, then: 

a) The Bellman equation holds, i.e. there exists a function $\tilde{v}$ such that for any $\bmx\in\bm\cX$
\begin{align}
\tilde{v}(\bmx) =
\min\Bigg[\max\left\{\bm\cM_1\widetilde \bmQ(\bmx,a,b), \widetilde \cR(\bmx, \bm0)+\gamma\mathbb{E}_{\bmx' \sim\widetilde\cP}\left[ \tilde{v}(\bmx')\right]\right\}
,\bm\cM_2\tilde{\bmQ}(\bmx,a,b)\Bigg]
% \\& \max\limits_{\bma \in \cA}\left(\widetilde \cR(s, z, \bma)  + \gamma \mathbb{E}_{s', z' \sim\cP}\left[ \tilde{v}^{\bm\sigma^1,\bm\sigma^2}(s', z')\right] \right),
\end{align}
where each player $i$ has a best-response policy $\widetilde \cG$ that takes the form $\widetilde{\boldsymbol{\sigma}}(\cdot | \bmx)$;

b) Given a function $\widetilde{v}:\cS\times \cZ\to \mathbb{R}$, the stable point solution for $\widetilde{\cG}$ is a given by 
\begin{align}
\underset{k\to\infty}{\lim}\tilde{T}^k\widetilde{\bmv}^{\tilde{\bm\sigma}^1,\tilde{\bm\sigma}^2}=\underset{{\bm{\tilde{\sigma}}}^1\in\tilde{\bm{\Sigma}}^1}{\max}\underset{\bm{\tilde{\sigma}}^2\in \tilde{\bm{\Sigma}}^2}{\min}\widetilde \bmv^{\bm{\hat{\sigma}}^1,\bm{\tilde{\sigma}}^2}=\underset{\bm{\tilde{\sigma}}^2\in \bm{\tilde{\Sigma}}^2}{\min}\underset{{\bm{\tilde{\sigma}}}^1\in\bm{\tilde{\Sigma}}^1}{\max}\widetilde \bmv^{\bm{\hat{\sigma}}^1,\bm{\hat{\sigma}}^2}=\hat{\widetilde \bmv},
\end{align}
where $(\bm{\hat{\tilde{\sigma}}}^1,\bm{\hat{\tilde{\sigma}}}^2)$ is an equilibrium $\widetilde\cG$ and $\tilde{T}$ is the Bellman operator of $\widetilde\cG$.
\end{theorem}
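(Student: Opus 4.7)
The plan is to reduce the budgeted stochastic game $\widetilde\cG$ to an unconstrained two-sided impulse control game over the augmented state space $\bm\cX=\cS\times\mathbb{N}^2$ and then invoke the results already established for the unconstrained case: Theorem~\ref{thrm:minimax-exist}, Theorem~\ref{theorem:joint-sol} and Proposition~\ref{prop:mpe_result}. The key observations driving the reduction are that the augmented tuple $\bmx_t=(s_t,y_t,z_t)$ evolves as a Markov process under the augmented kernel $\widetilde\cP$, and that the reshaped reward $\widetilde\cR$ penalises budget violations so harshly that no best-response policy can ever violate the constraint.

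First I would verify that $\widetilde\cG$ inherits Assumptions~A.1--A.5. The counter updates $y_{t+1}=y_t-\sum_{k\geq 1}\delta^t_{\tau_k}$ and $z_{t+1}=z_t-\sum_{r\geq 1}\delta^t_{\rho_r}$ are measurable functions of the current state and chosen actions, so the augmented chain is Markov. On the feasible region $\{y,z\geq 0\}$ the reshaped reward $\widetilde\cR$ coincides with $\cR$ and hence inherits A.2 and A.5; the infinite penalties on the infeasible region are never realised in equilibrium (see below) and may be replaced by any sufficiently large finite constant for the purposes of the contraction analysis. The moment bound A.3 is immediate since $(y_t,z_t)$ lives in a finite set, and A.4 transfers verbatim because it only constrains the $s$-marginal.

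With A.1--A.5 in hand, part (a) follows by applying Theorem~\ref{thrm:minimax-exist} to $\widetilde\cG$. The intervention operators $\bm\cM_1,\bm\cM_2$ extend naturally to functions on $\bm\cX$, since their definitions depend only on pointwise optimisation over the action arguments and on an expectation under the transition kernel. Consequently the Bellman operator $\tilde T$ obtained by substituting $(s,v,\cR,\cP)\mapsto(\bmx,\tilde v,\widetilde\cR,\widetilde\cP)$ in~\eqref{bellman_op} admits a unique fixed point, which is the claimed $\tilde v$, and Proposition~\ref{prop:mpe_result} delivers the Markov perfect best-response policies $\widetilde{\boldsymbol{\sigma}}(\cdot|\bmx)$. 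Part (b) is then a direct application of Theorem~\ref{theorem:joint-sol} to $\widetilde\cG$: the iterates $\tilde T^k\widetilde\bmv$ converge to that fixed point, which by the minimax identity of Theorem~\ref{thrm:minimax-exist} coincides with both the $\max\min$ and $\min\max$ values. To close the loop with the original constrained program~\eqref{capped_problem_value}, one observes that if any best-response policy placed positive probability on exhausting a player's budget, the $\pm\infty$ reshaping would make it strictly dominated by the policy that refrains from further interventions once the corresponding counter reaches zero; hence every equilibrium of $\widetilde\cG$ is feasible for the original constrained problem and attains the same value.

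The main obstacle I anticipate is a careful treatment of Assumption~A.1 on the augmented chain, since the counters $y_t,z_t$ are monotonically non-increasing and the full product chain on $\bm\cX$ is therefore \emph{not} ergodic in the sense required. The fix is to work class by class---there are only $(n_1+1)(n_2+1)$ such classes---and to patch the value function together via the dynamic programming recursion in the budget dimension, terminating with the boundary condition at $y=0$ or $z=0$ where the corresponding player is forced to choose the no-action option. Equivalently, one may interpret the problem as having an effective finite horizon of at most $n_1+n_2$ interventions, which sidesteps the ergodicity question entirely and makes the contraction analysis of $\tilde T$ straightforward.
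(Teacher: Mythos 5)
Your proposal follows essentially the same route as the paper: the paper's proof of Theorem~\ref{thm:maxmin_budget} simply augments the state with the budget counters and invokes Theorem~\ref{theorem:joint-sol} together with Theorem~2 of \citep{sootla2022saute} ``with minor modifications.'' If anything your write-up is more careful than the paper's two-sentence argument, since you explicitly flag and repair the two points the paper leaves implicit --- the unboundedness of the reshaped reward $\widetilde\cR$ (which would otherwise conflict with Assumptions A.2 and A.5) and the loss of ergodicity of the augmented chain caused by the monotone counters (which you handle by decomposing over the finitely many budget classes).
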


The result has several important implications. The first is that we can use a modified version of Algorithm 1 to obtain the solution of the problem while guaranteeing convergence (under standard assumptions). Secondly, the state augmentation procedure admits a Markovian representation of the best response strategies. 

Theorem \ref{thm:maxmin_budget}  shows Algorithm 2 converges to the minimax equilibrium when the players face an action-budget constraint.

\begin{algorithm}[!ht]
\begin{algorithmic}[1] 
		\STATE {\bfseries Input:} Constant $\epsilon\geq 0$, 
		\STATE {\bfseries Initialise:} Q-function, $\bmQ_0$ 
		% \STATE Given reward objective function, $R$, initialise Rollout Buffers $\mathcal{B}_{\pi}$, $\mathcal{B}_{\mathfrak{g}}$ % \textcolor{red}{MUST INCLUDE OFF POLICY PART}
		\REPEAT
  \STATE{$n\gets 0$}
		    % \STATE Reset state $s_0$, Reset Rollout Buffers $\mathcal{B}_{\pi}$, $\mathcal{B}_{\mathfrak{g}}$
		    \FOR{$t=0,1,\ldots$}
    		    \STATE Compute $a_{t}\in\arg\max \widetilde\bmQ_n(\bmx_{t},0,b_t), b_{t}\in\arg\min \widetilde\bmQ_n(\bmx_{t},a_t,0)$ 
    		   % \STATE Sample $g_t \sim \mathfrak{g}(\cdot|s_t)$
    		    \IF{$\bm\cM_1\widetilde\bmQ_n\geq\widetilde\bmQ_n$} 
         \STATE Apply $a_{t}$ so $\bmx_{t+1}\sim \widetilde 
         \cP(\cdot|\bmx_t,a_t,0),$
                    
        		    \STATE Receive rewards $r_{t} = \widetilde\cR(\bmx_t, a_{t},0)$
                    % \STATE Store $(s_t, a_{t}, s_{t+1}, r_t)$ in $\mathcal{B}_{\pi}$
    		    \ELSE
          \IF{$\bm\cM_2\widetilde\bmQ_n\leq\widetilde\bmQ_n$}
    		     \STATE Apply $b_{t}$ so $\bmx_{t+1}\sim \widetilde \cP(\cdot|\bmx_t,0,b_t),$
                `\STATE Receive rewards $r_{t} = \widetilde\cR(s_{t},0,b_t)$
                    \ELSE
                    		     \STATE Apply no action so $\bmx_{t+1}\sim \widetilde \cP(\cdot|\bmx_t,\bm0),$
        	\STATE Receive rewards $r_t = \widetilde\cR(\bmx_{t},\bm0)$.
    		    \ENDIF
                    \ENDIF
                    % \STATE Store $(s_t, g_t, s_{t+1}, r_t)$ in $\mathcal{B}_{\mathfrak{g}}$
        	\ENDFOR
    	\STATE{\textbf{// Learn $\hat{\bmQ}$}}
    	%\STATE Sample a batch of $|B_{\pi}|$ transitions $B_{\pi}$ from $\mathcal{B}_{\pi}$
    	%\STATE Sample a batch of $|B_{\mathfrak{g}}|$ transitions $B_{\mathfrak{g}}$ from $\mathcal{B}_{\mathfrak{g}}$
        \STATE Update $\widetilde\bmQ_n$ function according to the update rule \eqref{q_learning_update} 
        % \STATE Update policy  $\mathfrak{g}^i$ and critic $V_{\mathfrak{g}}$ networks using $\mathfrak{B}_{\mathfrak{g}}$
        % \ENDFOR
         \UNTIL{$|\widetilde\bmQ_{n}(\bmx,\bma)-\widetilde\bmQ_{n-1}(\bmx,\bma)|\leq\epsilon$}
	\caption{}
\label{algo:1b} 
\end{algorithmic}         
\end{algorithm}

\section{Discussion on Subcases}
As discussed earlier, stochastic games with minimally bounded action costs studied in this paper have a number of applications within economics, financial systems and computer science. In addition to direct application, there are a number of important subcases that emerge as degenerate cases of our general framework. 

\subsubsection*{Games of Impulse Control and Stopping~\citep{mguni2018viscosity,campi2020nonzero}}
In this setting, two players engage in a zero sum game in which one player's actions are subject to minimally bounded costs (thus the player executes their actions using impulse controls) and the other player's only action is to decide when to stop the game. This induces a game of impulse control and stopping. This is a degenerate case of the stochastic game with minimally bounded costs when one of the players faces an action budget that caps the number of allowed actions to $1$.

A general case of this includes a flow payoff which is a payment received by the players after terminating the game.  In particular, the game $\cG$ degenerates into this case 

\subsubsection*{Dynkin games (Optimal Stopping Games) \citep{ekstrom2008optimal, martyr2016solving, martyr2015optimal}}
In this setting, two players engage in a zero sum game in which each player's only action is to decide when to stop the game. When this action is executed by a player, the game is arrested for both players at which point the players receive a terminal payoff thus the players must choose a time to stop which maximises their individual expected payoff contingent on the decision of the other player which gives rise to a strategic interaction. This is a degenerate case of the stochastic game with minimally bounded costs. A general case of this includes a flow payoff which is a payment received by the players after terminating the game.  In particular, the game $\cG$ degenerates into this case when the number of actions for each player is capped at $1$.

\noindent\textit{Optimal Stopping Problems.} An obvious yet well-known degenerate case of this subproblem are optimal stopping problems (in discrete-time). This case arises when one of the player's strategies is fixed in the optimal stopping game.

\subsubsection*{Impulse control (in discrete-time) \citep{bensoussan2008impulse}} An obvious yet well-known degenerate case of this subproblem are impulse Control problems (in discrete-time). This case arises when one of the player's strategies is fixed in the game $\cG$.

\section{Related Work}

In continuous-time optimal control theory \citep{oksendal2003stochastic,todorov2006optimal}, problems in which the player faces a cost for each action are tackled with a form of policy known as \textit{impulse control} \citep{belak2017general,mguni2018viscosity}. In impulse control frameworks, the dynamics of the system are modified through a sequence of discrete actions or bursts chosen at times that the player chooses to apply the control policy. This distinguishes impulse control models from classical decision methods in which a player takes actions at each time step while being tasked with the decision of only which action to take. Impulse control models represent appropriate modelling frameworks for financial environments with transaction costs, liquidity risks and economic environments in which players face fixed adjustment costs (e.g. \textit{menu costs}) \citep{midrigan2011menu}.  However, as yet the literature on learning environments with impulse control remains sparse \citep{mguni2022timing} and extension to the multi-player settings with learning is at present absent.

% The impulse control mechanism results in a framework in which the player determines the set of states to perform actions and then, determines the optimal actions at these states. 
% % The first of these problems has a markedly reduced decision space in comparison to the standard RL player's problem (though the players share the same experiences).
% Crucially, the decision space for the determining whether or not to execute an action is $\mathcal{S}$ i.e at each state it makes a binary decision. Consequently, the learning process for aspect is much quicker than a policy which must optimise over a decision space which is $|\mathcal{S}||\mathcal{A}|$ (choosing an action from its action space at every state). This results in the player rapidly learning which states to focus on to learn which actions to perform. 

The current setting is intimately related to the \textit{optimal stopping problem} which widely occurs in finance, economics, and computer science \citep{mguni2019cutting,tsitsiklis1999optimal}. In the optimal stopping problem, the task is to determine a criterion that determines when to arrest the system and receive a terminal reward. In this case, standard (MA)RL methods are unsuitable since they require an expensive sweep (through the set of states) to determine the optimal point to arrest the system.   The impulse control problem can be viewed as an augmented problem of optimal stopping since the player must now determine both a sequence of points to perform an action or \textit{intervene} and their optimal magnitudes --- only acting when the cost of action is justified \citep{oksendal2019approximating}. Adapting RL to tackle optimal stopping problems has been widely studied \citep{tsitsiklis1999optimal,becker2018deep, chen2020zap} and applied to a variety of real-world settings within finance \citep{fathan2021deep}, radiation therapy \citep{ajdari2019towards} and network operating systems \citep{alcaraz2020online}. 
% Our work serves as a natural extension to RL approaches to optimal stopping to the case in which the player must decide at which points to take many actions. As with optimal stopping, standard RL methods cannot efficiently tackle this problem since determining whether to perform a $0$ action requires a costly sweep through the action space at every state~\citep{tsitsiklis1999optimal}.  \revision{In ~\citep{pang2021sparsity} the authors introduce ``sparse action'' with a similar motivation as impulse control. However, the authors treat only the discrete action space case. The authors in \citep{pang2021sparsity}  do not discuss a broader theoretical framework of dealing with ``sparse actions'', and develop purely algorithmic solutions. Additionally, unlike the approach taken in \citep{pang2021sparsity}, the problem setting we consider is one in which the player faces a cost for each action - the produces a need for the player to be selective about where it performs actions (but does not necessarily constrain the magnitude or choice of those actions).   }

Closely related to impulse control, switching control structures have been studied in learning contexts to improve learning efficiency in reinforcement learning \citep{mguni2022ligs, mguni2023learning, pmlr-v202-mguni23a}. ROSA \citep{mguni2023learning} and LIGS \citep{mguni2022ligs} incorporate a dual switching method to activate a reward-shaping module to promote state visitations and coordination between adaptive agents in an RL and MARL respectively. LICRA \citep{mguni2022timing} adds a trainable switch to decide whether to use a costly execution-policy system to generate actions. Similarly, MANSA \citep{pmlr-v202-mguni23a} has an additional switch to decide whether to activate centralised training, a computationally expensive learning mode that facilitates coordination among adaptive agents. 

\section{Conclusion}
We presented a novel method to tackle the problem of learning how to select when to act in addition to learning which actions to execute.  Our framework, which is a general tool for tackling problems of this kind seamlessly adopts RL algorithms enabling them to efficiently tackle problems in which the player must be selective about when it executes actions. This is of fundamental importance in practical settings where performing many actions over the horizon can lead to costs and undermine the service life of machinery. We demonstrated that our solution, our framework which at its core has a sequential decision structure that first decides whether or not an action ought to be taken under the action policy can solve tasks where the player faces costs with extreme efficiency as compared to leading reinforcement learning methods. In some tasks, we showed that our framework is able to solve problems that are unsolvable using current reinforcement learning machinery. We envisage that this framework can serve as the basis for extensions to different settings including adversarial training for solving a variety of problems within RL.
% moreover, the set $A$ takes the form $A=\{s\in \mathcal{S},m\in M|\bm\cM_i^{\pi,\mathfrak{g}}\psi(s_{\tau})\geq \psi(s_{t},I(\tau))\}.$

\bibliography{main}
\bibliographystyle{iclr2023_conference}

% We prove our results for the general case in which $F:\mathbb{N}\times\mathcal{B}\times\mathbb{N}\times\mathcal{B}\to\mathbb{R}$. For this we employ the following shorthands $b_t:=(t,b_t)\in\mathbb{N}\times\mathcal{B}$ and $\times_{i=1}^2\mathcal{A}^i\equiv\boldsymbol\mathcal{A}$.

\section{Appendix}\label{sec:proofs_appendix}

We begin the analysis with some preliminary lemmata and definitions that are useful for proving the main results.

\begin{definition}{A.1}
An operator $T: \mathcal{V}\to \mathcal{V}$ is said to be a \textbf{contraction} w.r.t a norm $\|\cdot\|$ if there exists a constant $c\in[0,1[$ such that for any $V_1,V_2\in  \mathcal{V}$ we have that:
\begin{align}
    \|TV_1-TV_2\|\leq c\|V_1-V_2\|.
\end{align}
\end{definition}

\begin{definition}{A.2}
An operator $T: \mathcal{V}\to  \mathcal{V}$ is \textbf{non-expansive} if $\forall v,V_2\in  \mathcal{V}$ we have:
\begin{align}
    \|Tv-TV_2\|\leq \|v-V_2\|.
\end{align}
\end{definition}
% \begin{definition}{A.3}
% The \textbf{residual} of a vector $V\in \mathcal{V}$ w.r.t the operator $T: \mathcal{V}\to  \mathcal{V}$ is:
% \begin{align}
%     \epsilon_T(V):= \|TV-V\|.
% \end{align}
% \end{definition}

\begin{lemma} \label{max_lemma}
For any
$f: \mathcal{V}\to\mathbb{R},g: \mathcal{V}\to\mathbb{R}$, we have that:
\begin{align}
\left\|\underset{a\in \mathcal{V}}{\max}\:f(a)-\underset{a\in \mathcal{V}}{\max}\: g(a)\right\| \leq \underset{a\in \mathcal{V}}{\max}\: \left\|f(a)-g(a)\right\|.    \label{lemma_1_basic_max_ineq}
\end{align}
\end{lemma}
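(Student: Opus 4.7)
The plan is to reduce the two-sided absolute-value inequality to a one-sided inequality by a symmetry argument, then bound each side by choosing the maximiser and relaxing via the pointwise difference. Since $f,g$ take values in $\mathbb{R}$, the norm $\|\cdot\|$ on the left can be interpreted as the absolute value, so that $\left\|\max_a f(a) - \max_a g(a)\right\| = |\max_a f(a) - \max_a g(a)|$, and the problem reduces to bounding this scalar quantity.

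First I would dispose of the easy sign case. Without loss of generality assume $\max_{a\in\mathcal{V}} f(a) \geq \max_{a\in\mathcal{V}} g(a)$; the opposite case follows by swapping the roles of $f$ and $g$, which leaves the right-hand side invariant. Let $a^\star \in \arg\max_{a\in\mathcal{V}} f(a)$ (existence can be assumed, or one can pass to an $\varepsilon$-maximiser and send $\varepsilon\to 0$ if $\mathcal{V}$ is not compact). Then
\begin{align*}
0 \le \max_{a\in\mathcal{V}} f(a) - \max_{a\in\mathcal{V}} g(a) = f(a^\star) - \max_{a\in\mathcal{V}} g(a) \le f(a^\star) - g(a^\star),
\end{align*}
where the final inequality uses $\max_{a\in\mathcal{V}} g(a) \ge g(a^\star)$.

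Next I would relax the pointwise difference to the uniform one: $f(a^\star) - g(a^\star) \le |f(a^\star) - g(a^\star)| \le \max_{a\in\mathcal{V}} |f(a) - g(a)|$. Combining the displays yields $\left|\max_a f(a) - \max_a g(a)\right| \le \max_{a\in\mathcal{V}} |f(a) - g(a)|$, which is exactly \eqref{lemma_1_basic_max_ineq} once $\|\cdot\|$ is read as the absolute value. I do not anticipate a serious obstacle; the only subtlety is the existence of a maximiser, which is standard under the implicit compactness/finiteness assumptions used throughout the paper (and is bypassed with the $\varepsilon$-maximiser trick in general).
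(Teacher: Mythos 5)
Your proof is correct and is essentially the same argument as the paper's: both reduce to the one-sided inequality by symmetry in $f$ and $g$ and then bound $\max f - \max g$ by the pointwise difference. The only cosmetic difference is that you evaluate at an explicit maximiser $a^\star$ (hence the need for your $\varepsilon$-maximiser caveat), whereas the paper writes $f(a)\le \|f(a)-g(a)\|+g(a)$ and uses subadditivity of the max, which sidesteps attainment; both routes are fine.
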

\begin{proof}
We restate the proof given in \citep{mguni2019cutting}:
\begin{align}
f(a)&\leq \left\|f(a)-g(a)\right\|+g(a)\label{max_inequality_proof_start}
\\\implies
\underset{a\in \mathcal{V}}{\max}f(a)&\leq \underset{a\in \mathcal{V}}{\max}\{\left\|f(a)-g(a)\right\|+g(a)\}
\leq \underset{a\in \mathcal{V}}{\max}\left\|f(a)-g(a)\right\|+\underset{a\in \mathcal{V}}{\max}\;g(a). \label{max_inequality}
\end{align}
Deducting $\underset{a\in \mathcal{V}}{\max}\;g(a)$ from both sides of (\ref{max_inequality}) yields:
\begin{align}
    \underset{a\in \mathcal{V}}{\max}f(a)-\underset{a\in \mathcal{V}}{\max}g(a)\leq \underset{a\in \mathcal{V}}{\max}\left\|f(a)-g(a)\right\|.\label{max_inequality_result_last}
\end{align}
After reversing the roles of $f$ and $g$ and redoing steps (\ref{max_inequality_proof_start}) - (\ref{max_inequality}), we deduce the desired result since the RHS of (\ref{max_inequality_result_last}) is unchanged.
\end{proof}
\begin{lemma}\label{max_l.val_inequality}
Define ${\rm val}^+[f]:=\min_{b\in\mathbb{B}}\max_{a\in\mathbb{A}}f(a,b)$ and define\\ ${\rm val}^-[f]:=\max_{a\in\mathbb{A}}\min_{b\in\mathbb{B}}f(a,b)$, then for any $b\in\mathbb{B}$ we have that for any $f,g\in \mathbb{L}$ and for any $c\in\mathbb{R}_{>0}$:
\begin{align}\nonumber
\left|\max_{a\in\mathbb{A}}f(a,b)-\max_{a\in\mathbb{A}}g(a,b)\right|&\leq c
\implies
\left|{\rm val}^-[f]-{\rm val}^-[g]\right|\leq c.
\end{align}
\end{lemma}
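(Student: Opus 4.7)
The plan is to reduce the statement to a direct analog of Lemma 1 applied at the \emph{outer} layer of the val operator. Observe that ${\rm val}^+[f] = \min_b \phi_f(b)$ where $\phi_f(b) := \max_{a\in\mathbb{A}} f(a,b)$, so the hypothesis, rewritten, simply says $|\phi_f(b) - \phi_g(b)| \leq c$ uniformly in $b \in \mathbb{B}$. The first step is therefore to derive the $\min$-analog of Lemma 1: for any $p, q : \mathbb{B} \to \mathbb{R}$,
$$\bigl|\min_{b\in\mathbb{B}} p(b) - \min_{b\in\mathbb{B}} q(b)\bigr| \leq \max_{b\in\mathbb{B}} |p(b) - q(b)|.$$
The argument mirrors Lemma 1 verbatim: from $p(b) \leq |p(b)-q(b)| + q(b)$, take $\min_b$ on both sides to get $\min_b p(b) \leq \max_b |p(b)-q(b)| + \min_b q(b)$; swapping the roles of $p$ and $q$ yields the two-sided bound.

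Applying this min-analog with $p = \phi_f$ and $q = \phi_g$ and invoking the hypothesis gives
$$\bigl|\min_b \phi_f(b) - \min_b \phi_g(b)\bigr| \leq \max_b |\phi_f(b) - \phi_g(b)| \leq c,$$
which is precisely $|{\rm val}^+[f] - {\rm val}^+[g]| \leq c$. To bridge from ${\rm val}^+$ to the claimed bound on ${\rm val}^-$, I would invoke Theorem~\ref{thrm:minimax-exist}, which establishes that the value of the game exists, so that ${\rm val}^+[\cdot] = {\rm val}^-[\cdot]$ in the class of value functions to which this auxiliary lemma is applied. The bound for ${\rm val}^+$ then transfers immediately.

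The main obstacle, and the reason the statement is not merely a two-line restatement of Lemma 1, is the asymmetry between hypothesis and conclusion. The hypothesis controls differences of the inner $\max_a$ pointwise in $b$, which naturally feeds into an outer $\min_b$ (i.e.\ ${\rm val}^+$), whereas the statement is phrased in terms of ${\rm val}^-[\cdot] = \max_a \min_b(\cdot)$. One can construct a small $2\times 2$ example in which $f$ and $g$ share their row-maxima (so the hypothesis holds with $c=0$) but have swapped row-minima, making ${\rm val}^-[f] \neq {\rm val}^-[g]$; hence the implication is not valid in abstract generality. The lemma is therefore to be read through the identity ${\rm val}^+ = {\rm val}^-$ guaranteed by the saddle-point existence result of Theorem~\ref{thrm:minimax-exist}, at which point it reduces cleanly to the $\min$-analog of Lemma 1 described above.
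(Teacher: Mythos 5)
The first half of your argument is correct and is, in fact, cleaner than what the paper does: writing ${\rm val}^+[f]=\min_{b}\phi_f(b)$ with $\phi_f(b)=\max_{a}f(a,b)$, the stated hypothesis is exactly a uniform bound on $|\phi_f-\phi_g|$, and the $\min$-analogue of Lemma~\ref{max_lemma} gives $|{\rm val}^+[f]-{\rm val}^+[g]|\le c$ with no further assumptions. Your observation that the literal conclusion about ${\rm val}^-$ does not follow is also right; a $2\times 2$ example with identical row-maxima but different values of $\max_a\min_b$ defeats the implication as stated. The paper's own proof takes a different route: at the step it labels as the assumption, it quietly replaces the stated hypothesis $|\max_a f(a,b)-\max_a g(a,b)|\le c$ by the strictly stronger condition $\max_a|f(a,b)-g(a,b)|\le c$ uniformly in $b$, and then chains the max- and min-versions of Lemma~\ref{max_lemma} to bound the difference of the iterated optima. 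Under that strengthened hypothesis the conclusion is genuine (and the strengthened inequalities are what the paper actually cites later, in the proof of Proposition~\ref{lemma:bellman_contraction}), but it is not a proof of the implication as written.

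The genuine gap in your proposal is the bridge from ${\rm val}^+$ to ${\rm val}^-$ via Theorem~\ref{thrm:minimax-exist}. First, this is circular: the contraction estimate of Proposition~\ref{lemma:bellman_contraction}, which underlies Theorem~\ref{theorem:joint-sol} and hence Theorem~\ref{thrm:minimax-exist}, explicitly relies on the inequalities established inside the proof of this very lemma, so the minimax theorem sits downstream of it and is not available as an input. Second, it is a type mismatch: Theorem~\ref{thrm:minimax-exist} asserts ${\rm val}^+={\rm val}^-$ for the value functional of the game $\cG$ optimised over strategy spaces, not for an arbitrary $f\in\mathbb{L}$ on $\mathbb{A}\times\mathbb{B}$; for a generic such $f$ one only has the weak inequality ${\rm val}^-[f]\le{\rm val}^+[f]$, and your own counterexample shows that equality can fail exactly where you need it. To land on a correct statement you should either keep the ${\rm val}^+$ conclusion, which is what the stated hypothesis naturally supports, or adopt the paper's strengthened hypothesis $\max_a|f(a,b)-g(a,b)|\le c$ uniformly in $b$, under which both the ${\rm val}^+$ and ${\rm val}^-$ bounds follow by composing the two one-variable estimates.
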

\begin{proof}[Proof of Lemma \ref{max_l.val_inequality}.]
We begin by noting the following inequality for any
$f:\mathcal{V}\times\mathcal{V}\to\mathbb{R},g:\mathcal{V}\times\mathcal{V}\to\mathbb{R}$ such that $f,g\in \mathbb{L}$ we have that for all $b\in\mathcal{V}$:
\begin{align}
\left|\underset{a\in\mathcal{V}}{\max}\:f(a,b)-\underset{a\in\mathcal{V}}{\max}\: g(a,b)\right| \leq \underset{a\in\mathcal{V}}{\max}\: \left|f(a,b)-g(a,b)\right|.    
\end{align}
From (\ref{lemma_1_basic_max_ineq}) we can straightforwardly derive the fact that for any $b\in\mathcal{V}$:
\begin{align}
\left|\underset{a\in\mathcal{V}}{\min}\: f(a,b)-\underset{a\in\mathcal{V}}{\min}\: g(a,b)\right| \leq \underset{a\in\mathcal{V}}{\max}\: \left|f(a,b)-g(a,b)\right|,   \label{lemma_1_max_ineq_min_version} 
\end{align}
% (this can be seen by negating each of the functions in (\ref{lemma_1_basic_max_ineq}) and using the properties of the $\max$ operator).

Assume that for any $b\in\mathcal{V}$ the following inequality holds: 
\begin{align}
\underset{a\in\mathcal{V}}{\max}\: \left|f(a,b)-g(a,b)\right|  \leq c \label{lemma_1_assumption} 
\end{align}
Since (\ref{lemma_1_max_ineq_min_version}) holds for any $b\in\mathcal{V}$ and, by (\ref{lemma_1_basic_max_ineq}), we have in particular that
\begin{align}
&\nonumber\left|\underset{b\in\mathcal{V}}{\max}\;\underset{a\in\mathcal{V}}{\min}\: f(a,b)-\underset{b\in\mathcal{V}}{\max}\;\underset{a\in\mathcal{V}}{\min}\: g(a,b)\right| 
\\&\nonumber\leq
\underset{b\in\mathcal{V}}{\max}\left|\underset{a\in\mathcal{V}}{\min}\: f(a,b)-\underset{a\in\mathcal{V}}{\min}\: g(a,b)\right| 
\\&\leq \underset{b\in\mathcal{V}}{\max}\;\underset{a\in\mathcal{V}}{\max}\: \left|f(a,b)-g(a,b)\right|  \leq c,\label{lemma_1_max_ineq_fin} 
\end{align}
whenever (\ref{lemma_1_assumption}) holds which gives the required result.
\end{proof}

\begin{lemma}\label{max_min_inequality_2} For any $f,g,h\in \mathbb{L}$ and for any $c\in\mathbb{R}_{>0}$ we have that:
\begin{align}\nonumber
 \left\| f- g\right\|\leq c
\implies
\left\|\min\{ f,h\}-\min\{ g,h\}\right\|\leq c.
\end{align}
\end{lemma}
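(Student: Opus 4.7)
The plan is to reduce to a pointwise estimate and then take norms. Concretely, I will first establish the elementary scalar inequality
\[
\bigl|\min\{\alpha,\gamma\}-\min\{\beta,\gamma\}\bigr|\leq |\alpha-\beta|,\qquad \forall\,\alpha,\beta,\gamma\in\mathbb{R},
\]
which is the 1-Lipschitz property of $\min(\cdot,\gamma)$. From this pointwise bound, the lemma will follow by applying the norm $\|\cdot\|$ and using its monotonicity (as is implicit in the sup-type norms employed throughout the contraction arguments of the paper, e.g.\ in Proposition \ref{lemma:bellman_contraction}).

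For the scalar inequality I would do a short case analysis on which of $\alpha,\beta$ lies below $\gamma$. If $\alpha\leq\gamma$ and $\beta\leq\gamma$, the left-hand side equals $|\alpha-\beta|$. If $\alpha\geq\gamma$ and $\beta\geq\gamma$, both mins equal $\gamma$ and the difference is zero. In the mixed case, say $\alpha\leq\gamma\leq\beta$, then $\min\{\alpha,\gamma\}=\alpha$ and $\min\{\beta,\gamma\}=\gamma$, so $|\alpha-\gamma|\leq|\alpha-\beta|$ because $\gamma$ lies between $\alpha$ and $\beta$; the remaining mixed case is symmetric. Thus in every case the bound by $|\alpha-\beta|$ is valid.

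Applying this at each point of the domain with $\alpha=f(x)$, $\beta=g(x)$, $\gamma=h(x)$ yields
\[
\bigl|\min\{f,h\}(x)-\min\{g,h\}(x)\bigr|\leq |f(x)-g(x)|\qquad\text{for all }x,
\]
so $\min\{f,h\}-\min\{g,h\}$ is pointwise dominated (in absolute value) by $f-g$. Taking $\|\cdot\|$ of both sides and using the hypothesis $\|f-g\|\leq c$ yields the claim.

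An alternative, essentially equivalent route is to invoke the identity $\min\{a,b\}=-\max\{-a,-b\}$ together with Lemma \ref{max_lemma} applied to $-f$ and $-g$ (or the scalar version thereof), which immediately produces the desired inequality without a case split. The result is essentially a restatement of the 1-Lipschitz nature of the min operator, so no substantive obstacle arises; the only care needed is to ensure the norm under consideration respects pointwise domination, which holds for the sup norm implicitly used in the rest of the paper.
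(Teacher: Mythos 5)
Your proof is correct. One point of comparison is slightly unusual here: the paper does not actually supply a proof of Lemma~\ref{max_min_inequality_2} at all --- it is stated without argument immediately after the proof of Lemma~\ref{max_lemma}, evidently on the understanding that it follows by the same kind of reasoning. The paper's template (in Lemma~\ref{max_lemma}) is the add-and-subtract trick $f(a)\leq |f(a)-g(a)|+g(a)$ followed by taking the maximum and swapping the roles of $f$ and $g$; your route instead isolates the scalar $1$-Lipschitz property of $\gamma\mapsto\min\{\cdot,\gamma\}$ via a case split and then passes to the norm by monotonicity under pointwise domination. Both are sound, and your second route --- writing $\min\{a,b\}=-\max\{-a,-b\}$ and applying the two-element instance of Lemma~\ref{max_lemma} --- is the one that most directly ties the unproved lemma back to the machinery the paper does prove, so it is arguably the proof the authors had in mind. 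Your one caveat (that $\|\cdot\|$ must respect pointwise absolute domination) is well placed; note that it holds not only for the sup norm but also for the $L_2$-type norm $\|u\|=(\mathbb{E}[u^2])^{1/2}$ that the paper uses elsewhere (e.g.\ in Lemma~\ref{non_expansive_P}), since $|u|\leq|v|$ pointwise implies $\mathbb{E}[u^2]\leq\mathbb{E}[v^2]$, so the argument is robust to the paper's ambiguity about which norm is in force.
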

\begin{lemma}\label{max_triple_inequality} Let the functions $f,g,h\in \mathbb{L}$ then
\begin{align}
\left\|\max \{ f,h\}-\max\{g,h\}\right\|&\leq \|f-g\|.
\end{align}
\end{lemma}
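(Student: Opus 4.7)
The plan is to reduce the lemma to the pointwise inequality
\[
\bigl|\max\{f(x),h(x)\}-\max\{g(x),h(x)\}\bigr|\leq |f(x)-g(x)|,\qquad \forall x,
\]
after which the norm inequality follows immediately by monotonicity of $\|\cdot\|$ (the same norm is used on both sides, and by the convention in the preceding lemmas this is the supremum norm, which is monotone under pointwise domination).

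To establish the pointwise inequality, I would proceed by a short case analysis on the values of $f(x),g(x),h(x)$. There are four cases: (i) both $f(x),g(x)\geq h(x)$, in which case the difference equals $|f(x)-g(x)|$ exactly; (ii) both $f(x),g(x)\leq h(x)$, in which case the difference is $0$; (iii) $f(x)\geq h(x)\geq g(x)$, so the difference is $f(x)-h(x)\leq f(x)-g(x)$; and (iv) the symmetric case $g(x)\geq h(x)\geq f(x)$, which gives $h(x)-f(x)\leq g(x)-f(x)$. In all four cases the pointwise bound holds. Alternatively, one can give a one-line argument by writing $\max\{u,h\}=h+(u-h)_+$, so that
\[
\max\{f,h\}-\max\{g,h\}=(f-h)_+-(g-h)_+,
\]
and then invoking the $1$-Lipschitz property of $x\mapsto x_+$ to conclude $|(f-h)_+-(g-h)_+|\leq |(f-h)-(g-h)|=|f-g|$. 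Either derivation is elementary.

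Finally, I would take the norm on both sides of the pointwise inequality. Since the norm is monotone under pointwise domination (this was used implicitly throughout the paper's earlier Lemmas \ref{max_lemma} and \ref{max_min_inequality_2}), the required bound $\|\max\{f,h\}-\max\{g,h\}\|\leq \|f-g\|$ follows at once. There is no substantial obstacle here; the only care needed is to treat all four orderings of $f(x),g(x)$ relative to $h(x)$ so that the pointwise bound is genuinely uniform in $x$, after which the norm step is immediate.
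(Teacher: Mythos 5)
The paper actually states Lemma~\ref{max_triple_inequality} without any proof, so there is no in-paper argument to compare against; your job was therefore to supply the missing elementary argument, and you have done so correctly. Your reduction to the pointwise bound $|\max\{f(x),h(x)\}-\max\{g(x),h(x)\}|\leq|f(x)-g(x)|$ followed by monotonicity of the norm is the right structure, and it is worth noting that the norm step survives even though the paper's $\|\cdot\|$ is sometimes the sup norm and sometimes $L_2$ (e.g.\ in Lemma~\ref{non_expansive_P}): any norm monotone under pointwise domination of absolute values works, so your parenthetical pinning it to the sup norm is unnecessarily restrictive but harmless. One small slip: in your case (iv), where $g(x)\geq h(x)\geq f(x)$, the quantity to bound is $|\max\{f,h\}-\max\{g,h\}|=g(x)-h(x)$, and the relevant inequality is $g(x)-h(x)\leq g(x)-f(x)$ (which holds since $f(x)\leq h(x)$); the inequality you wrote, $h(x)-f(x)\leq g(x)-f(x)$, is true but bounds the wrong difference. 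This does not damage the proof, both because the correction is immediate and because your alternative one-line argument via $\max\{u,h\}=h+(u-h)_+$ and the $1$-Lipschitz property of $x\mapsto x_+$ is complete and self-contained. Your route is also consistent in spirit with the add-and-subtract device the paper does spell out for Lemma~\ref{max_lemma} (from which the present lemma can alternatively be derived by applying that inequality to the two-element index set $\{f,h\}$ versus $\{g,h\}$), so nothing here departs from the paper's toolkit.
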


\begin{lemma}{A.4}\label{non_expansive_P}
The probability transition kernel $P$ is non-expansive, that is:
\begin{align}
    \|Pv-PV_2\|\leq \|v-V_2\|.
\end{align}
\end{lemma}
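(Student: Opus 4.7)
The plan is to unpack the definition of the operator $P$ as a Markov transition kernel and then exploit the defining property that, for each state $s$, the row $P(\cdot \mid s)$ is a probability measure and hence integrates to $1$. Writing $Pv(s) = \sum_{s' \in \mathcal{S}} P(s' \mid s)\, v(s')$, the difference $Pv(s) - Pv'(s)$ becomes a convex combination of the pointwise differences $v(s') - v'(s')$. From here, the non-expansiveness is a one-line consequence of either the triangle inequality (for the sup norm) or Jensen's inequality (for a weighted $L_2$ norm), depending on which norm we take $\|\cdot\|$ to denote in the statement.

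Concretely, I would first handle the sup-norm case as a warm-up: for any $s \in \mathcal{S}$,
\begin{align*}
|Pv(s) - Pv'(s)|
  = \Bigl|\sum_{s'} P(s' \mid s)\bigl(v(s') - v'(s')\bigr)\Bigr|
  \leq \sum_{s'} P(s' \mid s)\, |v(s') - v'(s')|
  \leq \|v - v'\|_\infty,
\end{align*}
where the last step uses $\sum_{s'} P(s' \mid s) = 1$. Taking the supremum over $s$ gives the stated bound.

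For the $L_2$ case (which is the norm used elsewhere in the paper, cf.\ the projection $\Pi$ and the operator $\mathfrak{F}$), I would invoke Jensen's inequality applied to the convex function $x \mapsto x^2$ with the probability measure $P(\cdot \mid s)$:
\begin{align*}
|Pv(s) - Pv'(s)|^2
  \leq \sum_{s'} P(s' \mid s)\, \bigl(v(s') - v'(s')\bigr)^2.
\end{align*}
Integrating against the stationary distribution $\mu$ guaranteed by Assumption A.1 and swapping the order of summation yields
\begin{align*}
\|Pv - Pv'\|_{2,\mu}^2
  \leq \sum_{s'} \bigl(v(s') - v'(s')\bigr)^2 \sum_s \mu(s) P(s' \mid s)
  = \sum_{s'} \mu(s') \bigl(v(s') - v'(s')\bigr)^2
  = \|v - v'\|_{2,\mu}^2,
\end{align*}
where the middle equality is precisely the stationarity property $\mu P = \mu$.

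There is no genuine obstacle here; the statement is essentially a textbook fact about Markov averaging operators. The only delicate point is making the norm explicit and ensuring consistency with the weighted $L_2$ inner product used implicitly elsewhere (for the projection $\Pi$ and in bounds such as Lemma \ref{projection_F_contraction_lemma}). Once that convention is fixed, the argument reduces to the two displays above.
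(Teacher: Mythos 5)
Your $L_2$ argument is essentially the paper's own proof: Jensen's inequality applied to the conditional expectation $\mathbb{E}[J[s_1]\mid s_0]$ followed by the tower property and stationarity (the paper's step $\mathbb{E}[J^2[s_1]]=\|J\|^2$ is exactly your $\mu P=\mu$, made explicit). The sup-norm warm-up is extra but harmless, and your care in pinning down the weighted norm is a useful clarification the paper leaves implicit.
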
 
\begin{proof}
The result is well-known e.g. \citep{tsitsiklis1999optimal}. We give a proof using the Tonelli-Fubini theorem and the iterated law of expectations, we have that:
\begin{align*}
&\|PJ\|^2=\mathbb{E}\left[(PJ)^2[s_0]\right]=\mathbb{E}\left(\left[\mathbb{E}\left[J[s_1]|s_0\right]\right)^2\right]
\leq \mathbb{E}\left[\mathbb{E}\left[J^2[s_1]|s_0\right]\right] 
= \mathbb{E}\left[J^2[s_1]\right]=\|J\|^2,
\end{align*}
where we have used Jensen's inequality to generate the inequality. This completes the proof.
\end{proof}

% \begin{lemma}\label{max_l.val_inequality}
% Define ${\rm val}^+[f]:=\min_{b\in\mathbb{B}}\max_{a\in\mathbb{A}}f(a,b)$ and define\\ ${\rm val}^-[f]:=\max_{a\in\mathbb{A}}\min_{b\in\mathbb{B}}f(a,b)$, then for any $b\in\mathbb{B}$ we have that for any $f,g\in \mathbb{L}$ and for any $c\in\mathbb{R}_{>0}$:
% \begin{align}\nonumber
% \left|\max_{a\in\mathbb{A}}f(a,b)-\max_{a\in\mathbb{A}}g(a,b)\right|&\leq c
% \implies
% \left|{\rm val}^-[f]-{\rm val}^-[g]\right|\leq c.
% \end{align}
% \end{lemma}

The following result proves $T$ is a contraction:
\begin{proposition}\label{lemma:bellman_contraction}
The Bellman operator $T$ is a contraction, in particular, the following bound holds:
\begin{align}\nonumber
&\left\|Tv-Tv'\right\|\leq \gamma\left\|v-v'\right\|,
\end{align}
\end{proposition}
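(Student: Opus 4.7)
The plan is to peel off the nested $\min/\max$ structure in $Tv(s)$ one layer at a time, using the non-expansiveness lemmas (Lemmas \ref{max_lemma}, \ref{max_min_inequality_2}, \ref{max_triple_inequality}), and reduce the contraction of $T$ to a contraction estimate for each of the three ``leaves'' appearing inside:
\begin{itemize}
\item[(a)] $\bm\cM_1 Q_1$, which (by the definition of $\bm\cM_1$) depends on $v$ through an inner sum $\gamma\sum_{s'}P(s';a_\tau,b,s)v(s')$;
\item[(b)] $\cR(s,\bm 0)+\gamma\sum_{s'}\cP(s';\bm 0,s)v(s')$, the ``no-action'' branch;
\item[(c)] $\bm\cM_2 Q_2$, symmetric to (a).
\end{itemize}

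First I would handle the easy middle branch (b): the difference between the two instances is exactly $\gamma\,\cP^{\bm 0}(v-v')$, which is bounded by $\gamma\|v-v'\|$ using the non-expansiveness of $P$ (Lemma \ref{non_expansive_P}). Next I would establish the key estimate for the intervention-operator branches:
\begin{align*}
\bigl|\bm\cM_1 Q_1[v](s) - \bm\cM_1 Q_1[v'](s)\bigr|
&= \Bigl|\max_{a_\tau\in\cA}\bigl[\cR(s,a_\tau,b)-c(s,a_\tau)+\gamma \textstyle\sum_{s'} P(s';a_\tau,b,s)v(s')\bigr] \\
&\qquad-\max_{a_\tau\in\cA}\bigl[\cR(s,a_\tau,b)-c(s,a_\tau)+\gamma \textstyle\sum_{s'} P(s';a_\tau,b,s)v'(s')\bigr]\Bigr| \\
&\le \max_{a_\tau\in\cA}\bigl|\gamma \textstyle\sum_{s'} P(s';a_\tau,b,s)(v(s')-v'(s'))\bigr|
\;\le\; \gamma\|v-v'\|,
\end{align*}
where the first inequality is Lemma \ref{max_lemma} (the cost and reward terms cancel), and the second uses Lemma \ref{non_expansive_P}. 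The symmetric bound applies to $\bm\cM_2 Q_2$ via Lemma \ref{max_l.val_inequality} (since it is a $\min$ over $b_\rho$).

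Having shown that each of (a), (b), (c) is $\gamma$-close in $v$ vs.\ $v'$, I would then glue them together using the non-expansiveness of $\max$ and $\min$ against a common function. Concretely: by Lemma \ref{max_triple_inequality}, the inner $\max$ obeys
\[
\bigl\|\max\{\bm\cM_1 Q_1[v],\;\cR(\cdot,\bm 0)+\gamma\cP^{\bm 0}v\} - \max\{\bm\cM_1 Q_1[v'],\;\cR(\cdot,\bm 0)+\gamma\cP^{\bm 0}v'\}\bigr\|
\le \gamma\|v-v'\|,
\]
(applying the lemma twice, swapping the second argument through a triangle-inequality step if needed, to handle the fact that \emph{both} arguments vary). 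Then one more application of Lemma \ref{max_min_inequality_2} to the outer $\min$ against $\bm\cM_2 Q_2[v]$ vs.\ $\bm\cM_2 Q_2[v']$ yields $\|Tv-Tv'\|\le \gamma\|v-v'\|$, which is the claim with contraction constant $\gamma\in[0,1)$.

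The main obstacle I anticipate is the bookkeeping in the final gluing step: Lemmas \ref{max_triple_inequality} and \ref{max_min_inequality_2} are stated for the case where only one of the two arguments to $\max/\min$ changes, whereas here both arguments change simultaneously (e.g.\ the inner $\max$ has $\bm\cM_1 Q_1[v]$ \emph{and} $\cR+\gamma\cP^{\bm 0}v$ both depending on $v$). I would resolve this by inserting an intermediate mixed expression $\max\{\bm\cM_1 Q_1[v'],\,\cR+\gamma\cP^{\bm 0}v\}$ and applying the triangle inequality, so that each lemma is used with only one argument varying, each contribution being bounded by $\gamma\|v-v'\|$; taking the maximum (not the sum) of the two bounds because $\max$ against a fixed third argument is genuinely non-expansive. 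The same intermediate-function trick handles the outer $\min$.
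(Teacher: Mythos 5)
Your overall decomposition is sound and is in fact a cleaner route than the paper's. The paper reduces $\left|Tv(s)-Tv'(s)\right|$ to four cases, two of which are \emph{cross terms}: (iii) $\left\|\bm\cM_i\bmQ-\bm\cM_j\bmQ'\right\|$ with $i\neq j$ and (iv) $\left\|\bm\cM_i\bmQ-[\cR(\cdot,\bm 0)+\gamma\cP^{\bm 0}v']\right\|$, and bounding these requires extra structure --- the strict positivity and boundedness of the cost $c$ (the paper's Case~2 arguments use $c(s,b)>-c(s,a)$ and $c(s_\tau,y)>\lambda>0$). Your plan compares each branch only with \emph{itself} evaluated at $v'$ (the ``diagonal'' differences), for which the reward and cost terms cancel identically, so no property of $c$ is ever needed. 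That is a genuine simplification, and it is legitimate: the elementary pointwise inequality $\left|\min\{A_1,B_1\}-\min\{A_2,B_2\}\right|\le\max\{\left|A_1-A_2\right|,\left|B_1-B_2\right|\}$ (and its $\max$ analogue, which is exactly the paper's Lemma~\ref{max_lemma} applied to a two-point index set) shows that the diagonal differences alone control the nested $\min/\max$, so the cross terms the paper labours over are not required.

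The one genuine flaw is in your final gluing mechanism. Inserting the intermediate expression $\max\{\bm\cM_1Q_1[v'],\,\cR+\gamma\cP^{\bm 0}v\}$ and applying the triangle inequality yields the \emph{sum} of the two one-argument bounds, i.e.\ $2\gamma\left\|v-v'\right\|$, which is not a contraction once $\gamma\ge 1/2$; you cannot then ``take the maximum instead of the sum,'' because that does not follow from the triangle-inequality decomposition you set up. The repair is to drop the intermediate function entirely and use the two-sided pointwise fact directly: if $\max\{a,b\}\ge\max\{c,d\}$ and the first maximum is attained at $a$, then $0\le a-\max\{c,d\}\le a-c$, and symmetrically in the other cases, giving $\left|\max\{a,b\}-\max\{c,d\}\right|\le\max\{\left|a-c\right|,\left|b-d\right|\}$ with no factor of $2$; the same holds for $\min$. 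With that substitution your argument closes. A final caveat you share with the paper: this pointwise bound transfers to the norm cleanly only for the sup norm; under the $L_2$ norm used elsewhere (e.g.\ in Lemma~\ref{non_expansive_P}) the step from $\left\|\max\{\left|f_1-f_2\right|,\left|g_1-g_2\right|\}\right\|$ to $\max\{\left\|f_1-f_2\right\|,\left\|g_1-g_2\right\|\}$ is not valid in general, so you should either state the result in $\left\|\cdot\right\|_\infty$ or note that the same looseness is already present in the paper's own proof.
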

where $v,v'$ are elements of a finite normed vector space.
% Proposition \ref{lemma:bellman_contraction} establishes the contraction property of the Bellman operator for the problem. 

\section*{Proof of Theorem \ref{theorem:joint-sol}}

\begin{proof}[Proof of Proposition \ref{lemma:bellman_contraction}]
Recall, the Bellman operator $T$ acting on a function $v:\mathcal{S}\to\mathbb{R}$ is given by
\begin{align} 
T v(s):=\min\left[\max\Big\{\bm\cM_1 Q_1, \cR(s,\boldsymbol{0})+\gamma\sum_{s'\in\mathcal{S}}P(s';\boldsymbol{0},s)v(s')\Big\},\bm\cM_2Q_2\right].\label{bellman_proof_start} 
\end{align}
where $Q_1(s,a):=\bm{Q}(s,a,0)$ and $Q_2(s,b):=\bm{Q}(s,0,b)$.

In what follows and for the remainder of the script, we employ the following shorthands:
\begin{align*}
&\mathcal{P}^{\bma}_{s's}=:\sum_{s'\in\mathcal{S}}P(s';\bma,s), \quad\bm{\cP}^{\boldsymbol{\sigma}}_{ss'}=:\sum_{{\bma}\in\bm\cA}{\bm\sigma}(\bma|s)\mathcal{P}^{\bma}_{ss'}
% , \quad \mathcal{R}^{{\pi}}(s_t):=\sum_{{a}_t\in\mathcal{A}}{\pi}({a}_t|s)R(s_t,a_t,b_t)
\end{align*}
% For notational simplicity, where it will not cause confusion we also drop the dependence of the functions $v^{\bm\sigma^1,\bm\sigma^2}, Q^{\pi,\mathfrak{g}}$ on the policy pair $({\pi,\mathfrak{g}})$.
% and with a slight abuse of notation we will write $
% \bm\cM_iv^{\sigma^1,\sigma^2}(s_{\tau_k}):=\underset{a^i\in\cA^i}{\max}\left\{\cR(s_{\tau_k},a,b)-c(s_{\tau_k},a)+\gamma\sum_{s'\in\mathcal{S}}P(s';a,s)v^{\sigma^1,\sigma^2}(s')\right\}$.

To prove that $T$ is a contraction, we first note that by \eqref{lemma_1_max_ineq_min_version}  we deduce that

\begin{align*}
    &\left|T v(s)-T v'(s)\right|:\\&=\Bigg|\min\left[\max\Big\{\bm\cM_1 Q_1, \cR(s,\boldsymbol{0})+\gamma\sum_{s'\in\mathcal{S}}P(s';\boldsymbol{0},s)v(s')\Big\},\bm\cM_2 Q_2\right]
    \\&\qquad\quad-\min\left[\max\Big\{\bm\cM_1 Q_1', \cR(s,\boldsymbol{0})+\gamma\sum_{s'\in\mathcal{S}}P(s';\boldsymbol{0},s)v'(s')\Big\},\bm\cM_2 Q_2'\right]\Bigg|
    \\&\leq \max\Bigg|\max\left\{\max\Big\{\bm\cM_1 Q_1, \cR(s,\boldsymbol{0})+\gamma\sum_{s'\in\mathcal{S}}P(s';\boldsymbol{0},s)v(s')\Big\},\bm\cM_2 Q_2\right\}
    \\&\qquad\quad-\max\left\{\max\Big\{\bm\cM_1 Q_1', \cR(s,\boldsymbol{0})+\gamma\sum_{s'\in\mathcal{S}}P(s';\boldsymbol{0},s)v'(s')\Big\},\bm\cM_2 Q_2'\right\}\Bigg|
\end{align*}

We now consider the four cases produced by \eqref{bellman_proof_start}, that is to say we prove the following statements:

i) $\qquad\qquad
\left| \cR(s_t,a_t,b_t)+\gamma\mathcal{P}^{\boldsymbol{0}}_{s's_t}v(s')-\left( \cR(s_t,a_t,b_t)+\gamma\mathcal{P}^{\boldsymbol{0}}_{s's_t}v'(s')\right)\right|\leq \gamma\left\|v-v'\right\|$

ii) $\qquad\qquad
\left\|\bm\cM_i\bmQ-\bm\cM_i\bmQ'\right\|\leq    \gamma\left\|v-v'\right\|,\qquad \qquad$.

iii) $\qquad\qquad
\left\|\bm\cM_i\bmQ-\bm\cM_j\bmQ'\right\|\leq    \gamma\left\|v-v'\right\|,\qquad \qquad$

iv)  $\qquad\qquad
    \left\|\bm\cM_i\bmQ-\left[ \cR(\cdot,\boldsymbol{0})+\gamma\mathcal{P}^{\boldsymbol{0}}_{s's_t}v'\right]\right\|\leq \gamma\left\|v-v'\right\|.
$

We begin by proving i).

Indeed, for any $\bma\in\bm\cA$ and $\forall s_t\in\mathcal{S}, \forall s'\in\mathcal{S}$ we have that 
\begin{align*}
&\left| \cR(s_t,\boldsymbol{0})+\gamma\mathcal{P}^{\boldsymbol{0}}_{s's_t}v(s')-\left[ \cR(s_t,\boldsymbol{0})+\gamma\mathcal{P}^{\boldsymbol{0}}_{s's_t}v'(s')\right]\right|
\\&\leq \underset{\boldsymbol{a}\in\boldsymbol{\mathcal{A}}}{\max}\;\left|\gamma\mathcal{P}^{\bma}_{s's_t}v(s')-\gamma\mathcal{P}^{\bma}_{s's_t}v'(s')\right|
\\&\leq \gamma\left\|Pv-Pv'\right\|
\\&\leq \gamma\left\|v-v'\right\|,
\end{align*}
again using the fact that $P$ is non-expansive and Lemma \ref{max_lemma}.

We now prove ii).

For any $\tau\in\mathcal{F}$, define by $\tau'=\inf\{t>\tau|s_t\in \cS_I,\tau\in\mathcal{F}_t\}$. We do the case where $i=1$ with the case $i=2$ being analogous. Now using the definition of $\bm\cM_1$ we have that for any $s_\tau\in\mathcal{S}$
\begin{align*}
&\left|(\bm\cM_1 \bmQ-\bm\cM_1 \bmQ')(s_\tau,a_\tau,b)\right|
\\&\leq \underset{a_\tau\in \mathcal{A}}{\max}    \Bigg|\cR(s_\tau,a_\tau,b)-c(s_\tau,a_\tau)+\gamma\cP_{s's_\tau}^{(a_\tau,b)}v(s')-\left(\cR(s_\tau,a_\tau,b)-c(s_\tau,a_\tau)+\gamma\cP_{s's_\tau}^{(a_\tau,b)}v'(s')\right)\Bigg| 
\\&\leq \max\limits_{a\in \mathcal{A},b\in\cB,s'\in\cS} \gamma\left|\cP_{s's_\tau}^{(a,b)}v(s')-\cP_{s's_\tau}^{(a,b)}v'(s')\right| 
\\&\leq \gamma\left\|Pv-Pv'\right\|
\\&\leq \gamma\left\|v-v'\right\|,
\end{align*}
using the fact that $P$ is non-expansive.
% The result can then be deduced easily by applying max on both sides.

Next we prove (iii). We break the proof into two cases:

Case 1:
\begin{align}
\underset{a'\in \cA}{\max} \left(\cR(s_\tau,a',b)-c(s_\tau,a')+\gamma\cP^{(a',b)}_{s's_\tau}v(s')\right)-\underset{b'\in \cB}{\min} \left(\cR(s_\tau,a,b')+c(s_\tau,b')+\gamma\cP^{(a,b')}_{s's_\tau}v'(s')\right)\leq 0
\end{align}
\begin{align*}
&\left|(\bm\cM_1\bmQ-\bm\cM_2\bmQ')(s_{\tau},\bma)\right|
\\&=    \left|\underset{a'\in \cA}{\max} \left(\cR(s_\tau,a',b)-c(s_\tau,a')+\gamma\cP^{(a',b)}_{s's_\tau}v(s')\right)-\underset{b'\in \cB}{\min} \left(\cR(s_\tau,a,b')+c(s_\tau,b')+\gamma\cP^{(a,b')}_{s's_\tau}v'(s')\right)\right|
\\&\hspace{-1cm}\begin{aligned}\leq  \Bigg|\max\left\{\underset{a'\in \cA}{\max} \left(\cR(s_\tau,a',b)-c(s_\tau,a')+\gamma\cP^{(a',b)}_{s's_\tau}v(s')\right),\underset{b'\in \cB}{\min} \left(\cR(s_\tau,a,b')+c(s_\tau,b')+\gamma\cP^{(a,b')}_{s's_\tau}v(s')\right)\right\}&
\\-\underset{b'\in \cB}{\min}
\left(\cR(s_\tau,a,b')+c(s_\tau,b')+\gamma\cP^{(a,b')}_{s's_\tau}v'(s')\right)\Bigg|&
\end{aligned}
\\&\hspace{-1cm}\leq    \Bigg|\max\left\{\underset{a'\in \cA}{\max} \left(\cR(s_\tau,a',b)-c(s_\tau,a')+\gamma\cP^{(a',b)}_{s's_\tau}v(s')\right),\underset{b'\in \cB}{\min} \left(\cR(s_\tau,a,b')+c(s_\tau,b')+\gamma\cP^{(a,b')}_{s's_\tau}v(s')\right)\right\}
\\&\hspace{-1cm}- \max\left\{\underset{a'\in \cA}{\max} \left(\cR(s_\tau,a',b)-c(s_\tau,a')+\gamma\cP^{(a',b)}_{s's_\tau}v(s')\right),\underset{b'\in \cB}{\min} \left(\cR(s_\tau,a,b')+c(s_\tau,b')+\gamma\cP^{(a,b')}_{s's_\tau}v'(s')\right)\right\}
\\&\hspace{-1cm}\begin{aligned}+
\max\left\{\underset{a'\in \cA}{\max} \left(\cR(s_\tau,a',b)-c(s_\tau,a')+\gamma\cP^{(a',b)}_{s's_\tau}v(s')\right),\underset{b'\in \cB}{\min} \left(\cR(s_\tau,a,b')+c(s_\tau,b')+\gamma\cP^{(a,b')}_{s's_\tau}v'(s')\right)\right\}&
\\-\underset{b'\in \cB}{\min} \left(\cR(s_\tau,a,b')+c(s_\tau,b')+\gamma\cP^{(a,b')}_{s's_\tau}v'(s')\right)\Bigg|&
\end{aligned}
\\&\hspace{-1cm}\leq\left|    \underset{b'\in \cB}{\min} \left(\cR(s_\tau,a,b')+c(s_\tau,b')+\gamma\cP^{(a,b')}_{s's_\tau}v(s')\right)-\underset{b'\in \cB}{\min} \left(\cR(s_\tau,a,b')+c(s_\tau,b')+\gamma\cP^{(a,b')}_{s's_\tau}v'(s')\right)\right|
\\&\hspace{-1.5cm}+\Bigg|
\max\left\{\underset{a'\in \cA}{\max} \left(\cR(s_\tau,a',b)-c(s_\tau,a')+\gamma\cP^{(a',b)}_{s's_\tau}v(s')\right)-\underset{b'\in \cB}{\min} \left(\cR(s_\tau,a,b')+c(s_\tau,b')+\gamma\cP^{(a,b')}_{s's_\tau}v'(s')\right),0\right\}\Bigg|
\\&   
\leq\gamma\max\limits_{a'\in \cA}\max\limits_{b'\in \cB}\left|\cP^{(a',b')}_{s's_\tau}v(s')-\cP^{(a',b')}_{s's_\tau}v'(s')\right|
\\&\leq \gamma\left\|v-v'\right\|.
\end{align*}
where we have again used the fact that for any scalars $a,b,c$ we have that $
    \left|\max\{a,b\}-\max\{b,c\}\right|\leq \left|a-c\right|$ using the non-expansiveness of $\cP$.
    
Case 2:
\begin{align}\nonumber
&\max\limits_{a'\in \cA} \left(\cR(s_\tau,a',b)-c(s_\tau,a')+\gamma\cP^{(a',b)}_{s's_\tau}v(s')\right)
\\&\qquad\qquad\qquad\quad-\underset{b'\in \cB}{\min} \left(\cR(s_\tau,a,b')+c(s_\tau,b')+\gamma\cP^{(a,b')}_{s's_\tau}v'(s')\right)> 0, \quad \forall \bma=(a,b)\in\cA\times\cB.
\end{align}
Recall that the cost function is bounded in the following way $0<c(s,y)<\|c\|_\infty$ for any $y\in \cA\cup\cB$. Using this we have that $c(s,b)>-c(s,a)$ for any $a\in\cA, b\in\cB$ hence,
\begin{align*}
&(\bm\cM_1 \bmQ-\bm\cM_2 \bmQ')(s_{\tau},\boldsymbol{a})
\\&=    \underset{a'\in \cA}{\max} \left(\cR(s_\tau,a',b)-c(s_\tau,a')+\gamma\cP^{(a',b)}_{s's_\tau}v(s')\right)-\underset{b'\in \cB}{\min} \left(\cR(s_\tau,a,b')+c(s,b')+\gamma\cP^{(a,b')}_{s's_\tau}v'(s')\right)
\\&\leq    \underset{a'\in \cA}{\max} \left(\cR(s_\tau,a',b)+c(s_\tau,b)+\gamma\cP^{a',b}_{s's_\tau}v(s')\right)-\underset{b'\in \cB}{\min} \left(\cR(s_\tau,a,b')+c(s_\tau,b')+\gamma\cP^{(a,b')}_{s's_\tau}v'(s')\right) 
\\&\begin{aligned}\leq \Bigg|\underset{a'\in \cA}{\max}\underset{b'\in \cB}{\max} &\left(\cR(s_\tau,a',b)+c(s,b')+\gamma\cP^{(a',b')}_{s's_\tau}v(s)\right)
\\&\qquad\qquad\qquad\qquad\qquad\qquad\qquad\quad-\underset{a'\in \cA}{\min}\underset{b'\in \cB}{\min} \left(\cR(s_\tau,a',b)+c(s,b')+\gamma\cP^{(a',b')}_{s's_\tau}v'(s')\right) \Bigg|
\end{aligned}
\\&\begin{aligned}
\leq \Bigg|\underset{a'\in \cA}{\max}\underset{b'\in \cB}{\max}& \left(\cR(s_\tau,a',b')+c(s,b')+\gamma\cP^{(a',b')}_{s's_\tau}v(s')\right)
\\&\qquad\qquad\qquad\qquad\qquad\quad\qquad\quad+\underset{a'\in \cA}{\max}\underset{b'\in \cB}{\max} \left(-\cR(s_\tau,a',b')-c(s,b')-\gamma\cP_{s's_\tau}^{(a',b')}v'(s')\right) \Bigg|
\end{aligned}
\\&\leq \left|\underset{a'\in \cA}{\max}\underset{b'\in \cB}{\max} \left(\gamma\cP^{(a',b')}_{s's_\tau}v(s')\right)-\gamma\cP_{s's_\tau}^{(a',b')}v'(s') \right|
\\&\leq \gamma\underset{a'\in \cA}{\max}\underset{b'\in \cB}{\max}\left| \cP_{s's_\tau}^{(a',b')}(v-v')(s')\right| 
\\&\leq \gamma\left\|v-v'\right\| 
\end{align*}
where we have used the Cauchy-Schwarz inequality in the penultimate step.

For the reverse inequality, we have
\begin{align*}
&(\bm\cM_2 \bmQ-\bm\cM_1 \bmQ_1)(s_{\tau},\boldsymbol{a})
\\&\hspace{-1cm}=    \underset{b'\in \cB}{\min} \left(\cR(s_\tau,a,b')+c(s,b')+\gamma\cP^{(a,b')}_{s's_\tau}v'(s')\right)-\underset{a'\in \cA}{\max} \left(\cR(s_\tau,a',b)-c(s_\tau,a')+\gamma\cP^{(a',b)}_{s's_\tau}v(s')\right)
\\&\hspace{-1cm}\begin{aligned}\leq  \max\left\{-\underset{a'\in \cA}{\max} \left(\cR(s_\tau,a',b)-c(s_\tau,a')+\gamma\cP^{(a',b)}_{s's_\tau}v(s')\right),-\underset{b'\in \cB}{\min} \left(\cR(s_\tau,a,b')+c(s_\tau,b')+\gamma\cP^{(a,b')}_{s's_\tau}v(s')\right)\right\}&
\\+\underset{b'\in \cB}{\min} \left(\cR(s_\tau,a,b')+c(s,b')+\gamma\cP^{(a,b')}_{s's_\tau}v'(s')\right)&
\end{aligned}
\\&\hspace{-1cm}\begin{aligned}\leq  \max\left\{-\underset{a'\in \cA}{\max} \left(\cR(s_\tau,a',b)-c(s_\tau,a')+\gamma\cP^{(a',b)}_{s's_\tau}v(s')\right),-\underset{b'\in \cB}{\min} \left(\cR(s_\tau,a,b')+c(s_\tau,b')+\gamma\cP^{(a,b')}_{s's_\tau}v(s')\right)\right\}&
\\-\max\left\{-\underset{a'\in \cA}{\max} \left(\cR(s_\tau,a',b)-c(s_\tau,a')+\gamma\cP^{(a',b)}_{s's_\tau}v(s')\right),-\underset{b'\in \cB}{\min} \left(\cR(s_\tau,a,b')+c(s_\tau,b')+\gamma\cP^{(a,b')}_{s's_\tau}v'(s')\right)\right\}&
\\+\max\left\{-\underset{a'\in \cA}{\max} \left(\cR(s_\tau,a',b)-c(s_\tau,a')+\gamma\cP^{(a',b)}_{s's_\tau}v(s')\right),-\underset{b'\in \cB}{\min} \left(\cR(s_\tau,a,b')+c(s_\tau,b')+\gamma\cP^{(a,b')}_{s's_\tau}v'(s')\right)\right\}&
\\+\underset{b'\in \cB}{\min} \left(\cR(s_\tau,a,b')+c(s,b')+\gamma\cP^{(a,b')}_{s's_\tau}v'(s')\right)&
\end{aligned}
\\&\hspace{-1cm}\begin{aligned}\leq  \Bigg|\max\left\{-\underset{a'\in \cA}{\max} \left(\cR(s_\tau,a',b)-c(s_\tau,a')+\gamma\cP^{(a',b)}_{s's_\tau}v(s')\right),-\underset{b'\in \cB}{\min} \left(\cR(s_\tau,a,b')+c(s_\tau,b')+\gamma\cP^{(a,b')}_{s's_\tau}v(s')\right)\right\}&
\\-\max\left\{-\underset{a'\in \cA}{\max} \left(\cR(s_\tau,a',b)-c(s_\tau,a')+\gamma\cP^{(a',b)}_{s's_\tau}v(s')\right),-\underset{b'\in \cB}{\min} \left(\cR(s_\tau,a,b')+c(s_\tau,b')+\gamma\cP^{(a,b')}_{s's_\tau}v'(s')\right)\right\}\Bigg|&
\\+\Bigg|\max\left\{-\underset{a'\in \cA}{\max} \left(\cR(s_\tau,a',b)-c(s_\tau,a')+\gamma\cP^{(a',b)}_{s's_\tau}v(s')\right),-\underset{b'\in \cB}{\min} \left(\cR(s_\tau,a,b')+c(s_\tau,b')+\gamma\cP^{(a,b')}_{s's_\tau}v'(s')\right)\right\}&
\\+\underset{b'\in \cB}{\min} \left(\cR(s_\tau,a,b')+c(s,b')+\gamma\cP^{(a,b')}_{s's_\tau}v'(s')\right)\Bigg|&
\end{aligned}
\\&\hspace{-2cm}\begin{aligned}\leq  \left|\underset{b'\in \cB}{\max} \left[\cR(s_\tau,a,b')+c(s_\tau,b')+\gamma\cP^{(a,b')}_{s's_\tau}v(s')-
\left(\cR(s_\tau,a,b')+c(s_\tau,b')+\gamma\cP^{(a,b')}_{s's_\tau}v'(s')\right)\right]\right|&
\\+\Bigg|\max\left\{\underset{b'\in \cB}{\min} \left(\cR(s_\tau,a,b')+c(s_\tau,b')+\gamma\cP^{(a,b')}_{s's_\tau}v'(s')\right)-\underset{a'\in \cA}{\max} \left(\cR(s_\tau,a',b)-c(s_\tau,a')+\gamma\cP^{(a',b)}_{s's_\tau}v(s')\right),0\right\}&
\end{aligned}
\\&\hspace{-1cm}\leq  \gamma\max\limits_{a'\in \cA}\max\limits_{b'\in \cB}\left|\left[\cP_{s's_\tau}^{(a',b')}\left(v(s')-v'(s')\right)\right]\right|
% \\&\hspace{-1cm}\leq  \gamma\max\limits_{a'\in \cA}\max\limits_{b'\in \cB}\|\cP\| \left|\left[v(s')-v'(s')\right]\right|
\\&\hspace{-1cm}\leq  \gamma\left\|v-v'\right\|
\end{align*}
where we have again used the fact that for any scalars $a,b,c$ we have that $
    \left|\max\{a,b\}-\max\{b,c\}\right|\leq \left|a-c\right|$ using the non-expansiveness of $\cP$.

We now prove iv). We again split the proof of the statement into two cases:

\textbf{Case 1:} 
\begin{align}\bm\cM_i\bm{Q}(s_{\tau},\boldsymbol{a})-\left(\cR(s_\tau,\bm0)+\gamma\mathcal{P}^{\boldsymbol{0}}_{s's_\tau}v'(s')\right)<0.
\end{align}

We now observe the following:
\begin{align*}
&\bm\cM_i\bm{Q}(s_{\tau},\boldsymbol{a})-\left(\cR(s_\tau,\boldsymbol{0})+\gamma\mathcal{P}^{\boldsymbol{0}}_{s's_\tau}v'(s')\right)
\\&\leq\max\left\{\cR(s_\tau,\boldsymbol{0})+\gamma\cP_{s's_\tau}^{\bm0}v(s'),\bm\cM_i\bm{Q}(s_{\tau},\boldsymbol{a})\right\}
-\left(\cR(s_\tau,\boldsymbol{0})+\gamma\mathcal{P}^{\boldsymbol{0}}_{s's_\tau}v'(s')\right)
\\&\leq \Bigg|\max\left\{\cR(s_\tau,\boldsymbol{0})+\gamma\cP_{s's_\tau}^{\bm0}v(s'),\bm\cM_i\bm{Q}(s_{\tau},\boldsymbol{a})\right\}
-\max\left\{\cR(s_\tau,\boldsymbol{0})+\gamma\mathcal{P}^{\boldsymbol{0}}_{s's_\tau}v'(s'),\bm\cM_i\bm{Q}(s_{\tau},\boldsymbol{a})\right\}
\\&\qquad+\max\left\{\cR(s_\tau,\boldsymbol{0})+\gamma\mathcal{P}^{\boldsymbol{0}}_{s's_\tau}v'(s'),\bm\cM_i\bm{Q}(s_{\tau},\boldsymbol{a})\right\}
-\left(\cR(s_\tau,\boldsymbol{0})+\gamma\mathcal{P}^{\boldsymbol{0}}_{s's_\tau}v'(s')\right)\Bigg|
\\&\leq \Bigg|\max\left\{\cR(s_\tau,\boldsymbol{0})+\gamma\mathcal{P}^{\boldsymbol{0}}_{s's_\tau}v(s'),\bm\cM_i\bm{Q}(s_{\tau},\boldsymbol{a})\right\}
-\max\left\{\cR(s_\tau,\boldsymbol{0})+\gamma\mathcal{P}^{\boldsymbol{0}}_{s's_\tau}v'(s'),\bm\cM_i\bm{Q}(s_{\tau},\boldsymbol{a})\right\}\Bigg|
\\&\qquad+\Bigg|\max\left\{\cR(s_\tau,\boldsymbol{0})+\gamma\mathcal{P}^{\boldsymbol{0}}_{s's_\tau}v'(s'),\bm\cM_i\bm{Q}(s_{\tau},\boldsymbol{a})\right\}-\left(\cR(s_\tau,\boldsymbol{0})+\gamma\mathcal{P}^{\boldsymbol{0}}_{s's_\tau}v'(s')\right)\Bigg|
\\&\leq \gamma\max\limits_{\bma\in\bm\cA}\left|\cP_{s's_\tau}^{\bma}v(s')-\cP_{s's_\tau}^{\bma}v'(s')\right|+\left|\max\left\{0,\bm\cM_i\bm{Q}(s_{\tau},\boldsymbol{a})-\left(\cR(s_\tau,\boldsymbol{0})+\gamma\mathcal{P}^{\boldsymbol{0}}_{s's_\tau}v'(s')\right)\right\}\right|
\\&\leq \max\limits_{\bma\in\bm\cA}\left\|\cP_{s's_\tau}^{\bma}\right\|\left\|v-v'\right\|
\\&\leq \gamma\|v-v'\|,
\end{align*}
where we have again used the fact that for any scalars $a,b,c$ we have that $
    \left|\max\{a,b\}-\max\{b,c\}\right|\leq \left|a-c\right|$ and the non-expansiveness of the $\cP$ operators.

\textbf{Case 2: }
\begin{align*}\bm\cM_i\bm{Q}(s_{\tau},\boldsymbol{a})-\left(\cR(s_\tau,\boldsymbol{0})+\gamma\mathcal{P}^{\boldsymbol{0}}_{s's_\tau}v'(s')\right)\geq 0.
\end{align*}

For this case, first recall that for any $\tau\in\mathcal{F}$ and for any $y\in \cA\cup\cB$ we have $c(s_\tau,y)>\lambda$ for some $\lambda >0$.
\begin{align*}
&\bm\cM_i\bm{Q}(s_{\tau},\boldsymbol{a})-\left(\cR(s_\tau,\boldsymbol{0})+\gamma\mathcal{P}^{\boldsymbol{0}}_{s's_\tau}v'(s')\right)
\\&\leq \bm\cM_i\bm{Q}(s_{\tau},\boldsymbol{a})-\max\limits_{a'\in\cA}\left(\cR(s_\tau,\boldsymbol{0})+\gamma\mathcal{P}^{\boldsymbol{0}}_{s's_\tau}v'(s')-c(s_\tau,a')\right)
\\&\leq \max\limits_{\bma'\in\bm\cA}\left[\cR(s_\tau,\boldsymbol{a})-c(s_\tau,a')+\gamma\cP^{\bma'}_{s's_\tau}v(s')\right]
\\&\qquad\qquad\qquad\qquad\quad-\max\limits_{a'\in\cA}\left(\cR(s_\tau,\boldsymbol{0})-c(s_\tau,a')+\gamma\mathcal{P}^{\boldsymbol{0}}_{s's_\tau}v'(s')\right)
% \\&\leq \max\limits_{\bma'\in\bm\cA}\left[\cR(s_\tau,\boldsymbol{a})-c(s_\tau,a')+\gamma\cP^{\bma}_{s's_\tau}v(s')-\left(\cR(s_\tau,\boldsymbol{a})-c(s_\tau,a')+\gamma\mathcal{P}^{\boldsymbol{a}}_{s's_\tau}v'(s')\right)\right]
\\&\leq \gamma\max\limits_{\bma\in\bm\cA}\left\|\cP_{s's_\tau}^{\bma}\left(v(s')-v'(s')\right)\right|
\\&\leq \gamma\left|v(s')-v'(s')\right|
\\&\leq \gamma\left\|v-v'\right\|,
\end{align*}
again using the fact that $P$ is non-expansive. Hence we have succeeded in showing that for any $v\in L_2$ we have that
\begin{align}
    \left\|\bm\cM\bmQ- \left(\cR(\cdot,\bm0)+\gamma\mathcal{P}^{\bm0}v'\right)\right\|\leq \gamma\left\|v-v'\right\|.\label{off_M_bound_gen}
\end{align}
Gathering the results of the three cases gives the desired result. 
\end{proof}

\begin{proposition}\label{prop:uniqueness}
The value of the game $\cG$ is unique.    
\end{proposition}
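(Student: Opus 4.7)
The plan is to leverage the contraction property of the Bellman operator $T$ established in Proposition \ref{lemma:bellman_contraction} together with the fixed-point characterisation of the value provided by Theorem \ref{theorem:joint-sol}. The overall strategy is a standard Banach fixed-point argument: any value of the game must be a fixed point of $T$, and a contraction mapping on a complete normed space admits at most one fixed point.

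First I would suppose for contradiction that there exist two functions $\hat{v},\hat{v}'\colon\cS\to\mathbb{R}$ which both qualify as values of the game $\cG$. By Theorem \ref{theorem:joint-sol}, each value is realised as the limit of the iterates $T^k v$ starting from any $v$ in the relevant function space, and by unwinding the Bellman operator definition \eqref{bellman_op} one sees directly that any such limit satisfies $T\hat{v}=\hat{v}$ and $T\hat{v}'=\hat{v}'$, i.e., both are fixed points of $T$.

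Next I would apply the contraction bound from Proposition \ref{lemma:bellman_contraction} to these two fixed points:
\begin{align*}
\left\|\hat{v}-\hat{v}'\right\|=\left\|T\hat{v}-T\hat{v}'\right\|\leq \gamma\left\|\hat{v}-\hat{v}'\right\|.
\end{align*}
Since $\gamma\in[0,1)$, the inequality $(1-\gamma)\left\|\hat{v}-\hat{v}'\right\|\leq 0$ forces $\left\|\hat{v}-\hat{v}'\right\|=0$, hence $\hat{v}\equiv\hat{v}'$. This contradicts the assumption that the two values are distinct and thereby establishes uniqueness.

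The only subtlety — and the one point I would be careful to verify — is that the two candidate values truly live in the normed vector space $\left(\mathcal{V},\|\cdot\|\right)$ on which $T$ was shown to be a contraction; this follows from Assumptions A.2 and A.5, which ensure that $\cR$ is in $L_2$ and bounded in the relevant growth sense, together with Assumption A.3 which controls the moments of $s_t$, so that the iterates $T^k v$ and their limits remain within the same function class. Once this membership is noted, the contraction argument above closes the proof with essentially no further calculation, so I do not expect a genuine obstacle beyond the bookkeeping of the function space.
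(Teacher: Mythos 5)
Your argument is correct and matches the paper's own proof essentially verbatim: both identify the two candidate values as fixed points of the Bellman operator $T$ and invoke the contraction bound from Proposition \ref{lemma:bellman_contraction} to force $\|\hat{v}-\hat{v}'\|\leq\gamma\|\hat{v}-\hat{v}'\|$, hence equality. Your additional remark on verifying membership in the normed space $(\mathcal{V},\|\cdot\|)$ is a reasonable piece of bookkeeping that the paper leaves implicit.
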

\begin{proof}
By Proposition \ref{lemma:bellman_contraction}, we have that
\begin{align}
 \|T^{k+1}v_S-T^{k}v_S\|\leq\gamma\|T^kv_S-T^{k-1}v_S\|\leq \cdots\leq \gamma^k\|Tv_S-v_S\|.   
\end{align}
The result follows after considering the limit as $k\to \infty$ and using the boundedness of $\|Tv_S-v_S\|$, we deduce that $T^kv_S,T^{k+1}v_S,\ldots,$ is a Cauchy sequence which concludes the proof.
\end{proof}

\begin{proof}[Proof of Prop \ref{prop:uniqueness}]
Suppose there exist two values of the game $\cG$, $v'_S$ and $v_S$. Then, since each is a solution, we have by Proposition \ref{lemma:bellman_contraction} that each is a fixed point of the Bellman operator and hence $Tv_S=v_S$ and $Tv'_S=v'_S$. Hence, we have the following:
\begin{align}
  \|v_S-v'_S\|=\|Tv_S-Tv'_S|\leq \gamma\|v_S-v'_S\|,  
\end{align}
whereafter, we immediately deduce that $v_S=v'_S$.
\end{proof}
Summing up the above results we have succeeded in proving Theorem \ref{thrm:minimax-exist}.

\section*{Proof of Proposition \ref{prop:switching_times}}
\begin{proof}
We do the proof for Player 1 with the proof for Player 2 being analogous. First let us recall that the \textit{intervention time} $\tau_k$ is defined recursively $\tau_k=\inf\{t>\tau_{k-1}|s_t\in A,\tau_k\in\mathcal{F}_t\}$ where $A=\bm\cM Q_1(s_t,a)=Q_1(s_t,a)\}$.
The proof is given by establishing a contradiction.  Therefore define by $\tau_k=\inf\{\cF\ni \tau>\tau_{k-1};\bm\cM_1Q_1(s_\tau,a)= Q_1(s_\tau,a)\}$ and suppose that the intervention time $\tau'_1>\tau_1$ is an optimal intervention time. Construct the strategy $\sigma'\in\Sigma^1$ and $\tilde{\sigma}\in\Sigma^1$ intervention times by $(\tau'_0,\tau'_1,\ldots,)$ and $(\tau'_0,\tau_1,\ldots)$ respectively.  Define by $m=\sup\{t;t<\tau'_1\}$.
By construction, we have that
\begin{align*}
& \quad v^{\bm\sigma'}(s)
\\&=\mathbb{E}\left[\cR(s_{0},\bma_{0})+\mathbb{E}\left[\ldots+\gamma^{l-1}\mathbb{E}\left[\cR(s_{\tau_1-1},\bma_{\tau_1-1})+\ldots+\gamma^{m-l}\mathbb{E}\left[ \bm\cM_1Q^{\bm\sigma'}(s_{\tau'_1},a,0)\right]\right]\right]\right]
\\&=\mathbb{E}\left[\cR(s_{0},\bma_{0})+\mathbb{E}\left[\ldots+\gamma^{l}\mathbb{E}\left[\bmQ^{\bm\sigma'}(s_{\tau_1},a,0)\right]\right]\right]
\\&=\mathbb{E}\left[\cR(s_{0},\bma_{0})+\mathbb{E}\left[\ldots+\gamma^{l}\mathbb{E}\left[\bm\cM_1\bmQ^{\bm\sigma'}(s_{\tau_1},a,0)\right]\right]\right]
\end{align*}
We now use the following observation 
\begin{align}
&\mathbb{E}\left[\bm\cM_1\bmQ^{\bm\sigma'}(s_{\tau_1},a,0)\right]
\\&\leq \max\left\{\bm\cM_1\bmQ^{\bm\sigma'}(s_{\tau_1},a,0),\underset{a_{\tau_1}\in\mathcal{A}}{\max}\;\left[ \cR(s_{\tau_{1}},a_{\tau_{1}},0)+\gamma\sum_{s'\in\mathcal{S}}P(s';a_{\tau_1},0,s_{\tau_1})v^{\bm\sigma}(s')\right]\right\}.
\end{align}

Using this we deduce that
\begin{align*}
&v^{\bm\sigma'}(s)\leq\mathbb{E}\Bigg[\cR(s_{0},\bma_{0})+\mathbb{E}\Bigg[\ldots
\\&+\gamma^{l-1}\mathbb{E}\left[ \cR(s_{\tau_1-1},\bma_{\tau_1-1})+\gamma\max\left\{\bm\cM_1\bmQ^{\bm\sigma'}(s_{\tau_1},a,0),\underset{a_{\tau_1}\in\mathcal{A}}{\max}\;\left[ \cR(s_{\tau_1},a_{\tau_{1}},0)+\gamma\sum_{s'\in\mathcal{S}}P(s';a_{\tau_1},s_{\tau_1})v^{\bm\sigma}(s')\right]\right\}\right]\Bigg]\Bigg]
\\&=\mathbb{E}\left[\cR(s_{0},\bma_{0})+\mathbb{E}\left[\ldots+\gamma^{l-1}\mathbb{E}\left[ \cR(s_{\tau_1-1},\bma_{\tau_1-1})+\gamma\left[T v^{\boldsymbol{\tilde{\sigma}}}\right](s_{\tau_1})\right]\right]\right]=v^{\boldsymbol{\tilde{\sigma}}}(s),
\end{align*}
where the first inequality is true by assumption on $\bm\cM$. This is a contradiction since $\tilde{\sigma}$ is a suboptimal policy for Player $1$. 
% Using analogous reasoning, we deduce the same result for $\tau'_k<\tau_k$ after which deduce the result. 
Moreover, by invoking the same reasoning, we can conclude that it must be the case that $(\tau_0,\tau_1,\ldots,\tau_{k-1},\tau_k,\tau_{k+1},\ldots,)$ are the optimal intervention times. 
\end{proof}

\section{Proof of Theorem \ref{thm:maxmin_budget}}
\begin{proof}
The proof of the Theorem is straightforward since by Theorem \ref{theorem:joint-sol}, each player's budgeted problem can be solved using a dynamic programming principle extension of the Bellman equation corresponding to \eqref{bellman_op}. The result follows by application of Theorem 2 in \citep{sootla2022saute} with minor modifications.

\end{proof}

\end{document}